\documentclass[a4paper,11pt]{amsart} 

\usepackage{bbold}                                                           
\usepackage{diagbox}
\usepackage{multirow}
\usepackage{amssymb}
\usepackage{amsmath}
\usepackage{amsthm}
\usepackage{amsfonts}
\usepackage{mathtools}

\usepackage{tikz-cd}
\usepackage{color}
\usepackage{hyperref}
\usepackage{url}
\usepackage{multicol}
\usepackage{ulem}

\usepackage[utf8]{inputenc}
\usepackage[T1]{fontenc}

\newtheorem{thm}{Theorem}[section]
\newtheorem{rema}[thm]{Remark}
\newtheorem{cor}[thm]{Corollary}
\newtheorem{lem}[thm]{Lemma}
\newtheorem{prop}[thm]{Proposition}
\newtheorem{prope}[thm]{Properties}

\newtheorem{defn}[thm]{Definition}
\newtheorem{example}[thm]{Example}

\theoremstyle{definition}
\usepackage{enumerate}
\usepackage{frcursive}
\numberwithin{equation}{section}
\usepackage{mathrsfs}
\allowdisplaybreaks[1]
\usepackage{stmaryrd}

\newcommand{\R}{\mathbb R}

\renewcommand{\a}{\alpha}

\renewcommand{\b}{\beta}
\newcommand{\s}{\sigma}
\renewcommand{\o}{\omega}

\newcommand{\G}{\Gamma}

\newcommand{\dis}{\displaystyle}

\newcommand{\End}{\mathsf{End}}

\begin{document}

\title{Actions of Lie 2-algebras and comomentum  maps}
\author{Philippe Bonneau
\email{philippe.bonneau@univ-lorraine.fr}
,
V\'eronique Chloup
\email{veronique.chloup@univ-lorraine.fr}
,
Angela Gammella-Mathieu
\email{angela.gammella-mathieu@univ-lorraine.fr}
and 
\newline Tilmann Wurzbacher
\email{tilmann.wurzbacher@univ-lorraine.fr}
}
\address{Institut \'Elie Cartan de Lorraine, Universit\'e de Lorraine et CNRS, 57070 Metz, France}


\date{\today}

\begin{abstract}
In this paper we introduce the notion of a 2-action of a Lie 2-algebra on an arbitrary manifold $M$. Furthermore, in \cite{Rog}, given a $n$-plectic manifold $(M,\omega)$,
the authors consider a $L_{\infty}$-algebra
$\mathsf{L^{\infty}(M,\omega)}$, which is a higher analogue of the Poisson algebra of observables associated to a symplectic manifold. This $L_{\infty}$-algebra reduces to a Lie 2-algebra $\mathsf{L^2(M,\omega)}$ when $(M,\o)$ is 2-plectic.
Following ideas of N.L. Delgado \cite{Del}, we introduce the Lie 2-algebra $\mathsf{D^2(M,\o)}$, which generalises the Lie 2-algebra $\mathsf{L}^2(M,\o)$ and its extension $\mathsf{\widetilde{D^2(M,\omega)}}$ containing Hamiltonian pairs.
Given a 2-plectic manifold $(M,\omega)$ and a Lie 2-algebra $\mathfrak{g}_{-1}\oplus\mathfrak{g}_{0}$ acting on $M$ we define a comomentum  map as a lift of the action, i.e., as a Lie 2-algebra morphism from $\mathfrak{g}_{-1}\oplus\mathfrak{g}_{0}$ to the Lie 2-algebra $\mathsf{\widetilde{D^2(M,\omega)}}$.
In an appendix, we discuss very explicitly 
numerous examples, classified according to their algebraic properties.
\end{abstract}

\keywords{Lie $2$-algebras, higher actions, multisymplectic manifolds, comomentum maps}
\subjclass{Primary 53D05, 53D20, 17B66 ;
Secondary 58D19, 37J06}

\maketitle

\tableofcontents

\section{Introduction }
Lie group actions with a moment map play a crucial role in symplectic geometry. They are fundamental tools in Hamiltonian dynamics, as symmetries in the Noether sense yield conserved quantities or in more modern language give rise to the reduction of symplectic manifolds; they also define collective Hamiltonians allowing to factorise a Hamiltonian dynamical system. Furthermore, they are important in geometric or deformation quantization for the transition from classical systems with symmetries to quantum systems with symmetries, compare the famous ``quantization commutes with reduction"
paradigma.\\\\
The essential information of these symmetries is encoded in the comomentum map, a Lie algebra homomorphism from a typically finite dimensional Lie algebra ${\mathfrak{g}}_0$
to the Poisson Lie algebra of classical observables, i.e., the smooth functions on a symplectic manifold. Going from particle systems to field theories can often be achieved via replacing symplectic manifolds by multisymplectic manifolds. The latter class of manifolds coming with a non-degenerate, closed $k+1$-form, allows for ``Hamiltonian equations'' whose solutions are maps from $k$-dimensional sources with $k>1$ to this manifold, i.e., classical fields. In this context several analogs of the above mentioned observable algebras were proposed but all mathematically rigorous constructions have in common that these algebras are not Lie algebras but Lie $k$-algebras in the $k$-plectic case (cf. \cite{Rog, Zam} and for an overview \cite{RyWu1}). Of course, Lie groups as symmetry groups of differential forms are very natural and were early on considered in this context (compare \cite{CFRZ}, where the analogs of the comomentum maps were called homotopy moment maps). In a very special case \cite{Mam} and \cite{MaZa} gave a first definition for Lie 2-algebras but their ensueing actions on the manifold all factor through Lie algebra actions.
\\\\
Since Lie algebras are only a special case of Lie $k$-algebras, it is clear that the ``symmetries" of a $k$-plectic manifold $(M, \omega)$ should be rather formulated in terms of Lie $k$-algebras. Note that a Lie algebra action on a (symplectic) manifold is given by vector fields, i.e., derivations of the global observables algebra. By analogy, one is naturally led to think of vector fields plus bivector fields acting on functions and one-forms. In the 2-plectic case, the first beyond symplectic manifolds, this is the most na\"\i ve version of an algebra of observables (see the Appendix B below for some considerations along these lines).\\\\
In his substantial MSc thesis (available in preprint form in the arXiv, cf. \cite{Del}) N. Delgado defines the action of a Lie $k$-algebra on a $k$-plectic manifold. Unfortunately, Delgado gives only definitions and structural considerations
but no examples and no explicit formulae.\\\\
In this paper, we restrict ourselves to the 2-plectic case but give full algebraic and differential-geometric details of ``2-actions'', comomentum maps for Lie 2-algebra actions and a wealth of examples.
Our main contribution is the study of these actions and of comomentum maps (with values in Delgado's Lie 2-algebra of observables instead of the observables of J. Baez and C. Rogers as hitherto in the literature) for such actions on 2-plectic manifolds $(M,\omega)$.\\\\
Let us describe the content of this paper in more detail. 
In Section 2, we review the basics of Lie 2-algebras, their morphisms, and the composition of their morphisms, as well as the notions of skeletal and strict Lie 2-algebras. We recall the fact that each Lie 2-algebra is quasi-isomorphic to a skeletal Lie 2-algebra and we reprove the one-to-one correspondence between strict Lie 2-algebras and Lie algebra crossed modules in a direct way.
In the third section, we introduce a Chevalley-Eilenberg complex naturally associated to certain Lie 2-algebras (in particular to skeletal Lie 2-algebras). This enables us to associate to each Lie 2-algebra an explicit 3-cocycle  and to recover as a byproduct the classification of \cite{BaCra} and \cite{Wa1,Wa2}.
Section 4 is a reminder of the Cartan calculus for multivector fields that will be used in the later sections of the paper.\\\\
The Sections 5-7 contain the main novelties of this article.
Following N.L. Delgado, we equip in Section 5 the truncation 
$\mathfrak{X}^{\bullet \leq 2}(M)=
\mathfrak{X}^2(M)\oplus\mathfrak{X}^1(M)$ of the multivector fields on a manifold $M$ with a Lie 2-algebra structure and we introduce the notion of an action of a Lie 2-algebra (a 2-action), as a Lie 2-morphism  generalising the (infinitesimal) action of a Lie algebra $\mathfrak{g}_0$ on $M$. We observe here notably in Proposition 5.9 that quasi-isomorphisms are not compatible with 2-actions ! Strict 2-actions are also introduced and studied.\\
In Section 6, following again N.L. Delgado, we prove that the Lie 2-algebra $\mathsf{L^2(M,\omega)}$
of observables (\cite{Rog}) can be enlarged to a Lie 2-algebra $\mathsf{\widetilde{D^2(M,\omega)}}$ in case $M$ is a 2-plectic manifold. We underline here the different Lie 2-algebra morphisms, as the injection $I$ from $\mathsf{L^2(M,\omega)}$ to $\mathsf{\widetilde{D^2(M,\omega)}}$, the reverse surjection $\Phi$ and the ``symplectic 2-gradient" $\Psi : 
\mathsf{\widetilde{D^2(M,\omega)}}\to \mathfrak{X}^{\bullet \leq 2}(M)$. This latter morphism is rather implicit in \cite{Del} but central to the step from 1-plectic (aka symplectic) to 2-plectic geometry. Note also that we allow $\Psi_2$ to be non-zero but in \cite{Del} the corresponding morphism $\pi_1$ is always assumed to be strict.\\
In the last section, Section 7, we define a comomentum map associated to a 2-action of a Lie 2-algebra on a 2-plectic manifold $M$ as a lift of the action, satisfying a natural commutative diagram. 
This notion of a comomentum map (compare also Definition 6.11 in \cite{Del}) generalises the comomentum maps defined in \cite{CFRZ} for Lie algebras resp. in \cite{Mam, MaZa} for skeletal Lie 2-algebras (in the 2-plectic case).\\\\
Appendix A contains an explicit, constructive proof for the quasi-isomorphism between a given Lie 2-algebra and a skeletal one, illustrated by two fundamental examples. Appendix B gives a motivation for our constructions, via a linear representation of $\mathfrak{X}^{\bullet \leq 2}(M)$ on $C^\infty(M) \oplus \Omega^1(M)$, in analogy to the case of vector fields acting on functions. Appendix C contains numerous examples classified in function of the algebraic properties of the brackets of the Lie 2-algebras, the actions and the comomentum maps.

\section{Basic results on Lie 2-algebras}

In this section, the basic notions of the theory of Lie 2-algebras are recalled, and the correspondence between strict Lie 2-algebras and crossed modules of Lie algebras is reproved.

\subsection{\texorpdfstring{$L_\infty$}\ -algebras}

\begin{defn}
A \emph{$L_{\infty}$-algebra (or Lie ${\infty}$-algebra)} is a non-positively graded vector space $L$, endowed with a collection $\{l_k : L^{\otimes k}\mapsto L\}_{k\geq 1}$ of multilinear, graded antisymmetric maps such that $l_k$ is of degree $2-k$ and such that for all 
$m\geq 1$, 
$$\dis \sum_{i+j=m+1} (-1)^{i(j+1)} \dis \sum_{\s \in Sh(i,m-i)} (-1)^\s 
\epsilon(\s; x_1,..., x_m) l_j(l_i(x_{\s(1)},\ldots , x_{\s(i)}),x_{\s(i+1)},\ldots , x_{\s(m)})=0$$
where $\epsilon(\s; x_1,..., x_m)$ denotes the Koszul sign of $\s$ acting on the elements $x_1,..., x_m$ and $Sh(i,m-i)\subset S_m$ are the $(i,m-i)$-unshuffles.
\end{defn}

\begin{rema}
For $L_{\infty}$-algebras see \cite{Rog} and references therein.
\end{rema}

\begin{defn}
    A \emph{Lie $n$-algebra} is a $L_{\infty}$-algebra, concentrated in the degrees $1{-}n,\ldots,-1,0$.
\end{defn}

\subsection{Lie 2-algebras and their morphisms}
This subsection specialises from Lie $\infty$-algebras to Lie 2-algebras and gives an explicit description of these latter objects and their morphisms.\\

We will use the following explicit characterization of Lie 2-algebras in the sequel.
\begin{prop}
   A Lie 2-algebra is a graded vector space $L=L_{-1}\oplus L_0$, with a collection of three multilinear maps $(l_1,l_2,l_3)$, where
   $l_1:L_{-1}\mapsto L_0$, $l_2$ can be decomposed in its ``pure'' part
   $l_2^p:L_0\times L_0\mapsto L_0$, antisymmetric and its ``mixed'' part
   $l_2^m :L_{-1}\times L_0\mapsto L_{-1}$, also antisymmetric in the sense that 
   $l_2^m (x,a)=-l_2^m(a,x)$ for all $x$ in $L_0$ and $a$ in $L_{-1}$, and finally
   $l_3 : L_0\times L_0\times L_0\mapsto L_{-1}$, antisymmetric. (For convenience and when there is no ambiguity, we will simply use $l_2$, both for $l_2^p$ and $l_2^m$.)
   These maps satisfy the following relations, for all $a, b$ in $L_{-1}$ and for all $x,y,z,t$ in $L_0$
   \begin{itemize}
   \item [$(R_1)$]  $l_1(x)=0$
   \item [$(R_2)$]  $l_1(l_2(x,a))=l_2(x,l_1(a))$
   \item [$(R_3)$]  $l_2(l_1(a),b)=l_2(a,l_1(b))$
   \item [$(R_4)$] $l_1(l_3(x,y,z))=-l_2(l_2(x,y),z)-l_2(l_2(y,z),x)-l_2(l_2(z,x),y)$
   \item [$(R_5)$] $l_3(l_1(a),x,y)=-l_2(l_2(x,y),a)-l_2(l_2(y,a),x)-l_2(l_2(a,x),y)$
   \item [$(R_6)$] $l_3(l_2(x,y),z,t)-l_3(l_2(x,z),y,t)+ l_3(l_2(x,t),y,z)+l_3(l_2(y,z),x,t)-l_3(l_2(y,t),x,z)$\\
 $\quad \quad \quad +l_3(l_2(z,t),x,y)
  =l_2(l_3(x,y,z),t)-l_2(l_3(x,y,t),z)+l_2(l_3(x,z,t),y)-l_2(l_3(y,z,t),x).$
   \end{itemize}
\end{prop}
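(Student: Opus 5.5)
The plan is to specialise the defining identities of an $L_\infty$-algebra from the Definition above to a graded space $L=L_{-1}\oplus L_0$ concentrated in degrees $-1,0$, and to read off what survives by degree counting.

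First I determine which components of the brackets can be nonzero. Since $l_k$ has degree $2-k$ and $L$ lives in degrees $-1$ and $0$, the output of $l_k$ on inputs of total degree $-d$ has degree $2-k-d$, which must lie in $\{-1,0\}$. For $l_1$ (degree $1$) the only possibility is $L_{-1}\to L_0$; in particular $l_1$ vanishes on $L_0$, which is $(R_1)$. For $l_2$ (degree $0$) the input degree is $-d\in\{0,-1\}$, which gives the pure part $L_0\times L_0\to L_0$ and the mixed part $L_{-1}\times L_0\to L_{-1}$, while $l_2$ vanishes on $L_{-1}\times L_{-1}$. For $l_3$ (degree $-1$) only $L_0^{\otimes 3}\to L_{-1}$ survives, and $l_k=0$ for $k\geq 4$ because $2-k\leq -2$. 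Graded antisymmetry then reduces to ordinary antisymmetry on degree-$0$ elements, and to $l_2(x,a)=-l_2(a,x)$ for the mixed part, since there is no Koszul sign when one argument has degree $0$.

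Next I write out the generalized Jacobi identity for $m=1,\dots,4$ and sort it by the degrees of the inputs. For $m=1$ the identity is $l_1\circ l_1=0$, which holds automatically for degree reasons. For $m=2$, the identity $l_1 l_2=l_2 l_1$ (up to sign), evaluated on $(x,a)$, gives $(R_2)$; evaluated on $(a,b)$ it gives $(R_3)$, because $l_1(l_2(a,b))=0$; evaluated on $(x,y)$ it is trivial by $(R_1)$. For $m=3$, the inputs $(x,y,z)$ give $(R_4)$ and the inputs $(a,x,y)$ give $(R_5)$; inputs with two or more elements of $L_{-1}$ give only trivially vanishing terms, since the output would have degree $\leq -2$. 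For $m=4$, only inputs in $L_0^{\otimes 4}$ can contribute, and these give $(R_6)$. For $m\geq 5$ every term involves some $l_k$ with $k\geq 4$, or has output degree $\leq -2$, so these identities are vacuous. The converse direction is the same bookkeeping read backwards: given $(l_1,l_2,l_3)$ satisfying $(R_1)$--$(R_6)$, extend them by zero and by graded antisymmetry, and the $L_\infty$ identities reduce exactly to these relations.

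The main obstacle is the sign bookkeeping in the $m=3$ and $m=4$ identities. This means evaluating $(-1)^{i(j+1)}$, the unshuffle signs and the Koszul signs carefully, and using antisymmetry to bring each term into the normalized form stated. Particular care is needed in $(R_5)$, where the degree $-1$ element $a$ moves past the others, and in the ten-term $(R_6)$. For these I would list the unshuffles explicitly: for $m=4$ there are $(2,2)$-unshuffles contributing $l_3(l_2(\cdot,\cdot),\cdot,\cdot)$ and $(3,1)$-unshuffles contributing $l_2(l_3(\cdot,\cdot,\cdot),\cdot)$, while the $l_1l_4$ and $l_4l_1$ terms vanish. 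I would also check the overall sign convention against the known case $l_3=0$, where $(R_4)$ must reduce to the Jacobi identity up to $l_1$ of something. Any global sign discrepancy coming from the convention $(-1)^{i(j+1)}$ can be absorbed by replacing $l_3$ with $-l_3$, which matches the convention of \cite{Rog}.
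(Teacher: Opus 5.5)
Your approach is the right one, and it is what the paper takes for granted: the proposition is stated there without proof, as the explicit form of the $L_\infty$ axioms for a space concentrated in degrees $-1$ and $0$. Your degree analysis is correct: which components of $l_1,l_2,l_3$ survive, $l_k=0$ for $k\geq 4$, which input patterns give non-vacuous identities for $m\leq 4$, and vacuity for $m\geq 5$. Testing each identity on a single ordering of the inputs is also legitimate, since each unshuffle sum is itself graded antisymmetric in $x_1,\ldots,x_m$.

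The step you should drop is the fallback of absorbing a sign discrepancy by replacing $l_3$ with $-l_3$. The proposition asserts $(R_1)$--$(R_6)$ for the very maps $l_k$ of the Definition, so a rescaling would prove a different statement. It also could not repair a mismatch in $(R_6)$, which is linear in $l_3$ on both sides. Your proposed test case $l_3=0$ cannot detect the sign of $l_3$ at all.

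Nothing needs absorbing, because the signs come out exactly as stated. The prefactor $(-1)^{i(j+1)}$ is:
\begin{itemize}
\item $-1$ for $(i,j)=(1,2)$ and $+1$ for $(2,1)$ when $m=2$;
\item $+1$ for all three pairs when $m=3$;
\item $+1$ for $(i,j)=(2,3)$ and $-1$ for $(3,2)$ when $m=4$ (the $l_4$ terms vanish).
\end{itemize}
The unshuffle signs are $+,-,+$ on $Sh(1,2)$ and $Sh(2,1)$, $+,-,+,+,-,+$ on $Sh(2,2)$, and $+,-,+,-$ on $Sh(3,1)$. With these, the $m=3$ identity on $(x,y,z)$ is $(R_4)$ after using $l_2(x,z)=-l_2(z,x)$. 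On $(a,x,y)$ it is $(R_5)$ after using $l_2(a,y)=-l_2(y,a)$. The $m=4$ identity is literally $(R_6)$.

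One correction to where you expect trouble: the only nontrivial Koszul sign in the whole computation occurs in $(R_3)$, not in $(R_5)$. In $(R_5)$ the element $a$ only passes degree-$0$ elements, so no Koszul sign arises. On $(a,b)$, by contrast, the transposition carries $(-1)^\sigma\epsilon(\sigma;a,b)=(-1)(-1)=+1$. So the $m=2$ identity reads $l_2(l_1(a),b)+l_2(l_1(b),a)=0$, which is $(R_3)$ by antisymmetry of the mixed bracket.
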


\begin{rema}
Every Lie algebra $(\mathfrak{g}, [\,,\,])$ can be seen as a Lie 2-algebra  $(L_{-1}\oplus L_0,l_1,l_2,l_3)$ upon setting $L_{-1}=\{0 \}$, $L_0=\mathfrak{g}$ and
$l_1=l_2^m=l_3=0$ and defining $l_2^p : L_0\times L_0 \mapsto L_0$ by
$$l_2^p(x,y)=[x,y]\, .$$
\end{rema}

\begin{rema} We also use notations such as $\mathfrak{g}=\mathfrak{g}_{-1}\oplus \mathfrak{g}_0$ for a Lie 2-algebra if useful and not confusing.
\end{rema}

\begin{defn}
    Let $L=(L_{-1}\oplus L_0,l_1,l_2,l_3)$ and 
    $L'=(L'_{-1}\oplus L_0',l_1',l_2',l_3')$ be two Lie 2-algebras. A \emph{Lie 2-algebra morphism or simply a Lie 2-morphism} $F$ from $L$ to $L'$ is a couple $F=(F_1,F_2)$  where each map $F_k$ is linear and of degree $1-k$. This couple can be decomposed into 
$F_{1,0} : L_0\mapsto L_0'$,
    $F_{1,-1} : L_{-1}\mapsto L_{-1}'$
    and 
    $F_{2} : L_0\times L_0\mapsto L_{-1}'$ (antisymmetric),
   satisfying the following relations, for all $x,y,z$ in $L_0$ and for all $a$ in $L_{-1}$ 
    \begin{itemize}
        \item [$(A_1)$] $l_1'\circ F_{1,-1}=F_{1,0}\circ l_1$
        \item [$(A_2)$] $l_1'(F_2(x,y))=F_{1,0}(l_2(x,y))-
        l_2'(F_{1,0}(x),F_{1,0}(y))$
        \item [$(A_3)$]$F_{1,-1}(l_2(a,x))=F_2(l_1(a),x)+
        l_2'(F_{1,-1}(a),F_{1,0}(x))$
       \item [$(A_4)$] $F_{1,-1}(l_3(x,y,z))+\big{(}F_2(l_2(x,y),z)\big{)}+ c.p.$\\
       $\quad \quad =l_3'(F_{1,0}(x),F_{1,0}(y),F_{1,0}(z))+\big{(}l_2'(F_{1,0}(x),F_2(y,z))+c.p.\big{)}$
       \par\noindent
       where $c.p.$ denotes cyclic permutations.
        
    \end{itemize}
\end{defn}

\begin{defn}
A Lie 2-morphism $F=(F_1,F_2)$ between two Lie 2-algebras is said to be \emph{strict} if $F_2=0.$
\end{defn}

Lie 2-algebras morphisms can be composed in the following way.
\begin{lem} \label{compositionmorphismes}
Let $L=(L_{-1}\oplus L_0,l_1,l_2,l_3)$,
$L'=(L'_{-1}\oplus L'_0,l'_1,l'_2,l'_3)$
and $(L''_{-1}\oplus L''_0,l''_1,l''_2,l''_3)$ be Lie 2-algebras.
Consider Lie 2-algebra morphisms $F=(F_1,F_2)$ from $L_{-1}\oplus L_0$
to $L'_{-1}\oplus L'_0$ resp. $F'=(F_1',F_2')$ from  $L'_{-1}\oplus L'_0$ to $L''_{-1}\oplus L''_0$. Then the composition 
$$F''=F'\circ F$$ is explicitly described as the couple $F''=(F''_{1},F''_2)$
where $F_{1,0}'':L_0\mapsto F_0''$, 
$F_{1,-1}'': L_{-1}\mapsto L_{-1}''$,
$F_2'':L_0\times L_0\mapsto L_{-1}''$
and
with the relations 
\begin{itemize}
\item [$(C_1)$]  $F_{1,0}''=F_{1,0}'\circ F_{1,0}$
   \item [$(C_2)$]  $F_{1,-1}''=F_{1,-1}'\circ F_{1,-1}$
   \item [$(C_3)$] $F_2''=F_2'\circ (F_{1,0}\times F_{1,0})+F_{1,-1}'\circ F_2.$
\end{itemize}
  \end{lem}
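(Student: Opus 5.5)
The plan is to verify directly that the couple $F''=(F''_1,F''_2)$ defined by $(C_1)$--$(C_3)$ satisfies the four axioms $(A_1)$--$(A_4)$ of a Lie 2-morphism from $L$ to $L''$. We use the relations $(A_i)$ for $F$ (with unprimed/primed brackets) and for $F'$ (with primed/double-primed brackets). Linearity and degrees are clear, and $F''_2$ is antisymmetric because $F'_2$ and $F_2$ are. Write $A_i(F)$ and $A_i(F')$ for the corresponding relations.

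For $(A_1)$, compute $l''_1 F'_{1,-1}F_{1,-1}=F'_{1,0}l'_1F_{1,-1}=F'_{1,0}F_{1,0}l_1$. For $(A_2)$, apply $l''_1$ to $F''_2(x,y)$. The term $l''_1F'_{1,-1}F_2(x,y)$ becomes $F'_{1,0}l'_1F_2(x,y)=F'_{1,0}F_{1,0}l_2(x,y)-F'_{1,0}l'_2(F_{1,0}x,F_{1,0}y)$. The term $l''_1F'_2(F_{1,0}x,F_{1,0}y)$ equals $F'_{1,0}l'_2(F_{1,0}x,F_{1,0}y)-l''_2(F''_{1,0}x,F''_{1,0}y)$. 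The middle terms cancel. For $(A_3)$, compute $F'_{1,-1}F_{1,-1}l_2(a,x)=F'_{1,-1}F_2(l_1a,x)+F'_{1,-1}l'_2(F_{1,-1}a,F_{1,0}x)$. Expanding the last term with $A_3(F')$ gives $F'_2(l'_1F_{1,-1}a,F_{1,0}x)+l''_2(F''_{1,-1}a,F''_{1,0}x)$. By $A_1(F)$, $l'_1F_{1,-1}a=F_{1,0}l_1a$, so the first two pieces assemble into $F''_2(l_1a,x)$.

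The main obstacle is $(A_4)$, which is a bookkeeping computation. We start from the left side
$$F'_{1,-1}F_{1,-1}l_3(x,y,z)+\bigl(F'_2(F_{1,0}l_2(x,y),F_{1,0}z)+F'_{1,-1}F_2(l_2(x,y),z)\bigr)+c.p.$$
First apply $F'_{1,-1}$ to $A_4(F)$. This turns $F'_{1,-1}F_{1,-1}l_3+F'_{1,-1}F_2(l_2(\cdot,\cdot),\cdot)+c.p.$ into $F'_{1,-1}l'_3(F_{1,0}x,F_{1,0}y,F_{1,0}z)+F'_{1,-1}l'_2(F_{1,0}x,F_2(y,z))+c.p.$ Next, in the $F'_2$ terms, substitute $F_{1,0}l_2(x,y)=l'_1F_2(x,y)+l'_2(F_{1,0}x,F_{1,0}y)$ using $A_2(F)$. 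The pieces $F'_2(l'_2(F_{1,0}x,F_{1,0}y),F_{1,0}z)+c.p.$, together with $F'_{1,-1}l'_3(\ldots)$, form the left side of $A_4(F')$ evaluated at $F_{1,0}x,F_{1,0}y,F_{1,0}z$. By $A_4(F')$ they become $l''_3(F''_{1,0}x,\dots)+l''_2(F''_{1,0}x,F'_2(F_{1,0}y,F_{1,0}z))+c.p.$

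Two kinds of term remain: $F'_2(l'_1F_2(x,y),F_{1,0}z)$ and $F'_{1,-1}l'_2(F_{1,0}x,F_2(y,z))$, each with cyclic permutations. Using antisymmetry and $A_3(F')$ with $a=F_2(y,z)$, we get $F'_{1,-1}l'_2(F_2(y,z),F_{1,0}x)=F'_2(l'_1F_2(y,z),F_{1,0}x)+l''_2(F'_{1,-1}F_2(y,z),F''_{1,0}x)$. After cyclic relabelling and with the signs from antisymmetry of $l_2^m$, the $F'_2(l'_1F_2,\cdot)$ terms cancel, leaving $l''_2(F''_{1,0}x,F'_{1,-1}F_2(y,z))+c.p.$ Combined with the previous $l''_2$ terms, this gives exactly $l''_2(F''_{1,0}x,F''_2(y,z))+c.p.$, which is the right side of $(A_4)$ for $F''$. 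The step that needs real care is the sign tracking in this final cancellation. It relies on the convention $l_2^m(x,a)=-l_2^m(a,x)$ and on matching the cyclic orderings of the $c.p.$ sums.
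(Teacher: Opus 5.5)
Your proposal is correct and is the direct verification of $(A_1)$--$(A_4)$ for $F''$ that the paper means when it calls the proof ``straightforward''. Your final sign step in $(A_4)$ also checks out: after applying $(A_3)$ for $F'$ with $a=F_2(y,z)$ and the antisymmetry of the mixed brackets, the terms $-F'_2(l'_1F_2(y,z),F_{1,0}x)+c.p.$ form the same cyclic sum as $F'_2(l'_1F_2(x,y),F_{1,0}z)+c.p.$, so they cancel, leaving $l''_2(F''_{1,0}x,F''_2(y,z))+c.p.$ as claimed.
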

\begin{proof} Straightforward.
\end{proof}

The following result can also be easily deduced from the definitions.

\begin{prop}
  Let $(L_{-1}\oplus L_0,l_1,l_2,l_3)$  be a Lie 2-algebra. Then $L_0$ is a Lie algebra with the bracket induced by $l_2$ if and only if $l_1\circ l_3=0$.
\end{prop}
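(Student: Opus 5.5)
The plan is to read off the Lie algebra axioms for $L_0$ directly from the relations $(R_1)$–$(R_6)$ of the explicit characterization of Lie 2-algebras given above. The bracket on $L_0$ is $[x,y]:=l_2^p(x,y)$. It is bilinear and antisymmetric by hypothesis. So $L_0$ is a Lie algebra exactly when this bracket satisfies the Jacobi identity, and the whole task is to compare the Jacobiator with $l_1\circ l_3$.

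The key step is relation $(R_4)$. For all $x,y,z\in L_0$ it reads
$$l_1(l_3(x,y,z))=-\big(l_2(l_2(x,y),z)+l_2(l_2(y,z),x)+l_2(l_2(z,x),y)\big).$$
Here every inner and outer $l_2$ is the pure part $l_2^p$, since all arguments lie in $L_0$. Using antisymmetry, $l_2(l_2(x,y),z)=-[z,[x,y]]$, and similarly for the other two terms. Hence the right-hand side equals
$$[x,[y,z]]+[y,[z,x]]+[z,[x,y]],$$
which is the Jacobiator $J(x,y,z)$. So $(R_4)$ says precisely $l_1\circ l_3=J$ as trilinear maps $L_0^{3}\to L_0$. (The overall sign convention is irrelevant, because only vanishing matters.)

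Both directions then follow at once. If $l_1\circ l_3=0$, then $J\equiv 0$, so $(L_0,l_2^p)$ is a Lie algebra. Conversely, if $L_0$ is a Lie algebra for $l_2^p$, then $J\equiv0$, and $(R_4)$ gives $l_1(l_3(x,y,z))=0$ for all $x,y,z$, i.e. $l_1\circ l_3=0$.

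I do not expect a genuine obstacle. The only points to state carefully are two. First, the statement concerns only the pure part $l_2^p$; the mixed part plays no role. Second, the identity $(R_4)$ must hold for all triples, so equality of the Jacobiator with $l_1\circ l_3$ as maps gives the equivalence pointwise.
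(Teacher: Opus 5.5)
Your proposal is correct and matches the paper's approach: the paper gives no written proof, only the remark that the result follows easily from the definitions, and the intended argument is exactly yours. Relation $(R_4)$ identifies $l_1\circ l_3$ with the Jacobiator of $l_2^p$ (your sign computation is right), so one vanishes identically if and only if the other does.
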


\subsection{Skeletal Lie 2-algebras}
This and the next subsection consider important special classes of Lie 2-algebras ($l_1=0$ resp. $l_3=0$) and recall the fundamental results on these classes : the quasi-isomorphy of any given Lie 2-algebra to a skeletal one resp. the correspondence between strict Lie 2-algebras and crossed modules of Lie algebras.
\begin{defn}
  A Lie 2-algebra $(L_{-1}\oplus L_0,l_1,l_2,l_3)$ is said to be \emph{skeletal (or minimal)} if $l_1=0.$  
\end{defn}

\begin{prop}
\label{quasi-iso}
Every Lie 2-algebra $(L_{-1}\oplus L_0,l_1,l_2,l_3)$ is quasi-isomorphic to a skeletal one.
\end{prop}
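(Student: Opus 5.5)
We build the skeletal model explicitly on the cohomology of the complex $L_{-1}\xrightarrow{l_1}L_0$ and exhibit a pair of Lie 2-morphisms inducing inverse isomorphisms on cohomology. Set $H_{-1}=\ker l_1\subset L_{-1}$ and $H_0=L_0/\operatorname{im} l_1$. We choose linear complements $L_{-1}=H_{-1}\oplus C$ and $L_0=B\oplus H$, with $B=\operatorname{im} l_1$ and $H\cong H_0$. Then $l_1|_C:C\to B$ is an isomorphism; write $h:L_0\to L_{-1}$ for the map that is $(l_1|_C)^{-1}$ on $B$ and $0$ on $H$. Let $i:H\oplus H_{-1}\hookrightarrow L$ be the inclusion and $p:L\to H\oplus H_{-1}$ the projection along $B$ and $C$. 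These are chain maps (the target has zero differential), $p\circ i=\mathrm{id}$, and $i\circ p-\mathrm{id}=-(l_1h+hl_1)$, so we have a strong deformation retract.

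Next we define the skeletal structure on $H\oplus H_{-1}$ with $l_1'=0$:
$$l_2'(x,y)=p\,l_2(x,y),\qquad l_2'(x,a)=p\,l_2(x,a),$$
$$l_3'(x,y,z)=p\big(l_3(x,y,z)-h\,l_2(l_2(x,y),z)-h\,l_2(l_2(y,z),x)-h\,l_2(l_2(z,x),y)\big)$$
for $x,y,z\in H$ and $a\in H_{-1}$. These are the homotopy-transfer formulas truncated to degrees $-1,0$; signs may need to be adjusted to match $(R_4)$–$(R_6)$. By $(R_2)$, $l_2(x,a)$ lies in $\ker l_1=H_{-1}$, so the mixed bracket is well defined. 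Relation $(R_4)$ shows that the Jacobiator of $l_2'$ on $H$ lies in $\operatorname{im} l_1$. Hence $H$ with $l_2'$ is a Lie algebra and $H_{-1}$ is an $H$-module (using $(R_5)$ with $l_1(a)=0$). For $(R_6)$, I would not verify it by brute force. Instead, I would argue conceptually: by the homological perturbation lemma, the transferred brackets satisfy the $L_\infty$ relations, and all higher $l_k'$, $k\ge4$, vanish because $L$ is concentrated in degrees $-1,0$. Alternatively one checks $(R_6)$ directly, using $(R_6)$ in $L$ together with $(R_4)$ to rewrite the $h$-terms.

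The morphism $F:H\oplus H_{-1}\to L$ is then $F_{1,0}=i$, $F_{1,-1}=i$, and $F_2(x,y)=-h\,l_2(x,y)$ (sign to be fixed). I would check the axioms in order:
\begin{itemize}
\item $(A_1)$ is trivial, since $l_1\circ i=0=i\circ l_1'$ on $H_{-1}$.
\item $(A_2)$ requires $l_1F_2(x,y)=i\,p\,l_2(x,y)-l_2(x,y)$, which is exactly the homotopy identity $ip-\mathrm{id}=-l_1h$ on $L_0$, since $hl_1=0$ on $H\subset L_0$ in degree $0$.
\item $(A_3)$ holds because $l_2(a,x)\in H_{-1}$ already, so $p$ acts as the identity on it, and $F_2(l_1'(a),x)=0$.
\item $(A_4)$ is arranged precisely by the correction term in $l_3'$: the part of $l_3$ lying in $C$ is $h\,l_1 l_3$, which equals the cyclic $h\,l_2 l_2$ sum by $(R_4)$.
\end{itemize}
Since $F_1=i$ induces the identity on cohomology, $F$ is a quasi-isomorphism. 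If a morphism in the other direction is wanted, $G=(p,\;G_2)$ with $G_2=p\,l_2(h\cdot,\cdot)$-type terms is built analogously.

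The main obstacle is $(A_4)$ together with $(R_6)$ for $l_3'$, where the signs and the projection of $l_3$ onto $H_{-1}$ versus $C$ must be reconciled. For $(A_4)$ I would decompose both sides along $L_{-1}=H_{-1}\oplus C$. The $C$-component of each side is detected by $l_1$, which is injective on $C$; applying $l_1$ and using $(A_2)$-type identities together with $(R_4)$ should reduce it to a tautology. The $H_{-1}$-component then holds by the definition of $l_3'$. Once $F$ is shown to be a morphism and $(R_1)$–$(R_5)$ are established, $(R_6)$ for $l_3'$ can also be deduced from $(R_6)$ in $L$ by applying $F_{1,-1}=i$ and using injectivity, which avoids a separate computation.
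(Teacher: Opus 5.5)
Your route is a legitimate alternative to the paper's: a strong deformation retract onto $H\oplus H_{-1}$, with a morphism in the inclusion direction given by $F_1=i$, $F_2=-h\,l_2$, $l_2'=p\,l_2$. The paper goes the other way. It takes $F\colon L\to\overline{L}$, with $F_{1,0}$ the quotient map and $F_{1,-1}$ the projection onto $\ker l_1$, and defines $\overline{l_2}$, $F_2$, $\overline{l_3}$ by solving $(A_2)$--$(A_4)$, so the $F_2$-terms enter $\overline{l_3}$ automatically. Your checks of $(A_1)$--$(A_3)$, of the Jacobi identity for $l_2'$, and of the $C$-component of $(A_4)$ are fine. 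Deducing $(R_6)$ from injectivity of $F_1$ is also valid once $F$ is a genuine morphism; it is cleanest in the coalgebra picture.

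The gap is your formula for $l_3'$, and it is not a sign issue. The map $h$ takes values in $C=\ker p$, so $p\circ h=0$. Your correction terms therefore vanish identically, leaving $l_3'=p\,l_3$. Projecting $(A_4)$ onto $H_{-1}$ along $C$ kills the terms $F_2(l_2'(x,y),z)$ and forces
\[ l_3'(x,y,z)=p\bigl(l_3(x,y,z)+l_2(x,F_2(y,z))+l_2(y,F_2(z,x))+l_2(z,F_2(x,y))\bigr)=p\bigl(l_3(x,y,z)+l_2(h\,l_2(x,y),z)+\text{c.p.}\bigr), \]
with $h$ on the inner bracket. The correction comes from the term $l_2(F_{1,0}(x),F_2(y,z))$ of $(A_4)$, i.e. from mixed brackets $l_2^m(C,L_0)$ having an $H_{-1}$-component. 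It does not come from the $C$-part of $l_3$, which $p$ discards anyway. So your assertion that the $H_{-1}$-component of $(A_4)$ ``holds by the definition of $l_3'$'' is false for the definition you wrote.

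Here is a concrete failure. Take $L_0=\langle x_1,x_2,x_3,y_{12},y_{13},y_{23}\rangle$ with $l_2(x_i,x_j)=y_{ij}$ and $y_{ij}$ central. Take $L_{-1}=\langle b_{12},b_{13},b_{23},a\rangle$ with $l_1(b_{ij})=y_{ij}$, $l_1(a)=0$, and $l_2(x_k,b_{ij})=\epsilon_{kij}\,a$, where $\epsilon$ is the Levi-Civita symbol and $b_{ji}=-b_{ij}$. All other brackets and $l_3$ are zero; $(R_1)$--$(R_6)$ are immediate. With $H=\langle x_1,x_2,x_3\rangle$ and $C=\langle b_{ij}\rangle$, your recipe gives $l_2'=0$, trivial action on $H_{-1}=\mathbb{R}a$, and $l_3'=0$. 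At $(x_1,x_2,x_3)$ the left side of $(A_4)$ is then $0$, while the right side is $l_2(x_1,-b_{23})+l_2(x_2,-b_{31})+l_2(x_3,-b_{12})=-3a$.

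Changing $F_2$ by any $\phi\colon\Lambda^2H\to H_{-1}$ does not help, since both extra terms vanish here. The forced value $l_3'(x_1,x_2,x_3)=-3a$ is a nonzero class in $H^3(H,\mathbb{R}a)=\Lambda^3H^*$: $H$ is abelian with trivial module, so there are no coboundaries. Your all-zero skeletal algebra is therefore not even quasi-isomorphic to $L$. In general, your formula would give $l_3'=0$ for every strict $L$. With $l_3'$ replaced by the formula above, the rest of your plan goes through as written.
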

\begin{proof} 
A proof of this result can be found in \cite{BaCra}. 
In Appendix A, we will give a new constructive proof and apply our construction explicitly in fundamental examples. The proof can be skipped on first reading, but the construction will be used again in the following section. 
\end{proof}

\subsection{Strict Lie 2-algebras}\label{strict}

\begin{defn}
  A Lie 2-algebra $(L_{-1}\oplus L_0,l_1,l_2,l_3)$ is said to be \emph{strict} when $l_3=0$.
\end{defn}
\begin{rema}
A strict Lie 2-algebra is a differential graded Lie algebra, concentrated in the degrees 0 and -1.
\end{rema}

A useful equivalent description of strict Lie 2-algebras can be achieved in terms of Lie algebra crossed modules.

\begin{defn}
A \emph{Lie algebra crossed module} $(\mathfrak{g},\mathfrak{h},\tau,r)$ is given  by 
\begin{itemize}
\item two Lie 2-algebras $\mathfrak{g}$ and $\mathfrak{h}$, whose brackets
will be denoted by $[\,,\,]_{\mathfrak{g}}$ and
 $[\,,\,]_{\mathfrak{h}}$ 
\item 
two Lie algebra morphisms $\tau :\mathfrak{h}\mapsto \mathfrak{g}$
and $r : \mathfrak{g}\mapsto Der(\mathfrak{h})$
where $Der(\mathfrak{h})$ is the Lie algebra of derivations of 
$\mathfrak{h}$, with the bracket
$$[A,B]=A\circ B -B\circ A$$
for $A$ and $B$ in $Der(\mathfrak{h})$.
\end{itemize}

Furthermore, the maps $\tau$ and $r$ have to satisfy, for all $a,b$  in $\mathfrak{h}$, and for all $x$  in $\mathfrak{g}$,
\begin{itemize}
\item $\tau(r(x)(a))=[x,\tau(a)]_{\mathfrak{g}}$
\item $r(\tau(a))(b)=[a,b]_{\mathfrak{h}}$.
\end{itemize}
\end{defn}
As shown in \cite{BaCra}, Lie algebra crossed modules can be ``categorified'' as strict Lie 2-algebras. Let us sketch a proof of this result, which avoids categorical machinery. 
\begin{prop}
There is a one-to-one correspondence between strict Lie 2-algebras and Lie algebra crossed modules.
\end{prop}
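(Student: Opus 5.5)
Starting from a strict Lie 2-algebra $(L_{-1}\oplus L_0,l_1,l_2,0)$, I set $\mathfrak{g}=L_0$ with $[x,y]_{\mathfrak{g}}=l_2(x,y)$ and $\mathfrak{h}=L_{-1}$ with $[a,b]_{\mathfrak{h}}=l_2(l_1(a),b)$. I take $\tau=l_1$ and $r(x)(a)=l_2(x,a)$. (In the definition of a crossed module, $\mathfrak{g}$ and $\mathfrak{h}$ are of course meant to be Lie algebras.) Conversely, given a crossed module $(\mathfrak{g},\mathfrak{h},\tau,r)$, I set $L_0=\mathfrak{g}$, $L_{-1}=\mathfrak{h}$, $l_1=\tau$, $l_2^p=[\,,\,]_{\mathfrak{g}}$, $l_2^m(x,a)=r(x)(a)=-l_2^m(a,x)$ and $l_3=0$. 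The two constructions are visibly inverse to each other: the second crossed-module axiom recovers $[a,b]_{\mathfrak{h}}=r(\tau(a))(b)=l_2(l_1(a),b)$, so the bracket on $\mathfrak{h}$ carries no additional data. It therefore suffices to check that each construction produces the right kind of object, using the relations $(R_1)$--$(R_6)$ with $l_3=0$.

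\emph{From strict Lie 2-algebras to crossed modules.} With $l_3=0$, relation $(R_4)$ says exactly that $l_2^p$ satisfies the Jacobi identity, so $\mathfrak{g}$ is a Lie algebra. Antisymmetry of $[a,b]_{\mathfrak{h}}$ comes from $(R_3)$: $l_2(l_1(a),b)=l_2(a,l_1(b))=-l_2(l_1(b),a)$. Relation $(R_2)$ gives the first crossed-module axiom $\tau(r(x)a)=[x,\tau(a)]$. It also shows that $\tau$ is a Lie morphism, since $l_1(l_2(l_1a,b))=l_2(l_1a,l_1b)$.

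The remaining facts all come from $(R_5)$ with $l_3=0$, which reads $l_2(l_2(x,y),a)=l_2(x,l_2(y,a))-l_2(y,l_2(x,a))$. This identity says at once that $r$ is a Lie morphism into $\mathrm{End}(\mathfrak{h})$. Taking $x=l_1(c)$ in it shows that $r(l_1c)$ satisfies the Leibniz rule for $[\,,\,]_{\mathfrak{h}}$, and this yields the Jacobi identity for $\mathfrak{h}$. It remains to show that each $r(x)$ is a derivation, i.e.\ $r(x)[a,b]=[r(x)a,b]+[a,r(x)b]$. For this I expand $l_2(x,l_2(l_1a,b))$ with $(R_5)$ (taking $y=l_1a$), and then replace $l_2(x,l_1a)$ by $l_1(l_2(x,a))$ using $(R_2)$. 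The second crossed-module axiom holds by definition.

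\emph{From crossed modules to strict Lie 2-algebras.} The argument is the same computation read backwards. $(R_1)$ is vacuous, $(R_2)$ is the first crossed-module axiom, and $(R_3)$ follows from the second axiom together with antisymmetry of $[\,,\,]_{\mathfrak{h}}$. $(R_4)$ is Jacobi in $\mathfrak{g}$, $(R_5)$ is the statement that $r$ is a morphism, and $(R_6)$ is trivial since $l_3=0$.

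The main obstacle I expect is sign and ordering bookkeeping in the mixed relations $(R_3)$ and $(R_5)$, where the convention $l_2^m(x,a)=-l_2^m(a,x)$ enters. The most delicate single step is extracting the derivation property and the Jacobi identity of $\mathfrak{h}$ purely from $(R_2)$, $(R_3)$ and $(R_5)$. I would handle it by writing each side in the form $l_2(x,\cdot)$ before comparing terms.
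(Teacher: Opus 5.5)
Your proposal is correct and follows the paper's proof. You use the same two constructions ($\tau=l_1$, $r(x)(a)=l_2^m(x,a)$, and $[a,b]_{\mathfrak{h}}=l_2(l_1(a),b)$, which equals the paper's $l_2^m(a,l_1(b))$ by $(R_3)$), and you derive the morphism and derivation properties of $r$ from $(R_5)$ and $(R_2)$, matching the paper's defect terms $-l_3(l_1(a),x,y)$ and $l_3(l_1(a),l_1(b),x)$; your explicit check that the two constructions are mutually inverse, via $r(\tau(a))(b)=[a,b]_{\mathfrak{h}}$, spells out a point the paper leaves implicit.
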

\begin{proof} 
\hfill \newline
Let $(\mathfrak{g},\mathfrak{h},\tau,r)$ be a Lie algebra crossed module. We put $\mathfrak{g}_{-1}=\mathfrak{h}$ and $\mathfrak{g}_0=\mathfrak{g}$. We define the mappings $l_1$ and $l_2$ by  $l_1=\tau$, $l_2^p(x,y)=[x,y]_{\mathfrak{g}}$
and $l_2^m(a,x)=-l_2^m(x,a)=-r(x)(a)$ where $x$ and $y$ are in $\mathfrak{g}$, $a$ is in $\mathfrak{h}$, and we put $l_3=0$.
It is not difficult to prove that all the relations $(R_i)_{1\leq i\leq 6}$ are satisfied. Thus, with these mappings, $\mathfrak{g}_{-1}\oplus \mathfrak{g}_0$ is a Lie 2-algebra and this Lie 2-algebra is strict since $l_3=0$.\\ 

Conversely, let $\mathfrak{g}_{-1}\oplus \mathfrak{g}_0$ be a strict Lie 2-algebra with mappings $l_1$, $l_2$ which decomposes in $l_2^p$ and $l_2^m$ and with $l_3=0$.
We put $\mathfrak{h}=\mathfrak{g}_{-1}$,
$\mathfrak{g}=\mathfrak{g}_0$, $\tau=l_1$ and $r(x)(a)=l_2^m(x,a)$
where $x$ is in $\mathfrak{g}_0$ and $a$ is in $\mathfrak{g}_{-1}$. Moreover, we define the brackets $[\,,\,]_{\mathfrak{g}}$ and $[\,,\,]_{\mathfrak{h}}$ by
$$[x,y]_{\mathfrak{g}}=l_2^p(x,y)$$
and 
$$[a,b]_{\mathfrak{h}}=l_2^m(a, l_1(b))\, ,$$
where $x$ and $y$ are in $\mathfrak{g}_{0}$, $a$ and $b$ are in $\mathfrak{g}_{-1}$. One can easily check that $(\mathfrak{g}, [\,,\,]_{\mathfrak{g}})$ and $(\mathfrak{h}, [\,,\,]_{\mathfrak{h}})$ are Lie algebras. By simple computations, with the previous notations and with the relations of a Lie 2-algebra, we obtain also that, for all $x$ in $\mathfrak{g}$ and $a,b$ in $\mathfrak{h}$, 
$$r(x)([a,b]_{\mathfrak{h}})-[r(x)(a),b]_{\mathfrak{h}}-[a,r(x)(b)]_{\mathfrak{h}}=l_3(l_1(a), l_1(b),x)=0$$ since $l_3=0$. Thus $r(x)$ belongs to $Der(\mathfrak{h})$ for all $x$ in $\mathfrak{h}$. 
In the same way, we obtain that, for all $x$, $y$ in $\mathfrak{g}$ and $a$ in $\mathfrak{h}$ 
$$r([x,y]_{\mathfrak{g}})(a)-[r(x), r(y)]_{Der(\mathfrak{h})}(a)
=-l_3(l_1(a), x,y)=0$$
since $l_3=0$. Thus, $r :\mathfrak{g}\mapsto Der(\mathfrak{h})$ is a Lie algebra morphism. 
The remaining relations can now be checked implying that $(\mathfrak{g},\mathfrak{h},\tau,r)$ is indeed a Lie algebra crossed module.
\end{proof}

\begin{rema}
Let $\mathfrak{g}_{-1}\oplus \mathfrak{g}_0$ be a Lie 2-algebra fulfilling $l_1\circ l_3=0$ (so that $\mathfrak{g}_{0}$ is a Lie algebra) and 
$l_3(l_1(a), x,y)=0$ for all $a$ in $\mathfrak{g}_{-1}$ and $x$, $y$ in $\mathfrak{g}_{0}$. (These conditions are weaker than $l_3=0$.) Then, as in the second part of the previous proof, one can associate to such a Lie 2-algebra a Lie algebra crossed module. 
\end{rema}

\begin{defn}
 A \emph{crossed module morphism} between two Lie algebra crossed modules $(\mathfrak{g},\mathfrak{h},\tau,r)$ and 
 $(\mathfrak{g}',\mathfrak{h}',\tau',r')$  is a couple $(\Phi, \psi)$ of two Lie algebra morphisms, $\Phi : \mathfrak{h}\mapsto \mathfrak{h}'$ and $\psi : \mathfrak{g}\mapsto \mathfrak{g}'$ satisfying the following relations :
 \begin{itemize}
     \item $\tau'\circ \Phi=\psi \circ \tau$
     \item $\Phi(r(x)(a))=r'(\psi(x))(\Phi(a))$
 \end{itemize}
 for all $x$ in $\mathfrak{g}$ and $a$ in $\mathfrak{h}$.
 \end{defn}

It is now easy to prove the following result.
\begin{prop}
A crossed-module morphism between two Lie algebra crossed modules corresponds to a strict Lie 2-morphism between the two associated strict Lie 2-algebras.
\end{prop}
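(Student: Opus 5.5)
We plan to use the dictionary from the proof of the previous proposition: to a crossed module $(\mathfrak{g},\mathfrak{h},\tau,r)$ corresponds the strict Lie 2-algebra $\mathfrak{h}\oplus\mathfrak{g}$ with $l_1=\tau$, $l_2^p=[\,,\,]_{\mathfrak{g}}$, $l_2^m(x,a)=r(x)(a)$ and $l_3=0$. Conversely, $[a,b]_{\mathfrak{h}}=l_2^m(a,l_1(b))=r(\tau(a))(b)$. Given a crossed module morphism $(\Phi,\psi)$, we set $F_{1,-1}=\Phi$, $F_{1,0}=\psi$ and $F_2=0$. Conversely, a strict Lie 2-morphism $F=(F_1,0)$ yields $\Phi=F_{1,-1}$ and $\psi=F_{1,0}$. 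Since both constructions are identities on the underlying linear maps, the correspondence is bijective once we show that the defining conditions match.

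We compare the relations $(A_1)$--$(A_4)$ with $F_2=0$ against the crossed module morphism axioms. $(A_1)$ reads $\tau'\circ\Phi=\psi\circ\tau$, which is the first crossed module morphism condition. $(A_2)$ reduces to $\psi([x,y]_{\mathfrak{g}})=[\psi(x),\psi(y)]_{\mathfrak{g}'}$, so $\psi$ is a Lie algebra morphism. $(A_3)$ reads $F_{1,-1}(l_2(a,x))=l_2'(F_{1,-1}(a),F_{1,0}(x))$. Using antisymmetry of $l_2^m$, this is $\Phi(r(x)(a))=r'(\psi(x))(\Phi(a))$, the second condition. $(A_4)$ is trivially satisfied, since $l_3=l_3'=0$ and $F_2=0$.

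It remains to check that $\Phi$ is a Lie algebra morphism. In the crossed module setting this is required as an axiom, whereas on the strict Lie 2-morphism side it is not listed among $(A_1)$--$(A_4)$, so we must derive it. We compute
$$\Phi([a,b]_{\mathfrak{h}})=\Phi(r(\tau(a))(b))=r'(\psi(\tau(a)))(\Phi(b))=r'(\tau'(\Phi(a)))(\Phi(b))=[\Phi(a),\Phi(b)]_{\mathfrak{h}'},$$
using $(A_3)$, then $(A_1)$, then the second crossed module axiom for $(\mathfrak{g}',\mathfrak{h}',\tau',r')$. Hence the crossed module condition on $\Phi$ is automatically implied by $(A_1)$ and $(A_3)$.

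The main point requiring care is the sign convention for $l_2^m$. In the first half of the earlier proof one sets $l_2^m(a,x)=-r(x)(a)$, while in the converse one sets $r(x)(a)=l_2^m(x,a)$. These agree by antisymmetry, and the same convention must be used on both $L$ and $L'$ so that $(A_3)$ translates exactly into the equivariance condition $\Phi(r(x)(a))=r'(\psi(x))(\Phi(a))$.
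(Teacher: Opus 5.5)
Your proposal is correct, and it is the direct verification the paper leaves to the reader (the paper gives no proof beyond ``it is now easy to prove''): $(A_1)$, $(A_2)$ and $(A_3)$ translate into $\tau'\circ\Phi=\psi\circ\tau$, into $\psi$ being a Lie algebra morphism, and into equivariance respectively, while $(A_4)$ is vacuous. Your further observation that $\Phi$ is automatically a Lie algebra morphism, by $(A_1)$, $(A_3)$ and the identity $r'(\tau'(a'))(b')=[a',b']_{\mathfrak{h}'}$ in the target, is exactly the point needed for the bijection, and the paper does not spell it out.
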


\section{Lie 2-algebras and 3-cocycles}
In this section we use the Chevalley-Eilenberg complex of a Lie algebra with values in a module to study Lie 2-algebras satisfying $l_1 \circ l_3=0$. It allows us to classify strict Lie 2-algebras (aka crossed modules of Lie algebras). This recovers classical results, compare, e.g., [Wa2] and [BaCra].

\medskip

Let $(L_{-1}\oplus L_0,l_1,l_2,l_3)$ be a Lie 2-algebra such that $l_1\circ l_3=0$
and $$l_3(l_1(a), x,y)=0$$
for all $a$ in $L_{-1}$, and for all $x$, $y$ in $L_0$. 
As remarked above, $L_0$ is then a Lie algebra and, furthermore, the mapping $p: L_0\mapsto End(L_{-1} )$, defined for $x$ in $L_0$ and $a$ in $L_{-1}$ by
$$p(x)(a)= l_2^m(a,x)$$
is a Lie algebra morphism between the Lie algebra $L_0$, endowed with the bracket induced by $l_2^p$ and 
the Lie algebra of endomorphisms $End(L_{-1})$, endowed with the commutator as its bracket, i.e. $p$ defines an action of $L_0$ on $L_{-1}$. To such a linear action, one can associate the Chevalley-Eilenberg cohomology of $L_0$ with values in $L_{-1}$, where the cochains are given by 
$$C^k(L_0,L_{-1})= Hom(\wedge^kL_0,L_{-1}).$$
For $c$ in $C^k(L_0,L_{-1})$, and for $x_1,...,x_{k+1}$ in $L_0$, we have the Chevalley-Eilenberg differential defined as
$$(d_{L_0}) c(x_1,...,x_{k+1})= \dis \sum_{i=1}^{k+1} (-1)^{i+1} p(x_i) (c(x_1,...,\widehat{x_i}, ..., x_{k+1}))
+ $$
$$\dis \sum_{i<j} (-1)^{i+j} c( l_2^p(x_i,x_j),x_1,...,\widehat{x_i},.., \widehat{x_j}, ..., x_{k+1}).$$
We denote the vector spaces of $k$-cocycles and $k$-coboundaries by ${Z}^k(L_0,L_{-1})$
and $B^k(L_0,L_{-1})$ respectively and define the $k$-th Lie algebra cohomology space by
$$H^k(L_0,L_{-1})= \dis \frac{Z^k(L_0,L_{-1})}{B^k(L_0,L_{-1})}.$$

\begin{rema}
 The assumptions $l_1\circ l_3=0$
and $$l_3(l_1(a), x,y)=0$$
for all $a$ in $L_{-1}$, and for all $x$, $y$ in $L_0$ are notably satisfied when $l_1=0$ (that is for a skeletal Lie 2-algebra). 
\end{rema}
\begin{prop}
With the assumptions of the preceding remark, $l_3$ is a 3-cocycle of the Chevalley-Eilenberg cohomology of $L_0$ with values in $L_{-1}$, that is 
$d_{L_0} l_3=0$.
\end{prop}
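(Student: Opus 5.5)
The plan is to write out $d_{L_0}l_3$ evaluated on four elements $x,y,z,t$ of $L_0$ and to observe that, once the action $p$ is expressed through $l_2^m$, the vanishing of this expression is exactly relation $(R_6)$. For $k=3$ the Chevalley-Eilenberg differential gives
\begin{align*}
(d_{L_0}l_3)(x,y,z,t)=\;& p(x)l_3(y,z,t)-p(y)l_3(x,z,t)+p(z)l_3(x,y,t)-p(t)l_3(x,y,z)\\
&-l_3(l_2(x,y),z,t)+l_3(l_2(x,z),y,t)-l_3(l_2(x,t),y,z)\\
&-l_3(l_2(y,z),x,t)+l_3(l_2(y,t),x,z)-l_3(l_2(z,t),x,y),
\end{align*}
where the signs $(-1)^{i+j}$ have been read off for the pairs $(i,j)$ in lexicographic order. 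The second and third lines together are exactly minus the left-hand side of $(R_6)$.

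Before going further, I will check that $d_{L_0}$ is a genuine Chevalley-Eilenberg differential under the standing hypotheses. This needs two facts. First, $L_0$ must be a Lie algebra, which holds because $l_1\circ l_3=0$ (the earlier proposition). Second, $p$ must be a representation, and I take this from the remark preceding the statement. In principle the identity $d_{L_0}l_3=0$ is only the formula above and does not need $d^2=0$. Still, the well-definedness is what makes $l_3$ a cocycle in a meaningful complex, so I will mention it.

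Next, I convert the first line. Since $p(x)(a)=l_2^m(a,x)=-l_2(x,a)$, we have $p(x)l_3(y,z,t)=-l_2(x,l_3(y,z,t))=l_2(l_3(y,z,t),x)$ by the antisymmetry of $l_2^m$. The first line therefore becomes
\[
l_2(l_3(y,z,t),x)-l_2(l_3(x,z,t),y)+l_2(l_3(x,y,t),z)-l_2(l_3(x,y,z),t).
\]
This is exactly minus the right-hand side of $(R_6)$. Hence $(d_{L_0}l_3)(x,y,z,t)=-(\mathrm{LHS}-\mathrm{RHS})$ of $(R_6)$, which is $0$.

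The only real obstacle is sign bookkeeping. I need to match the convention $p(x)(a)=l_2^m(a,x)$ rather than $l_2^m(x,a)$, and to confirm the ordering of arguments in each $l_3(l_2(x_i,x_j),\dots)$ term against $(R_6)$. If the convention for $p$ were taken with the opposite sign, the first line would differ by a sign, and the two lines would no longer combine into $(R_6)$. So I will carefully verify that the chosen $p$ is the one for which $p$ is a homomorphism; that choice is fixed in the preceding remark via $(R_5)$ and $l_3(l_1(a),x,y)=0$. Everything else is a direct comparison of terms.
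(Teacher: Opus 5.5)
Your approach is the paper's (expand $d_{L_0}l_3$ and compare with $(R_6)$), and your two term-by-term identifications are right. With $p(x)(a)=l_2^m(a,x)$, the first line is $-\mathrm{RHS}$ of $(R_6)$ and the second and third lines are $-\mathrm{LHS}$. But these add up to $-(\mathrm{LHS}+\mathrm{RHS})$, not $-(\mathrm{LHS}-\mathrm{RHS})$. By $(R_6)$ this equals $-2\,\mathrm{LHS}$, which need not vanish, so the decisive step fails.

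This is not just a slip in bookkeeping; with that convention the conclusion is false. Take $L_0=\langle h,e_1,e_2,e_3\rangle$ with $l_2(h,e_i)=e_i$ and all other brackets zero, $L_{-1}=\mathbb{R}$, $l_1=0$, $l_2^m(h,1)=3$, $l_2^m(e_i,1)=0$. Set $l_3(e_1,e_2,e_3)=1$ and $l_3=0$ whenever $h$ is an argument. All of $(R_1)$--$(R_6)$ hold; in particular both sides of $(R_6)$ equal $3$ on $(h,e_1,e_2,e_3)$. Here $p(h)=l_2^m(\cdot,h)=-3$ and $p(e_i)=0$, which is even a genuine representation, since $\mathrm{End}(\mathbb{R})$ is abelian and $p$ kills $[L_0,L_0]$. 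Yet $(d_{L_0}l_3)(h,e_1,e_2,e_3)=-3-3=-6$.

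The missing ingredient is exactly the check you deferred. When $l_3(l_1(a),x,y)=0$, relation $(R_5)$ says that $r(x):=l_2^m(x,\cdot)$ satisfies $r(l_2(x,y))=[r(x),r(y)]$. Hence $x\mapsto l_2^m(\cdot,x)=-r(x)$ is an anti-homomorphism, and the action to use is $p(x)(a)=l_2^m(x,a)$, i.e.\ the map $r$ of the crossed-module correspondence. With it your first line becomes $+\mathrm{RHS}$, and $(d_{L_0}l_3)(x,y,z,t)=\mathrm{RHS}-\mathrm{LHS}=0$ by $(R_6)$. This is what the paper's one-line appeal to $(R_6)$ tacitly uses. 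The formula $p(x)(a)=l_2^m(a,x)$ in the text carries this same sign slip, so you inherited it. Your closing remark also has the direction reversed: it is the opposite sign convention for which the two lines combine into $(R_6)$.
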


\begin{proof} Follows directly from relation $(R_6)$.
\end{proof}

\begin{prop}
Let $L=(L_{-1}\oplus L_0,l_1,l_2,l_3)$ be an arbitrary Lie 2-algebra and $\overline{L}=(\overline{L_{-1}}\oplus\overline{L_0}, \overline{l_1}=0, \overline{l_2}, \overline{l_3})$
the associated skeletal Lie 2-algebra. Then one can associate a 3-cocycle $\overline{l_3}$ of the Chevalley-Eilenberg cohomology of $\overline{L_0}$ with values in $\overline{L_{-1}}$, i.e., 
$d_{\overline{L_0}} \overline{l_3}=0$. 
\end{prop}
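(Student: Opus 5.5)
The plan is to reduce the statement to the preceding proposition, via the quasi-isomorphism of Proposition~\ref{quasi-iso}. First we invoke Proposition~\ref{quasi-iso} (in the constructive form of Appendix A) to obtain the skeletal Lie 2-algebra $\overline{L}=(\overline{L_{-1}}\oplus\overline{L_0},0,\overline{l_2},\overline{l_3})$ quasi-isomorphic to $L$. Concretely, one chooses complements so that $\overline{L_{-1}}=\ker l_1$ and $\overline{L_0}\cong L_0/\mathrm{im}\, l_1$ (realised as a complement $V$ of $\mathrm{im}\, l_1$ in $L_0$). The brackets $\overline{l_2}$ and $\overline{l_3}$ are obtained by transporting $l_2,l_3$ along the inclusion and projection maps. $\overline{l_3}$ acquires correction terms built from $l_2$ and a splitting $\sigma$ of $l_1$ onto its image. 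Since the statement takes $\overline{L}$ as given ("the associated skeletal Lie 2-algebra"), all we need is that $\overline{L}$ is a genuine Lie 2-algebra with $\overline{l_1}=0$.

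Second, we check the hypotheses of the previous proposition for $\overline{L}$. Since $\overline{l_1}=0$, the conditions $\overline{l_1}\circ\overline{l_3}=0$ and $\overline{l_3}(\overline{l_1}(a),x,y)=0$ hold trivially, as noted in the remark preceding it. Relation $(R_4)$ for $\overline{L}$ reduces to the Jacobi identity for $\overline{l_2^p}$, so $\overline{L_0}$ is a Lie algebra. Relation $(R_5)$ with vanishing left side gives
\[
\overline{l_2}(\overline{l_2}(x,y),a)+\overline{l_2}(\overline{l_2}(y,a),x)+\overline{l_2}(\overline{l_2}(a,x),y)=0 .
\]
This says exactly that $p(x)(a)=\overline{l_2^m}(a,x)$ defines a Lie algebra action of $\overline{L_0}$ on $\overline{L_{-1}}$. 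Hence the Chevalley--Eilenberg complex $C^\bullet(\overline{L_0},\overline{L_{-1}})$ and its differential $d_{\overline{L_0}}$ are well defined.

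Third, we apply the previous proposition, i.e. relation $(R_6)$ for $\overline{L}$. Expanding $d_{\overline{L_0}}\overline{l_3}(x,y,z,t)$ gives four action terms $\pm p(\cdot)\overline{l_3}(\cdot,\cdot,\cdot)$ and six terms $\pm\overline{l_3}(\overline{l_2}(x_i,x_j),\ldots)$. Rewriting $p(x)(c)=-\overline{l_2}(x,c)$ and using the antisymmetry of $\overline{l_3}$ to reorder the arguments, the vanishing of this sum is precisely $(R_6)$. Thus $d_{\overline{L_0}}\overline{l_3}=0$.

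The main obstacle is not the cocycle identity itself, which is a sign bookkeeping exercise. It is making sure that the transported $\overline{l_3}$ really satisfies $(R_6)$, i.e. that $\overline{L}$ is a Lie 2-algebra. If one does not simply trust Proposition~\ref{quasi-iso}, this requires checking that the correction terms involving $\sigma$ cancel using $(R_2)$, $(R_3)$, $(R_5)$ of $L$. I would rely on the Appendix A construction for this. I would also remark that the cohomology class $[\overline{l_3}]\in H^3(\overline{L_0},\overline{L_{-1}})$ is independent of the choices of complements up to coboundaries, though that is not required for the statement.
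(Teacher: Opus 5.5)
Your proof follows the paper's route: pass to the skeletal model $\overline{L}$ of Appendix A and apply the preceding proposition. Its hypotheses are automatic because $\overline{l_1}=0$, and the cocycle identity is a rewriting of $(R_6)$.

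There is, however, a sign error in your elaboration, inherited from the convention $p(x)(a)=l_2^m(a,x)$ of Section 3. With this $p$, relation $(R_5)$ for $\overline{L}$ yields $p(\overline{l_2^p}(x,y))=-[p(x),p(y)]$. So $p$ is an anti-homomorphism (a right action), not a Lie algebra action. Moreover, $(R_6)$ then says that the bracket part $\sum_{i<j}(-1)^{i+j}\overline{l_3}(\overline{l_2^p}(x_i,x_j),\dots)$ \emph{equals} the action part $\sum_i(-1)^{i+1}p(x_i)\overline{l_3}(\dots,\widehat{x_i},\dots)$. But $d_{\overline{L_0}}\overline{l_3}=0$ requires it to equal the negative of the action part. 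As written, neither your ``exactly'' nor your ``precisely'' holds.

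Both claims become true verbatim once you set $p(x)(a)=\overline{l_2^m}(x,a)$. This is the convention $r(x)(a)=l_2^m(x,a)$ used for the crossed-module correspondence in Section~\ref{strict}. With it, $(R_5)$ is exactly $p(\overline{l_2^p}(x,y))=[p(x),p(y)]$. The right-hand side of $(R_6)$ is, term by term, the action part, and its left-hand side is, term by term, minus the bracket part. Hence $(R_6)$ is equivalent to $d_{\overline{L_0}}\overline{l_3}=0$, with no reordering of the arguments of $\overline{l_3}$ needed.
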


\begin{proof} In Appendix A, we show that $(L_{-1}\oplus L_0,l_1,l_2,l_3)$ is quasi-isomorphic to a skeletal Lie 2-algebra $(\overline{L_{-1}}\oplus\overline{L_0}, \overline{l_1}=0, \overline{l_2}, \overline{l_3})$. Now, using the preceding proposition for $\overline{L}$, $\overline{l_1}=0$ implies that $\overline{l_3}$ is a 3-cocycle for the Chevalley-Eilenberg cohomology of
$\overline{L_0}$ with values in $\overline{L_{-1}}$.
\end{proof}

This result enables us to recover in a more explicit way the ``equivalence'' of \cite{BaCra} and \cite{Cran} of a Lie 2-algebra to a skeletal one, and thus also to specify via our 3-cocycle their classification of Lie 2-algebras since skeletal Lie 2-algebras are classified by quadruples $(\mathfrak{g}, V, r, [c])$ with $\mathfrak{g}$ a Lie algebra,
$V$ a vector space, $r$ an action of $\mathfrak{g}$ on $V$ and $[c]$ the class in $H^3(\mathfrak{g}, V)$ of a 3-cocycle $c$ in $C^k(\mathfrak{g}, V)$. \par
On the other hand, F. Wagemann obtains in \cite{Wa1} and \cite{Wa2} an explicit classification of Lie algebra crossed-modules and proves that this classification is compatible with the classification of \cite{BaCra} and \cite{Cran}. \par 
More precisely, starting with a Lie algebra crossed module, seen as a strict Lie 2-algebra
$(L_{-1}\oplus L_0,l_1,l_2,l_3=0)$, F. Wagemann obtains a skeletal Lie 2-algebra $(\overline{L_{-1}}\oplus \overline{L_0}, \overline{l_1}=0, \overline{l_2}, \overline{l_3}^W)$ via the following diagram, where $\overline{l_2}$ is obtained in a unique way and $\overline{l_3}^W$ remains to be defined :  
$$\begin{matrix}       
&L_{-1}  &\stackrel{l_1}{\longrightarrow} & L_0&\\
&&&&\\
&\Bigg\uparrow \Phi_{1,-1} &  & \Bigg\uparrow \Phi_{1,0}&\\
&&&&\\
&\overline{L_{-1}}= \ker(l_1) &\stackrel{\overline{l_1}=0}{\longrightarrow}&\overline{L_0}= L_0/\mathrm{im} (l_1)&\\

\end{matrix}$$
\par 
Note that $\Phi_{1,-1}=\iota$ is the natural inclusion and the map $\Phi_{1,0}=\sigma$ is a linear section of the quotient map associating to an element $x$ in $L_0$ his class $\overline{x}$ in $\overline{L_0}$. The relation $(A_2)$ gives now explicitly $\Phi_2$ :
$$l_1(\Phi_2(\overline{x}, \overline{y}))=\sigma(\overline{l_2^p}(\overline{x}, \overline{y}))-l_2^p(\sigma(\overline{x}), \sigma(\overline{y}))$$
and if $\gamma$ denotes the 3-cocycle $\overline{l_3}^W$ obtained through this construction, the relation $(A_4)$ yields
$$\gamma=-(d_{\overline{L_0}})\Phi_2$$
where $d_{\overline{L_0}}$ is here a formal Chevalley-Eilenberg differential applied to the cochain $\Phi_2$, which has values in $L_{-1}$ instead of $\overline{L_{-1}}$. (Note that $L_{-1}$ is, in general, not a $\overline{L_{0}}$-module.)
\par
Via this construction, F. Wagemann associates to each Lie algebra crossed module a 3-cocycle $\gamma$, which is the key ingredient of his classification. 
\par
Let us link our construction explicitly to the classification of F. Wagemann via the following proposition. 
\begin{prop}
Let  a Lie algebra crossed module be given as a strict Lie 2-algebra $L=(L_{-1}\oplus L_0,l_1,l_2,l_3=0)$.
Denote by $\Phi=(\Phi_0,\Phi_1)$ the Lie 2-morphism of F. Wagemann [Wa2] described above and by 
$F=(F_1,F_2)$ the Lie 2-morphism described in Appendix A, and denote by $\gamma$
the 3-cocycle of F. Wagemann and by $\overline{l_3}$ the 3-cocycle associated to $\overline{L}$. Then the classes 
of $\gamma$ and $\overline{l_3}$ coincide in the $3$-th Lie algebra cohomology space $H^3(\overline{L_0},\overline{L_{-1}})$ corresponding to the Chevalley-Eilenberg differential of $\overline{L_0}$ with values in $\overline{L_{-1}}$.
\end{prop}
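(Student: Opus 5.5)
The plan is to compare the two skeletal models directly through the defining relations $(A_2)$ and $(A_4)$ of a Lie 2-morphism, rather than composing quasi-isomorphisms. Both $\Phi$ and $F$ are Lie 2-morphisms from a skeletal Lie 2-algebra into $L$. I will first check that they have the same underlying data:
\begin{itemize}
\item the spaces $\overline{L_{-1}}=\ker(l_1)$ and $\overline{L_0}=L_0/\mathrm{im}(l_1)$;
\item the inclusion $\iota$ as first component in degree $-1$;
\item the same induced bracket $\overline{l_2}$ and the same action of $\overline{L_0}$ on $\overline{L_{-1}}$.
\end{itemize}
For the last point, $\overline{l_2}$ is forced by $(A_1)$–$(A_3)$ once $\overline{l_1}=0$. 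The action $\overline{x}\cdot a=l_2(\sigma(\overline{x}),a)$ is independent of the section, because $l_2(l_1(b),a)=l_2(b,l_1(a))=0$ for $a\in\ker l_1$ by $(R_3)$. If the construction of Appendix A uses a different section $\sigma'=F_{1,0}$, I write $\sigma'-\sigma=l_1\circ h$ for some linear map $h:\overline{L_0}\to L_{-1}$. Such an $h$ exists since $\sigma'-\sigma$ takes values in $\mathrm{im}(l_1)$.

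Next I extend the Chevalley--Eilenberg differential formally to cochains with values in $L_{-1}$: $\overline{L_0}$ acts via $\sigma$ (or $\sigma'$) and the bracket term uses $\overline{l_2}$. This is the "formal" $d_{\overline{L_0}}$ appearing in Wagemann's formula $\gamma=-d_{\overline{L_0}}\Phi_2$. Since $l_3=0$ in $L$ and $\overline{l_1}=0$, relation $(A_4)$ for $F$ expresses $\overline{l_3}$ in the same way, namely $\overline{l_3}=-d'_{\overline{L_0}}F_2$, where $d'$ is the formal differential built from $\sigma'$. I will subtract the two instances of $(A_2)$:
$$l_1\big(F_2-\Phi_2\big)(\overline{x},\overline{y})=(\sigma'-\sigma)\overline{l_2}(\overline{x},\overline{y})-l_2(\sigma'\overline{x},\sigma'\overline{y})+l_2(\sigma\overline{x},\sigma\overline{y}).$$
Expanding $\sigma'=\sigma+l_1h$ and using $(R_2)$, $(R_3)$, the right-hand side should equal $l_1$ applied to
$$h(\overline{l_2}(\overline{x},\overline{y}))-l_2(\sigma\overline{x},h\overline{y})-l_2(h\overline{x},\sigma\overline{y})-l_2(l_1h\overline{x},h\overline{y}).$$
Call this expression $C(\overline{x},\overline{y})$. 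Then $\delta:=F_2-\Phi_2-C$ satisfies $l_1\circ\delta=0$, so $\delta$ is a genuine $2$-cochain in $C^2(\overline{L_0},\overline{L_{-1}})$.

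Finally I compute
$$\overline{l_3}-\gamma=-d'_{\overline{L_0}}F_2+d_{\overline{L_0}}\Phi_2=-d_{\overline{L_0}}\delta+R,$$
where $R$ collects the terms coming from $C$ and from the difference $d'-d$ (terms of the form $l_2(l_1h\overline{x},\cdot)$). The goal is to show that $R=0$, using $(R_2)$, $(R_3)$, $(R_5)$ with $l_3=0$ and the Jacobi identity in $L_0$. This would give $\overline{l_3}-\gamma=-d_{\overline{L_0}}\delta\in B^3(\overline{L_0},\overline{L_{-1}})$, and hence equality of the classes in $H^3(\overline{L_0},\overline{L_{-1}})$.

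The main obstacle is this cancellation of $R$: it is a bookkeeping-heavy verification that the "correction" cochain $C$ exactly absorbs the change of section in the formal differential. If Appendix A in fact uses the same section as Wagemann, then $h=0$, $C=0$ and the step is trivial. Otherwise I would try to avoid the computation by precomposing with the strict automorphism of $L$ given by $(\mathrm{id},\mathrm{id})$ together with the $2$-component $-C$, or equivalently by modifying $F$ via a Lie 2-homotopy. This would reduce to the case $\sigma'=\sigma$, where the difference $F_2-\Phi_2$ is already $\ker l_1$-valued and the result is immediate from $(A_4)$.
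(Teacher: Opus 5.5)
There is a genuine gap, and it starts with your first sentence: you take $F$ to be, like $\Phi$, a morphism $\overline{L}\to L$ whose linear parts are $\iota$ and a second section $\sigma'$. The morphism of Appendix A goes the other way, $F:L\to\overline{L}$. Its components are:
\begin{itemize}
\item $F_{1,0}$, the quotient map $L_0\to L_0/\mathrm{im}(l_1)$;
\item $F_{1,-1}$, the projection $L_{-1}=\ker(l_1)\oplus C\to\ker(l_1)$;
\item $F_2:L_0\times L_0\to\overline{L_{-1}}$, fixed by $(A_3)$ through $F_2(x,l_1(c))=F_{1,-1}(l_2^m(x,c))$.
\end{itemize}
So the identities you build on do not exist. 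Since the target has $\overline{l_1}=0$, relation $(A_2)$ for $F$ only defines $\overline{l_2}$; it contains no term $l_1(F_2)$ to subtract from Wagemann's $(A_2)$. Likewise, $(A_4)$ for $F$ expresses $\overline{l_3}\circ(F_{1,0}\times F_{1,0}\times F_{1,0})$ as a formal differential of $F_2$ over $L_0$, not as $\overline{l_3}=-d'_{\overline{L_0}}F_2$.

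Even within your own model, the decisive cancellation $R=0$ is only stated as a goal. The fallback does not work as described either. A morphism $L\to L$ with identity linear parts and second component $K$ must satisfy $l_1\circ K=0$ by $(A_2)$, and it is not strict if $K\neq 0$. Your $C$, however, lives on $\overline{L_0}$ and is in general not $\ker(l_1)$-valued.

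The missing idea is that the two morphisms are composable, which is exactly what the paper uses. Since $\Phi:(\overline{L},\overline{l_2},\gamma)\to L$ and $F:L\to(\overline{L},\overline{l_2},\overline{l_3})$, the composite $F'=F\circ\Phi$ is a Lie 2-morphism between two skeletal structures on the same spaces. By the composition rules $(C_1)$--$(C_3)$:
\begin{itemize}
\item $F'_{1,0}=F_{1,0}\circ\sigma=\mathrm{id}$;
\item $F'_{1,-1}=F_{1,-1}\circ\iota=\mathrm{id}$;
\item $F'_2=F_2\circ(\sigma\times\sigma)+F_{1,-1}\circ\Phi_2$.
\end{itemize}
The last map is $\ker(l_1)$-valued, hence a genuine element of $C^2(\overline{L_0},\overline{L_{-1}})$. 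The projection $F_{1,-1}$ is precisely what repairs the fact that $\Phi_2$ is only $L_{-1}$-valued, which is why $\gamma=-d_{\overline{L_0}}\Phi_2$ is merely formal.

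With identity linear parts, $(A_2)$ and $(A_3)$ for $F'$ force the two brackets $\overline{l_2}$ to coincide, and $(A_4)$ becomes
\[
\overline{l_3}(x,y,z)-\gamma(x,y,z)=\big(F'_2(\overline{l_2}(x,y),z)+\text{c.p.}\big)-\big(\overline{l_2}(x,F'_2(y,z))+\text{c.p.}\big).
\]
Up to the sign convention for the action, this is $d_{\overline{L_0}}F'_2$. Hence $[\overline{l_3}]=[\gamma]$ in $H^3(\overline{L_0},\overline{L_{-1}})$, with no comparison of sections and no bookkeeping with $(R_2)$, $(R_3)$, $(R_5)$.
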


\begin{proof} 
First, we notice that in this case, since $\overline{L}$ is a strict Lie 2-algebra,  $\overline{l_3}$ is given as
$$\overline{l_3}= d_{{L_0}} F_2\, ,$$
where $d_{L_0}$ denotes  again a formal Chevalley-Eilenberg differential (here it corresponds to the Lie algebra $L_0$) of the cochain $F_2$ with values in $\overline{L_{-1}}$ (not in ${L_{-1}}$).
Let us consider the composition $F'=(F'_1,F'_2)$ of $F=(F_1,F_2)$ with
 $\Phi=(\Phi_0,\Phi_1)$, i.e., 
 $$F'=F\circ \Phi \, ,$$
where $F'_{1,-1} : \overline{L_{-1}}\mapsto \overline{L_{-1}}$,$F'_{1,0} :
\overline{L_0}\mapsto \overline{L_0}$ and 
$F'_2: \overline{L_0}\times \overline{L_0}\mapsto \overline{L_{-1}}$
 The relations $(C_1)$ and $(C_2)$ for a composition of Lie 2-algebra morphisms implies that $F'_{1,-1}$ and $F'_{1,0}$ are the identity of $\overline{L_{-1}}$ and $\overline{L_0}$ respectively, whereas the relation $(C_3)$ says that
 $$d_{\overline{L_0}} F_2' = \overline{l_3}-\gamma \, ,$$
 i.e., $\overline{l_3}$ and $\gamma$ differ by a coboundary in $B^3(\overline{L_0},\overline{L_{-1}}).$
 \end{proof}
We conclude that the 3-cocycle we associate (compare Appendix A) yields the same classification as the one found by J. Baez and A. Crans resp. by F. Wagemann.

\section{Multi Cartan calculus}

In this section the generalisation of some standard formulas of the calculus of vector fields and differential forms to multivector fields is recalled.\\

Let $M$ always be a smooth manifold. We will denote by $\mathfrak{X}^\bullet(M)$ the graded vector space given degree wise by $\mathfrak{X}^n(M)=\G(\wedge^n TM)$ and
by $\Omega^\bullet(M)$  the graded vector space given degree wise by
$\Omega^n(M)=\G(\wedge^nT^*M)$. 

\subsection{Multicontraction operator and multi Lie derivative}\hfill

If $X$ is in $\mathfrak{X}^1(M)$, the usual contraction $\iota_{X}$ is defined by
$$\iota_{X}(\a)=\a(X,..)$$
for $\a$ in $\Omega^\bullet(M)$ with the convention that $\iota_{X}(\a)=0$
if $\a$ is a function in $\Omega^0(M)=C^{\infty}(M)$. This contraction or inner product can be extended to a multicontraction operator for multivector fields in the following way.

\begin{defn}
 Let $v$ be a decomposable multivector field in $\mathfrak{X}^n(M)$, that is, $v=X_1\wedge \ldots\wedge X_n$ with $X_1,\ldots,X_n$ in $\mathfrak{X}^1(M)$. For $\alpha$ in $\Omega^{\bullet}(M)$, we define
 $$\iota_v(\a)=\iota_{X_n}\ldots \iota_{X_1}(\a)=\a(X_1,\ldots,X_n,...)$$
 with the convention that $\iota_v(\a)=0$ if $\a$ is in $\Omega^p(M)$ with $p<n$.
 \end{defn}
 
 The corresponding interior product $\iota^n:\mathfrak{X}^n(M)\times \Omega^p(M)\mapsto \Omega^{p-n}(M)$ (for $p\geq n$)
 is then extended to a well-defined operator $\iota_v$ by $C^{\infty}(M)$-linearity for any multivector field $v$, even if $v$ is not a decomposable
 multivector field.
 
 It is well-known that, for any $X$ in $\mathfrak{X}^1(M)$, the usual Lie derivative $\mathcal{L}_X$ satisfies Cartan's magic formula 
 $$\mathfrak{L}_{X}=d\circ \iota_{X} +\iota_{X}\circ d.$$
   The notion of Lie derivative can be extended to multivector fields
  just  by requiring the analogue of the Cartan's magic rule.
   \begin{defn}
       Let $\a$ be in $\Omega^{\bullet}(M)$. For any multivector field $v$ in $\mathfrak{X}^{\bullet}(M)$, we define the multi Lie derivative $\mathcal{L}_v$ by 
       $$\mathcal{L}_{v}(\a)= d \iota_v(\a)-(-1)^{|v|} \iota_v d(\a)$$
       where $|v|=n$ if $v$ is in $\mathfrak{X}^n(M)$.
       
   \end{defn}

 \subsection{Multi Cartan commutation rules}
 
Multi Cartan commutation rules can be written in complete generality for multivector fields. 
In our case, we will have to use only the following proposition.
 \begin{prop}
     Let $X$ and $Y$ be in $\mathfrak{X}^1(M)$ and let $v$ be in $\mathfrak{X}^2(M)$, and denote by $[\,,\,]_S$ the usual Schouten bracket. We have
     \begin{itemize}
     \item $\iota_{X}\circ \iota_{Y}+\iota_{Y}\circ \iota_{X}=0$
     \item $\iota_{X}\circ  \iota_{v}-\iota_{v}\circ \iota_{X}=0$
     \item $\mathfrak{L}_{X}\circ \iota_{Y} -\iota_{Y} \circ \mathfrak{L}_{X}=\iota_{[X,Y]_S}$
     \item $\mathfrak{L}_{X}\circ \iota_{v} -\iota_{v} \circ \mathfrak{L}_{X}=\iota_{[X,v]_S}$
     \item $\mathfrak{L}_{X}\circ \mathfrak{L}_{Y} -
     \mathfrak{L}_{Y}\circ \mathfrak{L}_{X}=\mathfrak{L}_{[X,Y]_S}$
     \item  $\mathfrak{L}_{v} \circ \mathfrak{L}_X-\mathfrak{L}_{X} \circ \mathfrak{L}_v = \mathfrak{L}_{[v,X]_S}$
     \item 
     $ 
         \mathfrak{L}_{X\wedge Y}  = 
     \mathfrak{L}_Y\circ \iota_X-\iota_Y\circ\mathfrak{L}_X \\
      {\color{white}\mathfrak{L}_{X\wedge Y} } =  \iota_X\circ\mathfrak{L}_Y-\iota_Y\circ\mathfrak{L}_X-\iota_{[X,Y]_S} \\
    {\color{white} \mathfrak{L}_{X\wedge Y} } =  \mathfrak{L}_Y\circ \iota_X-\mathfrak{L}_X\circ \iota_Y+\iota_{[X,Y]_S}.$ 
    \end{itemize}
 \end{prop}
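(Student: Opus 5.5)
The plan is to reduce every identity to the classical Cartan calculus for vector fields, namely $\iota_X\iota_Y+\iota_Y\iota_X=0$, $\mathfrak{L}_X=d\iota_X+\iota_X d$, $[\mathfrak{L}_X,\iota_Y]=\iota_{[X,Y]}$, $[\mathfrak{L}_X,\mathfrak{L}_Y]=\mathfrak{L}_{[X,Y]}$ and $[\mathfrak{L}_X,d]=0$. Each statement is local and every operator involved is $C^\infty(M)$-linear (or at least $\mathbb{R}$-linear and local) in the multivector argument. So it suffices to verify the bivector identities on decomposable $v=Y\wedge Z$, where $\iota_v=\iota_Z\iota_Y$, and then extend by linearity. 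For the extension, note that $\iota_v$ is $C^\infty(M)$-linear in $v$, that both sides of each Lie derivative identity are $\mathbb{R}$-linear in $v$, and that locally $v$ is a finite sum $\sum f_{ij}\,\partial_i\wedge\partial_j$. I will write $Y\wedge Z$ with $f$ absorbed into $Y$, i.e.\ $fY\wedge Z$, so that decomposable elements genuinely span.

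The first three items are classical. The second follows from applying the first twice: $\iota_X\iota_Z\iota_Y=\iota_Z\iota_Y\iota_X$, since two transpositions give sign $+1$. For the fourth, use the derivation property of the Schouten bracket, $[X,Y\wedge Z]_S=[X,Y]\wedge Z+Y\wedge[X,Z]$. Then
$$[\mathfrak{L}_X,\iota_Z\iota_Y]=[\mathfrak{L}_X,\iota_Z]\iota_Y+\iota_Z[\mathfrak{L}_X,\iota_Y]=\iota_{[X,Z]}\iota_Y+\iota_Z\iota_{[X,Y]},$$
and this equals $\iota_{Y\wedge[X,Z]}+\iota_{[X,Y]\wedge Z}$. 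The fifth item is classical.

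For the last item I first expand $\mathfrak{L}_{X\wedge Y}=d\iota_Y\iota_X-\iota_Y\iota_X d$ using the definition with $|v|=2$. Inserting $\mathfrak{L}_Y=d\iota_Y+\iota_Y d$ and $\mathfrak{L}_X=d\iota_X+\iota_X d$ gives
$$\mathfrak{L}_Y\iota_X-\iota_Y\mathfrak{L}_X=d\iota_Y\iota_X+\iota_Yd\iota_X-\iota_Yd\iota_X-\iota_Y\iota_Xd,$$
which is the first expression. The other two expressions then follow by moving $\iota_X$ past $\mathfrak{L}_Y$, respectively $\iota_Y$ past $\mathfrak{L}_X$, using the third item: $\mathfrak{L}_Y\iota_X=\iota_X\mathfrak{L}_Y+\iota_{[Y,X]}$ and $\iota_Y\mathfrak{L}_X=\mathfrak{L}_X\iota_Y-\iota_{[X,Y]}$.

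The sixth item is the main obstacle, because of signs. Using $[v,X]_S=-[X,v]_S$ (the Schouten symmetry for degrees $2$ and $1$), I must show $[\mathfrak{L}_X,\mathfrak{L}_v]=\mathfrak{L}_{[X,v]_S}$. Since $\mathfrak{L}_X$ commutes with $d$, we get
$$[\mathfrak{L}_X,\mathfrak{L}_v]=d[\mathfrak{L}_X,\iota_v]-[\mathfrak{L}_X,\iota_v]d=d\iota_{[X,v]_S}-\iota_{[X,v]_S}d.$$
This is exactly $\mathfrak{L}_{[X,v]_S}$, because $[X,v]_S$ is again a bivector. So the difficulty is only one of bookkeeping: the sign convention relating $[v,X]_S$ to $[X,v]_S$, and checking that this convention matches the one fixed for the Schouten bracket. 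I would verify it on $v=Y\wedge Z$ by direct expansion as a consistency check.
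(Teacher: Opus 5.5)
Your proof is correct. The paper gives no proof of this proposition: it is recalled as standard multi-Cartan calculus, and the only thing the paper fixes is the insertion convention $\iota_{X_1\wedge\cdots\wedge X_n}=\iota_{X_n}\cdots\iota_{X_1}$. Your reduction to the classical vector-field identities on the locally spanning decomposable bivectors $fY\wedge Z$ is therefore a sound, self-contained verification. You also applied that convention correctly in the last item, which is the only item that actually depends on it.

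One simplification: the sixth item needs no separate check on $Y\wedge Z$. Once the fourth item holds for all bivectors, your computation $[\mathfrak{L}_X,\mathfrak{L}_v]=d\iota_{[X,v]_S}-\iota_{[X,v]_S}d=\mathfrak{L}_{[X,v]_S}$ is purely formal. The only sign it uses, $[v,X]_S=-[X,v]_S$, is exactly the antisymmetry the paper imposes on $\nu_2^m$ in Section~5, and it agrees with the usual graded antisymmetry of the Schouten bracket in bidegree $(2,1)$.
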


\begin{rema}
There is obviously a choice in the definition of the multicontraction $\iota$ concerning the order in which the components are inserted into the form. In this paper, we use the same convention as in \cite{Rog} and \cite{Mi}. It is important to note that N.L. Delgado \cite{Del} uses a different convention.
\end{rema}

\section{Action of a Lie 2-algebra (2-action)}

Following N. Delgado, a Lie 2-algebra structure is given on the sum of the spaces of 2-vector fields and of vector fields on a manifold $M$, as well as the notion of a 2-action $\rho$ of this algebra on $M$. We observe that this is not a "homotopy invariant" notion, i.e., a quasi-isomorphism of Lie 2-algebras does not preserve 2-actions.\\

Recall that, in the usual setting, an (infinitesimal) action of a Lie algebra $\mathfrak{g}$ on a manifold $M$ is given by a Lie algebra morphism 
$\mathfrak{g}\longrightarrow\mathfrak{X}^1(M)$. We now want to extend this definition to the case when $\mathfrak{g}=\mathfrak{g}_{-1}\oplus\mathfrak{g}_{0}$ a Lie 2-algebra.\\ 

We consider the following Lie 2-algebra structure on vector fields, including the 2-vector fields (compare \cite{Del}) :

\begin{prop}{(Structure of Lie 2-algebra on $\mathfrak{X}^2(M)\oplus\mathfrak{X}^1(M)$)}
 
Let $M$ be a manifold, $\displaystyle (\mathfrak{X}^2(M)\oplus\mathfrak{X}^1(M), \nu_1, \nu_2, \nu_3)$ is a Lie 2-algebra with the following brackets : 
 \begin{enumerate}
     \item $\nu_1: \mathfrak{X}^2(M)\rightarrow\mathfrak{X}^1(M)$ is the zero map
     \item $\nu_2$ is the Schouten bracket, more precisely
     \begin{enumerate}
     \item ${\nu_2^p}: \mathfrak{X}^1(M)\times\mathfrak{X}^1(M) \rightarrow\mathfrak{X}^1(M)$ is antisymmetric and ${\nu_2^p}(X,Y)=[X ,Y]_S$ is the usual bracket on vector fields (i.e., the Schouten bracket on $\mathfrak{X}^1(M)$)
     \item ${\nu_2^m}:\mathfrak{X}^2(M)\times\mathfrak{X}^1(M)\rightarrow \mathfrak{X}^2(M)$ is given by ${\nu_2^m}(Z,X\wedge Y)=[Z,X\wedge Y]_S=[Z,X]_S\wedge Y+X\wedge[Z,Y]_S$, with ${\nu_2^m}(Z,X\wedge Y)=-{\nu_2^m}(X\wedge Y,Z)$.

     Simplifying the notations, we will often write $\nu_2=[\ , \ ]_S$ instead of $\nu_2^p$ and $\nu_2^m$.
     \end{enumerate}
     \item $\nu_3: \bigwedge^3(\mathfrak{X}^1(M))\rightarrow \mathfrak{X}^2(M)$ is the zero map.
 \end{enumerate}
 \end{prop}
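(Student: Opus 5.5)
Since $\nu_1=0$ and $\nu_3=0$, I will check the relations $(R_1)$--$(R_6)$ of the explicit characterization of Lie 2-algebras one at a time. With these two maps vanishing, almost all relations collapse to identities of the form $0=0$ or to Jacobi-type identities for the Schouten bracket.

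\textbf{Trivial relations.} Relation $(R_1)$ holds trivially, since $\nu_1$ is the zero map. Relations $(R_2)$ and $(R_3)$ hold because every term contains $\nu_1$ and hence both sides vanish. Relation $(R_6)$ holds because every term contains $\nu_3$ and hence both sides vanish. This leaves only $(R_4)$ and $(R_5)$, and both have left-hand side zero.

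\textbf{Relation $(R_4)$.} It reads
$$0=[[X,Y]_S,Z]_S+[[Y,Z]_S,X]_S+[[Z,X]_S,Y]_S$$
for vector fields $X,Y,Z$. This is the usual Jacobi identity for the Lie bracket of vector fields.

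\textbf{Relation $(R_5)$.} It reads
$$0=\nu_2([X,Y]_S,a)+\nu_2(\nu_2(Y,a),X)+\nu_2(\nu_2(a,X),Y)$$
for $a\in\mathfrak{X}^2(M)$. Using the antisymmetry convention $\nu_2^m(a,X)=-\nu_2^m(X,a)$, this is equivalent to
$$[[X,Y]_S,a]_S=[X,[Y,a]_S]_S-[Y,[X,a]_S]_S .$$
In words, $X\mapsto [X,\cdot\,]_S$ is a representation of $\mathfrak{X}^1(M)$ on $\mathfrak{X}^2(M)$.

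I will prove this identity directly. The map $[X,\cdot\,]_S$ on bivectors is given on decomposables by $[X,Y'\wedge Z']_S=[X,Y']_S\wedge Z'+Y'\wedge[X,Z']_S$; equivalently, it is the Lie derivative $\mathcal{L}_X$ acting on $\Gamma(\wedge^2TM)$. Both sides of the identity are additive in $a$, and both satisfy the same Leibniz-type behaviour under multiplication of $a$ by functions. It therefore suffices to check the identity on decomposable $a=U\wedge V$.

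On such $a$, the left-hand side is the derivation extension of $\mathrm{ad}_{[X,Y]}$ applied to $U\wedge V$. The right-hand side is the commutator of two derivation extensions, which is itself the derivation extension of the commutator. Expanding,
$$[X,[Y,U\wedge V]]-[Y,[X,U\wedge V]]$$
produces the terms $([X,[Y,U]]-[Y,[X,U]])\wedge V+U\wedge([X,[Y,V]]-[Y,[X,V]])$. The mixed terms $[Y,U]\wedge[X,V]$ and $[X,U]\wedge[Y,V]$ cancel pairwise. Applying the Jacobi identity to the remaining terms turns them into $[[X,Y],U]\wedge V+U\wedge[[X,Y],V]$, which is the left-hand side.

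Alternatively, one could simply invoke the graded Jacobi identity of the Schouten bracket, or the identity $\mathcal{L}_{[X,Y]}=[\mathcal{L}_X,\mathcal{L}_Y]$ on tensor fields.

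\textbf{Main obstacle.} The only delicate point is sign bookkeeping with the convention $\nu_2^m(Z,a)=-\nu_2^m(a,Z)$. The graded Schouten bracket has its own sign rule $[a,X]_S=-(-1)^{(2-1)(1-1)}[X,a]_S=-[X,a]_S$, so the two conventions agree here and no extra signs appear. I would verify this carefully so that $(R_5)$ reduces exactly to the representation identity above rather than to a version with a wrong sign.
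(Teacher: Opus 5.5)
Your proposal is correct and follows the paper's proof. In both, $(R_1)$–$(R_3)$ and $(R_6)$ are trivial because $\nu_1=\nu_3=0$, $(R_4)$ is the Jacobi identity for vector fields, and $(R_5)$ is the Jacobi identity for the Schouten bracket in the mixed vector/bivector case; your explicit check of $(R_5)$ on decomposable bivectors, with the sign convention verified, spells out what the paper asserts in one line.
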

\begin{proof}
It easily follows that the above brackets indeed define a Lie 2-algebra.
The relations $(R_1), (R_2)$ and $(R_3)$ are satisfied since $\nu_1=0$, and $(R_6)$ is obvious since $\nu_3=0$. The relations $(R_4)$ and $(R_5)$ are, respectively, the Jacobi identities on $\mathfrak{X}^1(M)$ and on $\mathfrak{X}^2(M)\oplus\mathfrak{X}^1(M)$ .
\end{proof}

\begin{rema}\label{multifields}
This construction can be generalised to a Lie $n$-algebra structure on $\mathfrak{X}^n(M)\oplus \ldots \oplus \mathfrak{X}^2(M)\oplus\mathfrak{X}^1(M)$ (all brackets vanishing except $\nu_2$ being the Schouten bracket).
\end{rema}

Now, specialising \cite{Del}, we give the crucial

\begin{defn}{(2-action)}

An \emph{(infinitesimal) action of a Lie 2-algebra} $\mathfrak{g}=(\mathfrak{g}_{-1}\oplus\mathfrak{g}_{0}, l_1, l_2, l_3)$ on a manifold $M$ (or in short a \emph{2-action} of $\mathfrak{g}$ on $M$) is a Lie 2-morphism   
$$\rho: \mathfrak{g}_{-1}\oplus\mathfrak{g}_{0} \longrightarrow \mathfrak{X}^2(M)\oplus\mathfrak{X}^1(M) .$$

\medskip

i.e.
using the following decomposition of $\rho$,
\begin{itemize}
\item[(i)]$\rho_{1,0}:\mathfrak{g}_0 \longrightarrow \mathfrak{X}^1(M)$
\item[(ii)] $\rho_{1,-1}:\mathfrak{g}_{-1} \longrightarrow \mathfrak{X}^2(M)$
\item[(iii)] $\rho_{2}:\bigwedge^2\mathfrak{g}_0 \longrightarrow \mathfrak{X}^2(M)$
\end{itemize}
one has, $\forall a\in \mathfrak{g}_{-1}$, $\forall x, y, z\in \mathfrak{g}_0$, the following identities
 \begin{itemize}
     \item[$(A_1)$] $\displaystyle\rho_{1,0}(l_1(a))=0$, 
     \item[$(A_2)$] $\rho_{1,0}(l_2^p(x,y))=
     [\rho_{1,0}(x),\rho_{1,0}(y)]_S$
     \item[$(A_3)$] $\rho_{1,-1}(l_2^m(a,x))=
     [\rho_{1,-1}(a),\rho_{1,0}(x)]_S+ \rho_2(l_1(a),x)$
     \item[$(A_4)$] $\big{(}\rho_2(l_2^p(x,y),z)+c.p.\big{)}+ \rho_{1,-1}(l_3(x,y,z))=
     [\rho_{1,0}(x),\rho_2(y,z)]_S+ c.p.$
     \end{itemize}
\end{defn}
 \begin{rema}\hfill

 \begin{enumerate}
 \item The given identities follow, of course, from the definition of the Lie 2-algebra structure on $\mathfrak{X}^2(M)\oplus\mathfrak{X}^1(M)$ and the definition of a Lie 2-algebra morphism.
     \item The relation $(A_2)$ says that $\rho_{1,0}$ preserves the brackets even if $\mathfrak{g}_0$ is not a Lie algebra.
     \item  Choosing $\rho_{1,-1}=0$, $\rho_2=0$ and $\rho_{1,0}$ satisfying $(A_1)$ and $(A_2)$
     corresponds to a Lie algebra action of (the Lie algebra) $\mathfrak{g}/\mathrm{im}\, l_1$ on $M$.
     \item Following Remark \ref{multifields} this definition can be generalised to a definition of a \emph{n-action} as a Lie $n$-algebra morphism : $\displaystyle\bigoplus_{k=1-n}^{0}\mathfrak{g}_k \longrightarrow \bigoplus_{k=n}^{1}\mathfrak{X}^k(M)$.
     
 \end{enumerate}
\end{rema}

\begin{defn}
We say that the 2-action is \emph{strict} if $\rho_2=0$. This corresponds to the conditions $(A_1)$ and $(A_2)$ above, and simplifies the other two conditions to

$(A_3)$  $\rho_{1,-1}(l_2^m(a,x))=
     [\rho_{1,-1}(a),\rho_{1,0}(x)]_S$ resp. $(A_4)$  $\rho_{1,-1}(l_3(x,y,z))=0.$   
\end{defn}
\begin{rema}
     If the 2-action is strict and $\mathfrak{g}_{-1}$ is ``small'', one has $\rho_{1,-1}=0$, for example in the case $\mathfrak{g}_{-1}=l_3(\bigwedge^3\mathfrak{g}_0)$.
     
 \end{rema}
The notion of an infinitesimal 2-action generalises the notion of an action of a Lie algebra :

\begin{prop} 
    A Lie algebra action of a Lie algebra $\mathfrak{g}_{0}$ on a manifold $M$ corresponds to a strict action of the Lie 2-algebra $\{0\}\bigoplus\mathfrak{g}_{0}$ on $M$.
\end{prop}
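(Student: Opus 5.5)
The plan is to unwind the definitions on both sides and check that the two notions coincide. First we regard the Lie algebra $\mathfrak{g}_0$ as a Lie 2-algebra as in the earlier Remark. We take $\mathfrak{g}_{-1}=\{0\}$ and $L_0=\mathfrak{g}_0$, with $l_2^p=[\,,\,]$ and $l_1=l_2^m=l_3=0$.

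Given a Lie algebra action, i.e. a Lie algebra morphism $\varphi:\mathfrak{g}_0\to\mathfrak{X}^1(M)$, we define $\rho_{1,0}=\varphi$, $\rho_{1,-1}=0$ (the only linear map on $\{0\}$) and $\rho_2=0$. We then verify the four identities of the definition of a 2-action in turn:
\begin{itemize}
\item $(A_1)$ holds trivially, since $l_1$ is defined on $\{0\}$ and so $\rho_{1,0}(l_1(a))=\rho_{1,0}(0)=0$.
\item $(A_2)$ reads $\varphi([x,y])=[\varphi(x),\varphi(y)]_S$, which is exactly the morphism property of $\varphi$.
\item $(A_3)$ involves only $a\in\mathfrak{g}_{-1}=\{0\}$, so both sides vanish.
\item $(A_4)$ has both sides equal to zero, since $\rho_2=0$ and $l_3=0$.
\end{itemize}
Hence $\rho$ is a Lie 2-morphism with $\rho_2=0$, that is, a strict 2-action.

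Conversely, let $\rho$ be a strict action of $\{0\}\oplus\mathfrak{g}_0$. Then $\rho_{1,-1}$ is necessarily zero and $\rho_2=0$ by strictness. Condition $(A_2)$ says precisely that $\rho_{1,0}:\mathfrak{g}_0\to\mathfrak{X}^1(M)$ is a Lie algebra morphism, i.e. a Lie algebra action.

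The two assignments are clearly mutually inverse, which gives the one-to-one correspondence. There is no real obstacle here. The only point needing care is that strictness is required: a non-strict 2-action could carry a nonzero $\rho_2:\bigwedge^2\mathfrak{g}_0\to\mathfrak{X}^2(M)$ subject only to $(A_4)$. Requiring $\rho_2=0$ removes exactly this extra datum.
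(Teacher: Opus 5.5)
Your proposal is correct and takes the same approach as the paper. The paper only says the result ``follows directly from the definitions''; you carry out that unwinding of $(A_1)$--$(A_4)$ explicitly in both directions, including the observation that strictness is exactly what removes the extra datum $\rho_2$.
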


\begin{proof}
The result
follows directly from the definitions.
\end{proof}

\begin{defn}
Let $(\mathfrak{g}, \mathfrak{h}, \tau, r)$ be a Lie algebra crossed module (which can be seen as a strict Lie 2-algebra). We say that $\rho$ is a \emph{crossed module action} of  the Lie algebra crossed module $(\mathfrak{g}, \mathfrak{h}, \tau, r)$ on a manifold $M$ if $\rho$ is a strict Lie 2-action. In other words, a crossed module action of a Lie algebra crossed module $(\mathfrak{g}, \mathfrak{h}, \tau, r)$
on a manifold $M$ is a crossed module morphism between $(\mathfrak{g}, \mathfrak{h}, \tau, r)$ and $\mathfrak{X}^2(M)\oplus\mathfrak{X}^1(M)$ (viewed as a Lie algebra crossed module).
\end{defn}

 \begin{prop}
     The quasi-isomorphism of Proposition \ref{quasi-iso} does not preserve 2-actions. 
 \end{prop}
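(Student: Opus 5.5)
This is a negative statement, so I will prove it with an explicit counterexample. I will exhibit a Lie 2-algebra $L$ with a 2-action $\rho$ on some manifold $M$ whose transport to the skeletal model $\overline{L}$ is not a 2-action in any natural sense. The structural reason is that in $\mathfrak{X}^2(M)\oplus\mathfrak{X}^1(M)$ we have $\nu_1=0$. Condition $(A_1)$ then forces $\rho_{1,0}$ to vanish on $\mathrm{im}\, l_1$ and leaves $\rho_{1,-1}$ unconstrained on all of $\mathfrak{g}_{-1}$. Passing to the skeletal model keeps only $\ker l_1$ in degree $-1$, so any information carried by $\rho_{1,-1}$ off $\ker l_1$ is lost. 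Conversely, a 2-action of $\overline{L}$ need not come from one of $L$.

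The simplest test case is $L_{-1}=L_0=\R$, $l_1=\mathrm{id}$, and all other brackets zero. This is a strict Lie 2-algebra, namely the crossed module $(\R,\R,\mathrm{id},0)$. Its skeletal model is $\overline{L}=\{0\}\oplus\{0\}$, and the quasi-isomorphism of Proposition \ref{quasi-iso} (Appendix A) is $F:\overline{L}\to L$, or a map in the reverse direction. I would then take $M=\R^2$ with coordinates $(q,p)$ and define
$$\rho_{1,0}=0,\qquad \rho_{1,-1}(1)=\partial_q\wedge\partial_p,\qquad \rho_2=0.$$
I check the four identities. $(A_1)$ holds since $\rho_{1,0}=0$. $(A_2)$ holds trivially. $(A_3)$ reads $0=[\partial_q\wedge\partial_p,0]_S+\rho_2(1,x)=0$, using $l_2^m=0$. $(A_4)$ holds since $l_2^p=0$, $l_3=0$ and $\rho_2=0$. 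So $\rho$ is a nonzero 2-action of $L$. On $\overline{L}=0$, the only 2-action is the zero one, and $\rho\circ F$ computed with Lemma \ref{compositionmorphismes} is zero. Hence the nontrivial action $\rho$ is not recovered, and distinct 2-actions of $L$ (for instance $\rho$ and $0$) collapse to the same 2-action of $\overline{L}$.

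The second direction shows that not every 2-action of the skeletal model lifts. I would take an example with $\mathrm{im}\,l_1\neq 0$ but $l_1$ not injective, for instance the crossed module coming from a Lie algebra and an ideal. The composition with the reverse morphism $G:L\to\overline{L}$, again computed by $(C_1)$–$(C_3)$, then gives $\rho'\circ G$ with $(\rho'\circ G)_{1,-1}=\rho'_{1,-1}\circ G_{1,-1}$. Any 2-action of $L$ obtained this way is determined by its restriction to $\ker l_1$.

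The main obstacle is making precise what "preserve" means, since quasi-isomorphic objects have no canonical bijection between their 2-actions. I would state it as follows: composition with the quasi-isomorphism $F$ (and with its quasi-inverse) does not induce a bijection between 2-actions of $L$ and of $\overline{L}$. Indeed, it is not even injective modulo anything natural, as the nonzero action above shows. I would also remark that $\rho$ is not homotopic to $0$ in any sense that would rescue invariance, because $\nu_1=0$ means the target has no homotopies in degree $-1$. A 2-homotopy between $\rho$ and $0$ would have to be a map $\mathfrak{g}_0\to\mathfrak{X}^2(M)$, and its effect on $\rho_{1,-1}$ involves $\nu_1=0$, so $\rho_{1,-1}$ is a homotopy invariant.
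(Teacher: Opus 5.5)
Your counterexample is correct and uses the same mechanism as the paper's proof. The paper assumes $\rho=\overline{\rho}\circ F$ and notes that $(C_2)$, i.e.\ $\rho_{1,-1}=\overline{\rho_{1,-1}}\circ F_{1,-1}$, fails whenever $\rho_{1,-1}\neq 0$ but $\rho_{1,-1}\vert_{\ker l_1}=0$. It cites Example 2b, where $\rho_{1,-1}$ is supported on the complement $\mathfrak{g}_{-1,0}$ of $\ker l_1$. Your choice $L_{-1}=L_0=\R$, $l_1=\mathrm{id}$ is the minimal instance of this, with $\ker l_1=0$ and $\overline{L}=0$, and it is simpler than the paper's example. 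Do fix the direction, though. The map of Appendix A is $F:L\to\overline{L}$ (here $F=0$), so the argument is not a computation of ``$\rho\circ F$''. It is the observation that the only 2-action of $\overline{L}$ is $0$, and $0\circ F=0\neq\rho$.

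Your closing remark is false, however. A homotopy $\theta:L_0\to\mathfrak{X}^2(M)$ between Lie 2-morphisms changes the degree-$0$ component by $\nu_1\circ\theta$. It changes the degree-$(-1)$ component by $\theta\circ l_1$, i.e.\ by the \emph{source} differential, not by $\nu_1$. In your example $l_1=\mathrm{id}$, so $\theta(1)=\partial_q\wedge\partial_p$ gives $\rho_{1,-1}-0=\theta\circ l_1$. The condition on the quadratic components is vacuous because $\wedge^2L_0=0$, so your $\rho$ is homotopic to $0$. In general only $\rho_{1,0}$ and $\rho_{1,-1}\vert_{\ker l_1}$ are homotopy invariants. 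This is why the paper remarks, right after this proposition, that $\rho$ is homotopic to $\rho\circ G\circ F$ for a quasi-inverse $G$. So the failure is purely strict, and you must drop two claims: that the correspondence is ``not even injective modulo anything natural'', and that no homotopy rescues invariance.

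Your ``second direction'' paragraph also does not show what it announces. By Lemma \ref{compositionmorphismes}, $\overline{\rho}\circ F$ is a 2-action of $L$ for every 2-action $\overline{\rho}$ of $\overline{L}$. Since $F_{1,0}$ and $F_{1,-1}$ are surjective, $(C_1)$--$(C_3)$ recover $\overline{\rho}$ from it. Precomposition with $F$ is therefore injective, and its only failure is non-surjectivity. That is exactly your first example and the paper's statement, and the computation in that paragraph is again this same direction.
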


\begin{proof}
Let $\left(\mathfrak{g}=\mathfrak{g}_{-1}\oplus\mathfrak{g}_{0}, l_1, l_2, l_3\right)$ be a Lie 2-algebra and denote by $\left(\overline{\mathfrak{g}}=\overline{\mathfrak{g}_{-1}}\oplus\overline{\mathfrak{g}_{0}}, \overline{l_1}, \overline{l_2}, \overline{l_3}\right)$ the associated skeletal Lie 2-algebra, obtained through the quasi-isomorphism in Appendix A. We denote by $F=(F_1,F_2)$ this quasi-isomorphism.
Assume there is a 2-action $\overline{\rho}$ such that the following diagram commutes

\[
\begin{tikzpicture}[>=latex, node distance=2.5cm, auto]
  \node (A) at (0,0) {$\left(\overline{\mathfrak{g}}=\overline{\mathfrak{g}_{-1}}\oplus\overline{\mathfrak{g}_{0}}, \overline{l_1}, \overline{l_2}, \overline{l_3}\right)$};
  \node (B) at (5,3) {$\mathfrak{X}^2(M)\oplus\mathfrak{X}^1(M)$};
  \node (C) at (0,3) {$\left(\mathfrak{g}=\mathfrak{g}_{-1}\oplus\mathfrak{g}_{0}, l_1, l_2, l_3\right)$};
  \draw[->] (A) -- node[above left] {$\overline{\rho}$} (B);
  \draw[->] (C) -- node[above right] {$\rho$}(B);
  \draw[->] (C) -- node[above left] {$F$}(A);
\end{tikzpicture}
\]

                           

Then the following relations must hold: 
\begin{itemize}
\item [$(C_1)$]  $\rho_{1,0}=\overline{\rho_{1,0}}\circ F_{1,0}$
   \item [$(C_2)$]  $\rho_{1,-1}=\overline{\rho_{1,-1}}\circ F_{1,-1}$
   \item [$(C_3)$] $\rho_2=\overline{\rho_2}\circ (F_{1,0}\times F_{1,0})+\overline{\rho_{1,-1}}\circ F_2.$
\end{itemize}
If $\overline{{\rho}_{1,-1}}{\vert_{\overline{\mathfrak{g}_{-1}}}}= 0 $ and ${\rho_{1,-1}}{\vert_{\ker l_1}} = 0 $, but ${\rho_{1,-1}}{\vert_ {\mathfrak{g}_{-1}}}\neq 0 $, relation $(C_2)$ fails. (See Appendix C, Example 5b, Remark \ref{quasiisomorphism}.) 
\end{proof}

\begin{rema} 
Using a quasi-inverse $G$ such that 
$F\circ G\simeq \mathrm{id}_\mathfrak{g}$ and 
$G\circ F\simeq \mathrm{id}_{\overline{\mathfrak{g}}}$ 
in the sense of Appendix A of \cite{LaLi2}, one has, of course, 2-actions 
$\bar{\rho}=\rho\circ G$ of $\overline{\mathfrak{g}}$, resp. $\tilde{\rho}=\bar{\rho}\circ F$ of $\mathfrak{g}$. 
Furthermore, the actions $\rho$ and $\tilde{\rho}$ are homotopic as  Lie 2-algebra morphisms. (The existence of $G$ is shown, e.g., in \cite{KrSc}, Lemma 6.12, upon observing that Lie 2-algebras form a full subcategory of $L_\infty$-algebras).
\end{rema}

\begin{rema} 
In the classical case, when $\mathfrak{g}_{0}$ is a Lie algebra and $M$ a smooth manifold, an (infinitesimal) action of $\mathfrak{g}_{0}$ on $M$ is a Lie algebra morphism $\mathfrak{g}_{0} \longrightarrow \mathfrak{X}^1(M)$. Since $\mathfrak{X}^1(M) \cong \mathsf{Der}(C^\infty(M))\subset\End(C^\infty(M))$, it is a Lie algebra morphism from $\mathfrak{g}_{0}$ to the Lie algebra $\End(C^\infty(M))$ with its usual Lie bracket coming from its associative multiplication. If $(M,\omega)$ is a symplectic manifold, $C^\infty(M)$ can be viewed as the space of observables.

Thus, for a 2-plectic manifold $(M,\omega)$ it is natural to interpret an (infinitesimal) 2-action $\rho$ as a morphism of Lie 2-algebras from $\mathfrak{g}=\mathfrak{g}_{-1}\oplus\mathfrak{g}_{0}$ to the endomorphism space of its ``observables''. 

This latter space of of observables on a 2-plectic $M$ should be of the form of an appropriate graded vector space $V=V_{-1} \oplus V_0$, its endomorphisms canonically carrying the structure of a Lie 2-algebra (see, e.g., \cite{LaLi2}).

 Upon naively choosing the observables as a subspace of $C^\infty(M)\oplus\Omega^1(M)$, we are naturally led to consider
 Lie 2-algebra morphisms from $\mathfrak{X}^2(M)\oplus\mathfrak{X}^1(M)$ to $\End(C^\infty(M)\oplus\Omega^1(M))$. Such morphisms exist and will be discussed in Appendix \ref{endo}.
\end{rema}

 
 \section{Lie 2-algebras of observables} 

 We define observable algebras on a 2-plectic manifold $(M,\o)$, in the sense of Baez and Rogers and in the sense of Delgado, and relate them mutually and to multivector fields on $M$.
 
\subsection{Hamiltonian vector fields and forms}

\begin{defn}
Given a $n$-plectic manifold $(M,\omega)$, a
  multivector field $v$ in $\mathfrak{X}^{\bullet}(M)$ is called \emph{multisymplectic} if $\mathcal{L}_v(\omega)=0$. We denote by $\mathfrak{X}^{\bullet}_{Sympl}(M)$ the set of multisymplectic multivector fields.
\end{defn} 

\begin{rema}
Let $(M,\o)$ be a $n$-plectic manifold.
    A multivector field $v$  in
    $\mathfrak{X}^{\bullet}(M)$ is a multisymplectic multivector field if and only if $\iota_v(\omega)$ is closed.
\end{rema} 

Let us  refine the notion of 2-action in the 2-plectic case. (Of course, this might easily be generalised to the $n$-plectic situation).

\begin{defn}
Let $(M,\omega)$ be a 2-plectic manifold.  A 2-action $\rho : \mathfrak{g}_{-1}\oplus\mathfrak{g}_{0} \longrightarrow \mathfrak{X}^2(M)\oplus\mathfrak{X}^1(M)$ is called \emph{2-plectic} if for all $x,y\in 
\mathfrak{g}_{0}$ and for all $a\in \mathfrak{g}_{-1}$ one has
$$\mathcal{L}_{\rho_{1,0}(x)}\omega=0, \,\, 
\mathcal{L}_{\rho_{1,-1}(a)}\omega=0, \,\, \hbox{and} \,\,  
\mathcal{L}_{\rho_{2}(x,y)}\omega=0 \, .$$
If only the first two Lie derivatives vanish,  then $\rho$ is called \emph{quasi-2-plectic}.
\end{defn} 
\begin{defn}
Given a $n$-plectic manifold $(M,\o)$,
a $(n{-}1)$-form $\alpha$  is \emph{Hamiltonian} if there exists a vector field $X_\alpha \in\mathfrak{X}^1(M)$ such that
$d\alpha =- \iota_{X_\alpha} \omega.$
The vector field $X_\alpha$ is called the \emph{Hamiltonian vector field corresponding to $\alpha$}.
We denote by $\Omega^{n-1}_{Ham}(M)$ the set of Hamiltonian $(n-1)$-forms on $M$.
\end{defn}

\subsection{Rogers's Lie 2-algebra of observables} \hfill

\bigskip

We recall the Lie n-algebra $\mathsf{L^n(M,\omega)}$ of J. Baez and C. Rogers associated with an $n$-plectic manifold and specialise its brackets in the case $n=2$.

\begin{thm}[\cite{Rog}] Given a $n$-plectic manifold, there is an associated Lie $n$-algebra, denoted $\mathsf{L^n(M,\omega)}$ and called the Lie $n$-algebra of observables.
\end{thm}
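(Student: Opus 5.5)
The plan is to write down Rogers's construction explicitly and verify the $L_\infty$ relations of the first definition of Section 2. Set $L = L_{1-n}\oplus\cdots\oplus L_0$ with $L_0=\Omega^{n-1}_{Ham}(M)$ and $L_{i}=\Omega^{n-1+i}(M)$ for $1-n\le i\le -1$. For a Hamiltonian $(n{-}1)$-form $\alpha$, nondegeneracy of $\omega$ makes the vector field $X_\alpha$ with $d\alpha=-\iota_{X_\alpha}\omega$ unique, so the following brackets are well defined.
\begin{itemize}
\item $l_1$ is the de Rham differential on the negative degrees, $l_1(\beta)=d\beta$ for $\deg\beta<0$, and $l_1=0$ on $L_0$.
\item For $k\ge 2$, $l_k$ is nonzero only on $L_0^{\otimes k}$, where
$$l_k(\alpha_1,\ldots,\alpha_k)=\varsigma(k)\,\iota_{X_{\alpha_1}\wedge\cdots\wedge X_{\alpha_k}}\omega,$$
with $\varsigma(k)=-(-1)^{k(k+1)/2}$.
\end{itemize}

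The first steps check that these maps land where they should. The form $l_k(\alpha_1,\ldots,\alpha_k)$ has degree $n+1-k$, hence sits in degree $2-k$, as required. For $k=2$ the bracket must be Hamiltonian, so that it lies in $L_0$. Using the multi Cartan rules from Section 4, together with $d\omega=0$ and $\mathcal{L}_{X_\alpha}\omega=-d\,d\alpha=0$, one gets
$$d\,\iota_{X_\alpha\wedge X_\beta}\omega=\pm\iota_{[X_\alpha,X_\beta]_S}\omega.$$
This shows that $l_2(\alpha,\beta)$ is Hamiltonian with Hamiltonian vector field $\pm[X_\alpha,X_\beta]_S$. Graded antisymmetry of $l_k$ follows from the antisymmetry of $X_{\alpha_1}\wedge\cdots\wedge X_{\alpha_k}$, since all arguments lie in degree $0$.

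The core step is the generalized Jacobi identity for each $m$. Because $l_k$ vanishes for $k\ge2$ whenever an argument has negative degree, only two kinds of terms survive. The first kind is $l_1\circ l_m$ applied to degree-$0$ inputs. The second kind is $l_{m-1}(l_2(\alpha_i,\alpha_j),\ldots)$, with all arguments in degree $0$. There are also terms $l_j(l_1(\beta),\ldots)$ with $\beta$ of negative degree, but these vanish unless $j=1$, where $d^2=0$ handles them. I would prove the key lemma, which is Rogers's identity: for closed $\omega$ and vector fields $X_1,\ldots,X_m$,
$$d\,\iota_{X_1\wedge\cdots\wedge X_m}\omega=(-1)^m\sum_{i<j}(-1)^{i+j}\iota_{[X_i,X_j]\wedge X_1\wedge\cdots\widehat{X_i}\cdots\widehat{X_j}\cdots\wedge X_m}\omega+\sum_i(-1)^{i}\iota_{X_1\wedge\cdots\widehat{X_i}\cdots\wedge X_m}\mathcal{L}_{X_i}\omega.$$
It is proved by induction on $m$ from Cartan's formula and $[\mathcal{L}_X,\iota_Y]=\iota_{[X,Y]}$. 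When each $X_i$ is Hamiltonian the last sum vanishes, since $\mathcal{L}_{X_i}\omega=0$. Substituting $X_{l_2(\alpha_i,\alpha_j)}=\pm[X_{\alpha_i},X_{\alpha_j}]$ then matches the two surviving kinds of terms exactly.

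The main obstacle is sign bookkeeping: the Koszul signs, the unshuffle signs $(-1)^\sigma$, the prefactor $(-1)^{i(j+1)}$, the constants $\varsigma(k)$, and the insertion convention for $\iota$ fixed in Section 4 must all conspire to cancel. I would fix conventions first by checking the case $n=2$ directly against $(R_1)$–$(R_6)$. There $(R_4)$ is the lemma for $m=3$, and $(R_6)$ is the lemma for $m=4$ in top degree, which is automatic because the bracket lands in degree $-2$, where it vanishes when $n=2$. With conventions fixed, I would do the general $m$ by the lemma above. This also yields the explicit brackets of $\mathsf{L^2(M,\omega)}$:
$$l_1=d,\qquad l_2(\alpha,\beta)=\iota_{X_\alpha\wedge X_\beta}\omega,\qquad l_3(\alpha,\beta,\gamma)=-\iota_{X_\alpha\wedge X_\beta\wedge X_\gamma}\omega,$$
with $l_2^m=0$.
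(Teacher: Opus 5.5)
Your proposal is correct and is essentially Rogers's original proof: same complex, same brackets with $\varsigma(k)$, same key lemma. The paper does not reproduce this proof and only cites it from \cite{Rog}; the $n=2$ brackets you end up with (using $\varsigma(2)=1$, $\varsigma(3)=-1$) are exactly those recorded in the paper's subsequent proposition. Two small remarks: in your lemma the sum containing $\mathcal{L}_{X_i}\omega$ should carry the same prefactor $(-1)^m$ as the bracket sum (for $m=1$ your version reads $d\iota_X\omega=-\mathcal{L}_X\omega$), which is harmless because that sum vanishes for Hamiltonian fields; and the sign bookkeeping you defer is short, since for $m\ge 3$ and degree-$0$ inputs both surviving terms have $(-1)^{i(j+1)}=1$, the unshuffle signs are $(-1)^{i+j+1}$, and the relation reduces via the lemma to $\varsigma(m)(-1)^m=\varsigma(m-1)$, which $\varsigma(k)=-(-1)^{k(k+1)/2}$ satisfies (note also that the terms $l_j(d\beta,\ldots)$ with $\beta\in L_{-1}$, $j\ge 2$, vanish because $X_{d\beta}=0$ by nondegeneracy, not for degree reasons).
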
 

Let us give an explicit description of the statement of the preceding theorem in the case that $(M,\omega)$ is a 2-plectic manifold. 
\begin{prop}[\cite{Rog}]
     Let $(M,\omega)$ be a 2-plectic manifold. The graded vector space
     $$\mathsf{L^2(M,\omega)}=C^{\infty}(M)\oplus \Omega^1_{Ham}(M),$$
     where the elements of $C^{\infty}(M)$ are of degree -1 and the elements of $\Omega^1_{Ham}(M)$ are of degree 0, together with the following brackets :
\begin{itemize}
\item $\Tilde{l_1^R} : C^{\infty}(M)\mapsto \Omega_{Ham}^1(M)$,  
$\Tilde{l_1^R}(f)=df$ 
\item  $\Tilde{l_2^R} :  \bigwedge^2( \Omega^1_{Ham}(M))\longrightarrow 
\Omega_{Ham}^1(M)$, 
$\Tilde{l_2^R} (\a,\b)(\omega)= \iota_{X_{\a}\wedge X_{\b}}(\omega)=\iota_{X_{\b}}\iota_{X_{\a}}(\omega)$
where $$d\iota_{X_{\a}\wedge X_{\b}}(\omega)=-\iota_{[X_\alpha,X_\beta]_S}(\omega)$$
\item $\Tilde{l_3^R}: \bigwedge^3( \Omega^1_{Ham}(M)) \longrightarrow C^{\infty}(M)$, $\Tilde{l_3^R}(\alpha,\beta,\gamma)=-\iota_{{X_\alpha}\wedge{X_\beta}\wedge{X_\gamma}}(\omega)$.
\end{itemize}
defines a Lie 2-algebra  $(\mathsf{L^2(M,\omega)}, \Tilde{l_1^R}, \Tilde{l_2^R}, \Tilde{l_3^R})$, sometimes called the Baez-Rogers algebra of observables.
 \end{prop}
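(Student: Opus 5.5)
The plan is to check directly that the brackets are well defined and land in the stated spaces, and then verify the relations $(R_1)$–$(R_6)$. The mixed bracket $\Tilde{l_2^R}$ on $C^\infty(M)\times\Omega^1_{Ham}(M)$ is understood to be zero, as in \cite{Rog}.

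\emph{Well-definedness.} By non-degeneracy of $\omega$, the Hamiltonian vector field $X_\alpha$ of $\alpha\in\Omega^1_{Ham}(M)$ is unique. It is multisymplectic, since $\mathcal{L}_{X_\alpha}\omega=d\iota_{X_\alpha}\omega=-dd\alpha=0$. For $f\in C^\infty(M)$, $df$ is Hamiltonian with $X_{df}=0$, so $\Tilde{l_1^R}$ takes values in $\Omega^1_{Ham}(M)$. For $\Tilde{l_2^R}$ I will prove the displayed identity $d\iota_{X_\alpha\wedge X_\beta}\omega=-\iota_{[X_\alpha,X_\beta]_S}\omega$ using the multi Cartan rules of Section 4. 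Since $d\omega=0$, we have $d\iota_{X\wedge Y}\omega=\mathcal{L}_{X\wedge Y}\omega$. The rule $\mathcal{L}_{X\wedge Y}=\mathcal{L}_Y\circ\iota_X-\mathcal{L}_X\circ\iota_Y+\iota_{[X,Y]_S}$, combined with $\mathcal{L}_X\iota_Y-\iota_Y\mathcal{L}_X=\iota_{[X,Y]_S}$ and $\mathcal{L}_X\omega=\mathcal{L}_Y\omega=0$, gives
$$d\iota_{X\wedge Y}\omega=-\iota_{[X,Y]_S}\omega-\iota_{[X,Y]_S}\omega+\iota_{[X,Y]_S}\omega=-\iota_{[X,Y]_S}\omega .$$
Hence $\Tilde{l_2^R}(\alpha,\beta)$ is Hamiltonian with $X_{\Tilde{l_2^R}(\alpha,\beta)}=[X_\alpha,X_\beta]_S$. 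This fact will be used repeatedly below. Antisymmetry of all brackets is clear.

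\emph{The easy relations.} $(R_1)$ is vacuous for degree reasons. For $(R_2)$ and $(R_3)$, the mixed bracket vanishes, so it suffices to note that $\Tilde{l_2^R}(\alpha,df)=\iota_{X_{df}}\iota_{X_\alpha}\omega=0$ because $X_{df}=0$. For $(R_5)$, the left side $\Tilde{l_3^R}(df,\alpha,\beta)$ vanishes because $X_{df}=0$, and the right side consists of mixed brackets, hence is zero.

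\emph{$(R_4)$ and $(R_6)$.} The substantive input is the generalized Cartan identity for multisymplectic vector fields $v_1,\dots,v_n$ with $\mathcal{L}_{v_i}\omega=0$ (Rogers' lemma, derivable by induction from the rules in Section 4):
$$d\,\iota_{v_1\wedge\cdots\wedge v_n}\omega=(-1)^n\sum_{i<j}(-1)^{i+j}\,\iota_{[v_i,v_j]_S\wedge v_1\wedge\cdots\widehat{v_i}\cdots\widehat{v_j}\cdots\wedge v_n}\,\omega .$$
\begin{itemize}
\item For $(R_4)$, take $n=3$ and $v_i=X_\alpha,X_\beta,X_\gamma$. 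Using $X_{\Tilde{l_2^R}(\alpha,\beta)}=[X_\alpha,X_\beta]_S$, each term $\Tilde{l_2^R}(\Tilde{l_2^R}(\alpha,\beta),\gamma)=\iota_{[X_\alpha,X_\beta]_S\wedge X_\gamma}\omega$. Comparing with $d\Tilde{l_3^R}(\alpha,\beta,\gamma)=-d\iota_{X_\alpha\wedge X_\beta\wedge X_\gamma}\omega$ should then give exactly $(R_4)$.
\item For $(R_6)$, the right-hand side consists of mixed brackets and is therefore zero. Apply the identity with $n=4$: the contraction of four vector fields into the $3$-form $\omega$ vanishes, so the alternating sum $\sum_{i<j}(-1)^{i+j}\iota_{[v_i,v_j]_S\wedge\cdots}\omega$ is zero. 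Up to an overall sign this is precisely the left side of $(R_6)$, each term $\Tilde{l_3^R}(\Tilde{l_2^R}(x,y),z,t)$ being $-\iota_{[X_x,X_y]_S\wedge X_z\wedge X_t}\omega$.
\end{itemize}

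The main obstacle is sign bookkeeping. The paper's convention $\iota_{X_1\wedge\cdots\wedge X_n}=\iota_{X_n}\cdots\iota_{X_1}$ must be matched against the generalized Cartan formula, and the signs $(-1)^{i+j}$ against the cyclic arrangement in $(R_4)$ and the six terms in $(R_6)$. I would first verify the $n=3$ formula by hand from the Section 4 rules, in the same way as the $n=2$ computation above, and only then read off $(R_4)$ and $(R_6)$.
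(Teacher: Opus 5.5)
Your proposal is correct and follows essentially the paper's route. The statement is quoted from \cite{Rog}, whose proof uses the generalized Cartan identity you invoke, and the paper's own sketch (in the proof for $\mathsf{D^2(M,\omega)}$) likewise gets $(R_4)$ from the rules of Proposition 4.3 and $(R_6)$ from $d\omega(X_\alpha,X_\beta,X_\gamma,X_\delta)=0$, which is the content of your $n=4$ case. The sign bookkeeping you deferred does close: with $\iota_{X\wedge Y\wedge Z}=\iota_Z\iota_Y\iota_X$ one finds $d\iota_{X_\alpha\wedge X_\beta\wedge X_\gamma}\omega=\iota_{[X_\alpha,X_\beta]_S\wedge X_\gamma}\omega+\mathrm{c.p.}$, which matches $(R_4)$ exactly, and the left side of $(R_6)$ equals $d\iota_{X_\alpha\wedge X_\beta\wedge X_\gamma\wedge X_\delta}\omega=0$.
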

 
 \subsection{Delgado's Lie 2-algebra of observables}\hfill

 \medskip

We define Delgado's Lie 2-algebras of observables $\mathsf{D^2(M,\omega)}$ and 
$\mathsf{\widetilde{D^2(M,\omega)}}$, and introduce the important 2-morphisms $\Phi : \mathsf{\widetilde{D^2(M,\omega)}} \to \mathsf{L^2(M,\omega)}$ and 
$\Psi : \mathsf{\widetilde{D^2(M,\omega)}} \to \mathfrak{X}^2(M)\oplus\mathfrak{X}^1(M)$.\\

The following definition will be useful in the sequel. 
  \begin{defn}
Let  $(M,\omega)$ be a 2-plectic manifold. 
 A function $f$ in $C^{\infty}(M)$ is said to be \emph{Hamiltonian (or in $C^{\infty}_{Ham}(M)$)} 
if there exists  $v_f$ in $\mathfrak{X}^2(M)$ such that $df=-\iota_{v_f}\omega$.
\end{defn}

\begin{rema}
    There can be two (or more) distinct $v_f$ and $w_f$ in $\mathfrak{X}^2(M)$ such that $df=-\iota_{v_f}\omega = -\iota_{w_f}\omega$ since the two-kernel of $\o$,
$$\ker \o^{(2)}=\{ u\in \mathfrak{X}^2(M) \,  \vert \, \iota_u\o = 0 \}\, ,$$
is typically nontrivial. 
 
\end{rema}

\medskip 

\begin{example}

Let $M=\mathbb R^5= \mathbb{R} \times \mathbb{R}^4$ with coordinates $(t,q^1,q^2,p_1,p_2)$ and 
$\omega=dt\wedge dq^1\wedge dp_1 + dt\wedge dq^2\wedge dp_2$. Then
contraction with $\omega$ from $\Lambda^2TM$ to $T^*M$ is surjective and
$\ker \o^{(2)}$ has the following base in $m\in M$:
$$\{\partial_{q^1}\wedge \partial_{q^2}, \partial_{p_1}\wedge \partial_{p_2}, \partial_{q^1}\wedge \partial_{p_2}, \partial_{q^2}\wedge \partial_{p_1}, 
\partial_{q^1}\wedge \partial_{p_1}-\partial_{q^2}\wedge \partial_{p_2}\}\, ,$$
where we note here succintly $\partial_{q^1}$ for $\frac{\partial } {\partial{q^1}}\big{\vert}_m$ etc.\\

\end{example}

Following N.L Delgado \cite{Del} we first introduce $\mathsf{D^2(M,\omega)}$,
an extension of the Lie 2-algebra of observables $\mathsf{L^2(M,\omega)}$. (For the convenience of the reader we verify the key properties below.) Moreover, due to the previous remark, we will also introduce in the sequel the Lie 2-algebra of observables $\mathsf{\widetilde{D^2(M,\omega)}}$ that includes Hamiltonian pairs. 
 
\begin{prop}
Let $(M,\omega)$ be a 2-plectic manifold. Then the graded vector space
$$\mathsf{D^2(M,\omega)}=(C^{\infty}(M)\times C^{\infty}_{Ham}(M))\oplus  \Omega^1_{Ham}(M),$$ where the elements of $C^{\infty}(M)\times C^{\infty}_{Ham}(M)$ are of degree -1 and the elements of $\Omega^1_{Ham}(M)$ are of degree 0, together with the following brackets 
\begin{itemize}
\item $\Tilde{l_1} : C^{\infty}(M)\times C^{\infty}_{Ham}(M)\mapsto \Omega_{Ham}^1(M)$,  
$\Tilde{l_1}(\Tilde{f},f)=d\circ pr_1(\Tilde{f},f)= d\Tilde{f}$, where $pr_1(\Tilde{f},f)=\Tilde{f}$
\item $\Tilde{l_2}$ which is decomposed into two antisymmetric maps $\Tilde{l_2^p}$
and $\Tilde{l_2^m}$ given by\\
$\Tilde{l_2^p }: \bigwedge^2(\Omega^1_{Ham}(M))\longrightarrow  \Omega^1_{Ham}(M), \, \, 
\Tilde{l_2^p} (\a,\b)= \iota_{X_{\a}\wedge X_{\b}}(\omega)\, ,$ where
$d\iota_{X_{\a}\wedge X_{\b}}(\omega)=-\iota_{[X_\alpha,X_\beta]_S}(\omega)$
and
$\Tilde{l_2^m} : (C^{\infty}(M)\times C^{\infty}_{Ham}(M))\times \Omega^1_{Ham}(M) \longrightarrow C^{\infty}(M)\times C^{\infty}_{Ham}(M),$ 
$$\Tilde{l_2^m}((\Tilde{f}, f),\alpha)=(0,\iota_{v\wedge X_\alpha}(\omega)), \hbox{where } v\hbox{ is such that } df=-\iota_v\omega$$ 

The definition of $\Tilde{l_2^m}$ is independent of the choice of $v$ and $\Tilde{l_2^m}$ is  antisymmetric upon setting
$\Tilde{l_2^m}(\a, (\Tilde{f},f))=-(0,\iota_{v\wedge X_\alpha}(\omega))=
-(0,\iota_{X_{\alpha}\wedge v}(\omega)).$
\item $\Tilde{l_3}: \bigwedge^3( \Omega^1_{Ham}(M)) \longrightarrow C^{\infty}(M)\times C^{\infty}_{Ham}(M)$, $\Tilde{l_3}(\alpha,\beta,\gamma)=
(-\iota_{X_\alpha\wedge X_\beta\wedge X_\gamma}(\omega), 0)$
\end{itemize} 
defines a Lie 2-algebra $(\mathsf{D^2(M,\omega)}, \Tilde{l_1}, \Tilde{l_2}, \Tilde{l_3})$.
\end{prop}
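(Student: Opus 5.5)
The plan is to reduce as much as possible to the known fact that $\mathsf{L^2(M,\omega)}$ is a Lie 2-algebra (the proposition of \cite{Rog} recalled above). Then we check by hand the relations that genuinely involve the second component $C^\infty_{Ham}(M)$. First come the well-definedness statements. For $\Tilde{l_2^p}$, the form $\iota_{X_\alpha\wedge X_\beta}\omega$ is Hamiltonian with Hamiltonian vector field $[X_\alpha,X_\beta]_S$. This follows from the multi Cartan rules: $\iota_{X_\beta}\iota_{X_\alpha}\omega$ is differentiated using $d\omega=0$, $\mathcal{L}_{X_\alpha}\omega=0=\mathcal{L}_{X_\beta}\omega$, and the last formula for $\mathcal{L}_{X\wedge Y}$. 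One also notes that $X_\alpha$ is unique by nondegeneracy. For $\Tilde{l_2^m}$, if $v,w$ both satisfy $df=-\iota_v\omega=-\iota_w\omega$, then $u=v-w\in\ker\omega^{(2)}$. Since $\iota_{u\wedge X_\alpha}\omega=\iota_{X_\alpha}\iota_u\omega$ up to the commutation rule $\iota_X\iota_v=\iota_v\iota_X$, this vanishes, so the value is independent of $v$. We must also show that $g:=\iota_{v\wedge X_\alpha}\omega$ lies in $C^\infty_{Ham}(M)$. I would compute $dg$ via $\mathcal{L}_{v\wedge X}$-type identities: $dg=d\iota_{X_\alpha}\iota_v\omega=\mathcal{L}_{X_\alpha}\iota_v\omega$, since $d\iota_v\omega=-d\,df=0$. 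By the rule $[\mathcal{L}_X,\iota_v]=\iota_{[X,v]_S}$ and $\mathcal{L}_{X_\alpha}\omega=0$, this gives $dg=\iota_{[X_\alpha,v]_S}\omega$. Hence $g$ is Hamiltonian with $v_g=-[X_\alpha,v]_S=[v,X_\alpha]_S$. The same identity also shows that $\Tilde{l_3}$ lands in $C^\infty(M)$; this is trivial there, as no Hamiltonian condition is required in the first slot.

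Next come the relations $(R_1)$–$(R_6)$. $(R_1)$ is vacuous. For $(R_2)$, $\Tilde{l_1}(\Tilde{l_2^m}((\Tilde f,f),\alpha))=d0=0$. On the other hand, $\Tilde{l_2^p}(\alpha,d\Tilde f)=\iota_{X_\alpha\wedge X_{d\Tilde f}}\omega$, and $X_{d\Tilde f}=0$ because $\iota_{X}\omega=-dd\Tilde f=0$ and $\omega$ is nondegenerate, so both sides vanish. For $(R_3)$, both sides are of the form $(0,\iota_{v\wedge X_{d\Tilde g}}\omega)$ or its symmetric counterpart, and they vanish for the same reason. $(R_4)$ only involves $\Tilde{l_1}$, $\Tilde{l_2^p}$ and the first component of $\Tilde{l_3}$, so it is exactly the corresponding relation in $\mathsf{L^2(M,\omega)}$. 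Projected to the first component, $(R_5)$ and $(R_6)$ likewise reduce to the Baez–Rogers relations. Here one uses that the first components of $\Tilde{l_2^m}$ are zero and that the first-component terms coincide with those of $\mathsf{L^2}$ when $l_2^m$ there is taken to be $0$ because $X_{d\Tilde f}=0$. Some care is needed at this point: in $\mathsf{L^2}$ the mixed bracket is also zero, which matches.

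It remains to treat the second components, which is where the actual work lies. For $(R_5)$ with $a=(\Tilde f,f)$, the left side $\Tilde{l_3}(d\Tilde f,x,y)$ has second component $0$. We must therefore show
\[
\iota_{v_{g_1}\wedge X_\alpha}\omega+\iota_{v\wedge X_{[\alpha,\beta]}}\omega\ (\text{suitably signed})+\iota_{v_{g_2}\wedge X_\beta}\omega=0,
\]
where $v_{g_1}=[v,X_\beta]_S$ and $v_{g_2}=[v,X_\alpha]_S$ come from the first step. Using $X_{\Tilde{l_2^p}(\alpha,\beta)}=[X_\alpha,X_\beta]_S$, this becomes $\iota_{[[v,X_\beta],X_\alpha]}+\iota_{[v,[X_\alpha,X_\beta]]}\pm\dots$ applied to $\omega$. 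That expression vanishes by the graded Jacobi identity for the Schouten bracket, i.e.\ relation $(R_5)$ in $\mathfrak{X}^2(M)\oplus\mathfrak{X}^1(M)$, combined with the fact that contraction with $\omega$ is linear in the bivector. For $(R_6)$ the second components of both sides are $0$, since $\Tilde{l_3}$ has zero second component and $\Tilde{l_2^m}(\Tilde{l_3}(\cdot),\cdot)$ uses $f=0$, $v=0$. The main obstacle I expect is the sign bookkeeping in $(R_5)$: matching the conventions for $\iota_{v\wedge X}$, the antisymmetry convention for $\Tilde{l_2^m}$, and the Schouten bracket signs so that the three terms assemble exactly into a Jacobi identity. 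I would first work this out on decomposable $v$ to fix the signs.
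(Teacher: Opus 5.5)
Your treatment of well-definedness, of the Hamiltonian property of $\iota_{v\wedge X_\alpha}\omega$, of $(R_1)$--$(R_4)$, of the first components of $(R_5)$ and $(R_6)$, and of the second component of $(R_6)$ is correct and matches the paper. The gap is in the one step with real content: the second component of $(R_5)$.

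There the three terms are \emph{functions}. They are full contractions of $\omega$ with the trivectors $[v,X_\beta]_S\wedge X_\alpha$, $v\wedge[X_\alpha,X_\beta]_S$ and $[v,X_\alpha]_S\wedge X_\beta$. When you rewrite them as $\iota_{[[v,X_\beta]_S,X_\alpha]_S}\omega+\iota_{[v,[X_\alpha,X_\beta]_S]_S}\omega\pm\dots$, you replace a wedge product by a Schouten bracket. The resulting expressions are $1$-forms, not functions, so the Schouten Jacobi identity (i.e.\ $(R_5)$ in $\mathfrak{X}^2(M)\oplus\mathfrak{X}^1(M)$) does not apply to what you actually need. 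That identity is the right tool for the bivector components of Hamiltonian pairs in $\mathsf{\widetilde{D^2(M,\omega)}}$. In $\mathsf{D^2(M,\omega)}$, however, it only controls differentials. Indeed, $d(\iota_{w\wedge X_\gamma}\omega)=-\iota_{[w,X_\gamma]_S}\omega$ whenever $\iota_w\omega$ is closed. So the Jacobi identity yields only $dh=0$ for the relevant combination $h$ of functions: $h$ is locally constant, not necessarily zero. This is not a sign issue, and checking decomposable $v$ will not repair it.

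The missing observation is the elementary identity
\[
\iota_{v\wedge X_\alpha}\omega=\iota_{X_\alpha}\iota_v\omega=-\iota_{X_\alpha}df=-X_\alpha(f).
\]
It gives $\Tilde{l_2^m}((\Tilde{f},f),\alpha)=(0,-X_\alpha(f))$, which also makes independence of $v$ immediate. Set $a=(\Tilde{f},f)$ and use $X_{\Tilde{l_2^p}(\alpha,\beta)}=[X_\alpha,X_\beta]_S$. Then the second components of $-\Tilde{l_2}(\Tilde{l_2}(\alpha,\beta),a)$, $-\Tilde{l_2}(\Tilde{l_2}(\beta,a),\alpha)$ and $-\Tilde{l_2}(\Tilde{l_2}(a,\alpha),\beta)$ are $-[X_\alpha,X_\beta]_S(f)$, $X_\alpha(X_\beta(f))$ and $-X_\beta(X_\alpha(f))$. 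Their sum vanishes because vector fields act on $C^\infty(M)$ as derivations, with the commutator as bracket. Together with $\Tilde{l_3}(d\Tilde{f},\alpha,\beta)=(0,0)$, this gives $(R_5)$. This is the paper's argument, phrased there as $\mathfrak{L}_{X_\alpha}\circ\mathfrak{L}_{X_\beta}-\mathfrak{L}_{X_\beta}\circ\mathfrak{L}_{X_\alpha}=\mathfrak{L}_{[X_\alpha,X_\beta]_S}$ applied to $f$.
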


\begin{proof} 
 Let us first prove that 
$$\Tilde{l_2^m}((\Tilde{f}, f),\alpha)=(0,\iota_{v\wedge X_\alpha}(\omega))$$
does not depend on the choice of $v$. Let $v$ and $w$ such that
$$df=-\iota_v\omega=-\iota_w\omega.$$ We observe that
$$\iota_{v\wedge X_\alpha}(\omega)=\iota_{X\alpha}\iota_v(\omega)=
\iota_{X_{\alpha}}\iota_w(\omega)=\iota_{w\wedge X_{\alpha}}(\omega).$$
Now, to verify that the brackets $\tilde{l_k}$ are well-defined, we also have to check that
    $\mathrm{im} (pr_2 \circ\Tilde{l_2^m})$ is in $C^{\infty}_{Ham}(M)$. For all 
 $v$ in $\mathfrak{X}^2(M)$ (associated to $f$ in $C^{\infty}_{Ham}(M)$), and $\a$ in $\Omega_{Ham}^1(M)$, we see that
$$d (\iota_{v\wedge X_{\a}}(\o))=d(\iota_{X_{\a}} \iota_v(\o))= \mathfrak{L}_{X_{\a}}\iota_v(\o)$$
since $\mathfrak{L}_{X_{\a}}= d\circ \iota_{X_{\a}}+\iota_{X_{\a}}\circ d$ and $d\iota_v(\o)=0$.
Thus
$$d(\iota_{v\wedge X_{\a}}(\o))=\iota_{[X_{\a},v]_S}(\o)=-\iota_{[v,X_{\a}]_S}(\omega)$$ since
$$\mathfrak{L}_{X_{\a}}\circ \iota_v-\iota_v\circ \mathfrak{L}_{X_{\a}}=\iota_{[X_{\a},v]_S}$$ and $\mathfrak{L}_{X_{\a}}(\o)=0.$
Now, we have to show that the relations $(R_i)$ are all satisfied. 
Relation $(R_1)$ clearly being fulfilled, we begin here with relation $(R_2)$. We have for all $\a$ in $\Omega_{Ham}^1(M)$ and for all $(\Tilde{f},f)$ in $C^{\infty}(M)\times C^{\infty}_{Ham}(M),$
$$\Tilde{l_1} \Tilde{l_2^m} (\a,(\Tilde{f},f))=-\Tilde{l_1}(0, \iota_{v\wedge X_{\a}}(\omega))=0\, ,$$
where $v$ satisfies $df=-\iota_v(\omega)$,
as well as
$$\Tilde{l_2^p}(\a, l_1(\Tilde{f},f))=\Tilde{l_2^p}(\a,d\Tilde{f})=0.$$ 

Let us now check relation $(R_3)$. We have for all 
$(\Tilde{f},f)$ and $(\Tilde{g},g)$ in $C^{\infty}(M)\times C^{\infty}_{Ham}(M)$
with $v$ and $v'$ in $\mathfrak{X}^2(M)$ such that 
$df=-\iota_v(\omega)$ and $dg=-\iota_{v'}(\omega),$
$$\Tilde{l_2^m}(l_1(\Tilde{f},f), (\Tilde{g},g))=\Tilde{l_2^m}(d\Tilde{f},(\Tilde{g},g))=(0,0)\, .$$

By antisymmetry, one has as well
$$\Tilde{l_2^m}((\Tilde{f},f), l_1(\Tilde{g},g))=(0,0).$$
Moreover, since $\Tilde{l_2^p}$ coincides with $\Tilde{l_2^R}$ and the term $\Tilde{l_1}(\Tilde{l_3}(\alpha,\beta,\gamma))$ being equal to 
$$\Tilde{l_1^R}(\Tilde{l_3^R}(\alpha,\beta,\gamma)),$$
we observe that the relation $(R_4)$ for $\mathsf{D^2(M,\omega)}$ is satisfied just as the relation $(R_4)$ for $\mathsf{L^2(M,\omega)}$ (see \cite{Rog}) is satisfied. This relation says that
$$d(\iota_{X_\alpha\wedge X_\beta\wedge X_\gamma}(\omega))= \iota_Y(\omega)$$ where 
$Y=[X_\alpha,X_\beta]_S\wedge X_\gamma + c.p.$, and it is  obtained by
using the properties of Proposition 4.3.

Let us prove the relation $(R_5)$. 
On the one hand, we have
$$\Tilde{l_3}(\Tilde{l_1}(\Tilde{f},f),\a,\b)=\Tilde{l_3}(d\Tilde{f},\a,\b)=(0,0).$$
On the other hand, we have
$$\Tilde{l_2^m}(\Tilde{l_2^p}(\a,\b),(\Tilde{f},f))+
\Tilde{l_2^m}(\Tilde{l_2^m}(\b,(\Tilde{f},f)),\a)+
\Tilde{l_2^m}(\Tilde{l_2^m}((\Tilde{f},f),\a),\b)$$
$$ =(0, -\iota_v\iota_{[X_{\a},X_{\b}]_S}(\o)- \iota_{[v, X_{\b}]_S}\iota_{X_\a}(\o) +\iota_{[v,X_{\a}]_S} \iota_{X_\b}(\o))$$
$$ = (0,-\iota_{[X_\a,X_\b]_S} \iota_v(\o) +\iota_{X_\a} \iota_{[X_\b,v]_S}(\o)
-\iota_{X_\b}\iota_{[X_\a,v]_S}(\o)$$
where $v$ satisfies $df=-\iota_v(\omega).$
Moreover, we can observe that
$$\iota_{X_\a}\iota_{[X_\b,v]_S}(\o)= \iota_{X_\a} d(\iota_{X_\b}\iota_v(\o)) $$
$$= \mathfrak{L}_{X_{\a}}(\iota_{X_\b}\iota_v(\o)) = -\mathfrak{L}_{X_\a}\iota_{X_\b}(df)= -\mathfrak{L}_{X_\a} \mathfrak{L}_{X_\b}(f)$$
since one can check that $\iota_{X_{\a}}\iota_{X_{\b}}\iota_v(\o)=0$ 
and since $\iota_{X_{\b}}(f)=0.$
In the same way,
$$\iota_{X_\b}\iota_{[X_\a,v]_S}(\o)=-\mathfrak{L}_{X_\b} \mathfrak{L}_{X_\a} (f)$$
and $$\iota_{[X_\a,X_\b]} \iota_v(\o)=-\mathfrak{L}_{[X_{\a},X_{\b}]_S}(f).$$
We have thus proved that 
$$\Tilde{l_2^m}(\Tilde{l_2^p}(\a,\b),(\Tilde{f},f))+
\Tilde{l_2^m}(\Tilde{l_2^m}(\b,(\Tilde{f},f)),\a)+
\Tilde{l_2^m}(\Tilde{l_2^m}((\Tilde{f},f),\a),\b)=(0,0)$$
since 
$$\mathfrak{L}_{X_{\a}}\circ \mathfrak{L}_{X_{\b}}-\mathfrak{L}_{X_{\b}}\circ \mathfrak{L}_{X_{\a}}=\mathfrak{L}_{[X_{\a},X_{\b}]_S}\, .$$ 
Finally, we have to prove $(R_6)$. We first observe that for all 
$\alpha$, $\beta$, $\gamma$, $\delta$ in $\Omega^1_{Ham}(M)$, we have
$$\Tilde{l_3}(\Tilde{l_2}(\alpha,\beta),\gamma,\delta)-\Tilde{l_3}(\Tilde{l_2}(\beta,\gamma),\alpha,\delta)+ \Tilde{l_3}(\Tilde{l_2}(\alpha,\delta),\beta,z)+\Tilde{l_3}(\Tilde{l_2}(\beta,z),\alpha,\delta)-\Tilde{l_3}(\Tilde{l_2}(\beta,\delta),\alpha,\gamma)=$$ 
$$\bigg(\Tilde{l_3^R}(\Tilde{l_2^R}(\alpha,\beta),\gamma,t)-\Tilde{l_3^R}(\Tilde{l_2^R}(\alpha,\gamma),\beta,\delta)+ \Tilde{l_3^R}(\Tilde{l_2^R}(\alpha,\delta),\beta,\gamma)+\Tilde{l_3^R}(\Tilde{l_2^R}(\beta,\gamma),\alpha,\delta)-\Tilde{l_3^R}(\Tilde{l_2^R}(\beta,\delta),\alpha,\gamma),0\bigg)$$
which is equal to $(0,0)$ as in the case of $\mathsf{L^2(M,\omega)}$ (\cite{Rog}) and using the fact
that 
$$d\omega(X_{\alpha},X_{\beta},X_{\gamma},X_{\delta})=0.$$
By the definition of $\Tilde{l_2^m}$ and $\Tilde{l_3}$, we have 
 $$\Tilde{l_2^m}(\Tilde{l_3}(\alpha,\beta,\gamma),\delta)-\Tilde{l_2^m}(\Tilde{l_3}(\alpha,\beta,\delta),\gamma)+\Tilde{l_2^m}(\Tilde{l_3}(\alpha,\gamma,\delta),\beta)-\Tilde{l_2^m}(\Tilde{l_3}(\beta,\gamma,\delta),\alpha)=(0,0),$$
 completing the proof.
\end{proof}

\begin{rema}\hfill

\begin{enumerate}
    \item Comparing Rogers's and Delgado's observables, we emphasize the presence of mixed brackets in $\mathsf{D^2(M,\omega)}$ whereas Rogers's brackets are non zero only on $1-$forms.
    \item We underline the important relation $$d(\iota_{v\wedge X_{\a}}(\o))=d(\iota_{v}\iota_{X_\alpha}\omega)=-\iota_{[v,X_\alpha]_S}\omega=\iota_{[X_\alpha,v]_S}\omega \, ,$$
which is useful in the preceding proof to show that $\Tilde{l_2^m}$ is well-defined. \\
We have the similar following relation 
$\displaystyle\ d(\iota_{X_\alpha\wedge X_\beta\wedge X_\gamma}\omega)= \iota_Y\omega$ where 
$Y=[X_\alpha,X_\beta]_S\wedge X_\gamma + c.p.$, which is the relation $(R_4)$ both for 
$\mathsf{L^2(M,\omega)}$ and $\mathsf{D^2(M,\omega)}$. 
\item It is also interesting to see that
$$\Tilde{l_2^m}((0,-\iota_{X_\alpha\wedge X_\beta\wedge X_\gamma}\omega),\delta) -
\Tilde{l_2^m}((0,-\iota_{X_\alpha\wedge X_\beta\wedge X_\delta}\omega),\gamma)+
\Tilde{l_2^m}((0,-\iota_{X_\alpha\wedge X_\gamma\wedge X_\delta}\omega),\beta)-
\Tilde{l_2^m}((0,-\iota_{X_\beta\wedge X_\gamma\wedge X_\delta}\omega),\alpha)$$
$$=(0,2d\omega(X_{\alpha},X_{\beta},X_{\gamma},X_{\delta}))=(0,0).$$
But despite this fact,
$\ \displaystyle \iota_{X_\alpha\wedge X_\beta\wedge X_\gamma}(\omega)$
belongs to $C^{\infty}(M)$ 
and not to $C^{\infty}_{Ham}(M)$, as can be seen from the definition of $\Tilde{l_3}$ since
$\displaystyle\ pr_1 \circ \tilde{l_3}(\alpha,\beta,\gamma)= -\iota_{X_\alpha\wedge X_\beta\wedge X_\gamma}(\omega).$
\end{enumerate}
\end{rema}
\medskip

In order to account for the ambiguity of the bivector associated to a Hamiltonian function, we refine the definition of $D^2(M,\omega)$.

\begin{defn}
Let  $(M,\omega)$ be a 2-plectic manifold. 
A pair $(f,v)$ in $C^{\infty}(M)\times \mathfrak{X}^2(M)$ is said to be \emph{an Hamiltonian pair} if
 $df=-\iota_{v}\omega$. We denote by $Ham^0(M)$ the set of Hamiltonian pairs.
\end{defn}

In the following proposition, we describe a new Lie 2-algebra of observables 
$\mathsf{\widetilde{D^2(M,\omega)}}$, obtained by including the Hamiltonian pairs. For simplicity, we use the same notations ($\Tilde{l_1}$, $\Tilde{l_2}$, $\Tilde{l_3}$) both for 
$\mathsf{D^2(M,\omega)}$ and $\mathsf{\widetilde{D^2(M,\omega)}}$.

\begin{prop}
Let $(M,\omega)$ be a 2-plectic manifold. Then the graded vector space
$$\mathsf{\widetilde{D^2(M,\omega)}}=(C^{\infty}(M)\times Ham^0(M))\oplus  \Omega^1_{Ham}(M),$$ where the elements of $C^{\infty}(M)\times Ham^0(M)$ are of degree -1 and the elements of $\Omega^1_{Ham}(M)$ are of degree 0, together with the following brackets :
\begin{itemize}
\item $\Tilde{l_1} : C^{\infty}(M)\times Ham^0(M)\mapsto \Omega_{Ham}^1(M)$,  
$\Tilde{l_1}(\Tilde{f},(f,v))=d\circ pr_1(\Tilde{f},(f,v))= d\Tilde{f}$, with $pr_1(\Tilde{f},(f,v))=\Tilde{f}$ 
\item $\Tilde{l_2}$  is given by two antisymmetric maps $\Tilde{l_2^p}$
and $\Tilde{l_2^m}$ \\
$\Tilde{l_2^p }: \bigwedge^2(\Omega^1_{Ham}(M))\longrightarrow  \Omega^1_{Ham}(M), \, \, 
\Tilde{l_2^p} (\a,\b)= \iota_{X_{\a}\wedge X_{\b}}(\omega)\, ,$ where
$d\iota_{X_{\a}\wedge X_{\b}}(\omega)=-\iota_{[X_\alpha,X_\beta]_S}(\omega)$
and
$\Tilde{l_2^m} : (C^{\infty}(M)\times Ham^0(M))\times \Omega^1_{Ham}(M) \longrightarrow C^{\infty}(M)\times Ham^0(M),$ 
$$\Tilde{l_2^m}((\Tilde{f}, (f,v)),\alpha)=(0,(\iota_{v\wedge X_\alpha}(\omega), [v,X_{\alpha}]_S)), $$
($\Tilde{l_2^m}$ is  antisymmetric upon setting
$\Tilde{l_2^m}(\a, (\Tilde{f},(f,v)))=-(0,(\iota_{v\wedge X_\alpha}(\omega),[v,X_{\alpha}]_S)))$
\item $\Tilde{l_3}: \bigwedge^3( \Omega^1_{Ham}(M)) \longrightarrow C^{\infty}(M)\times Ham^0(M)$, $\Tilde{l_3}(\alpha,\beta,\gamma)=
(-\iota_{X_\alpha\wedge X_\beta\wedge X_\gamma}(\omega), (0,0))$
\end{itemize} 
defines a Lie 2-algebra $(\mathsf{\widetilde{D^2(M,\omega)}}, \Tilde{l_1}, \Tilde{l_2}, \Tilde{l_3})$.
\end{prop}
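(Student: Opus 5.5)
The proof follows the pattern of the one given for $\mathsf{D^2(M,\omega)}$. We first check that the brackets are well defined and then verify $(R_1)$--$(R_6)$. The new feature is the bivector component $v$ of a Hamiltonian pair, which must be carried along in $\Tilde{l_2^m}$.

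\emph{Well-definedness.} We must show that $\Tilde{l_2^m}((\Tilde f,(f,v)),\alpha)$ lies in $C^\infty(M)\times Ham^0(M)$, that is,
\[
d\bigl(\iota_{v\wedge X_\alpha}\omega\bigr)=-\iota_{[v,X_\alpha]_S}\omega .
\]
This is the computation already made in the previous proof: since $d\iota_v\omega=-ddf=0$ and $\mathfrak{L}_{X_\alpha}\omega=0$, Proposition 4.3 gives
\[
d\iota_{X_\alpha}\iota_v\omega=\mathfrak{L}_{X_\alpha}\iota_v\omega=\iota_{[X_\alpha,v]_S}\omega .
\]
Here no independence of the choice of $v$ is needed, since $v$ is part of the datum. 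The pure bracket $\Tilde{l_2^p}$ and $\Tilde{l_3}$ take values in the correct spaces exactly as for $\mathsf{L^2(M,\omega)}$. Note that $(0,0)$ is trivially a Hamiltonian pair.

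\emph{Relations $(R_1)$--$(R_4)$.} Relation $(R_1)$ is trivial. For $(R_2)$ and $(R_3)$, note that $\Tilde{l_1}$ only sees the first component $\Tilde f$, and $\Tilde{l_2^m}$ always has first component $0$, so $\Tilde{l_1}\circ\Tilde{l_2^m}=0$. Moreover $X_{d\Tilde f}=0$ may be chosen, since $\iota_{X}\omega=0$ forces $X=0$ by nondegeneracy. Hence $\Tilde{l_2^p}(\alpha,d\Tilde f)=0$ and $\Tilde{l_2^m}(d\Tilde f,\cdot)=0$. Relation $(R_4)$ is literally the Rogers identity, because $\Tilde{l_1}\Tilde{l_3}=\Tilde{l_1^R}\Tilde{l_3^R}$ and $\Tilde{l_2^p}=\Tilde{l_2^R}$.

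\emph{Relation $(R_5)$.} This is the main step. Its left side vanishes since $X_{d\Tilde f}=0$. The right side now has two components in $Ham^0(M)$:
\begin{itemize}
\item the function component, which is exactly the expression shown to vanish in the proof for $\mathsf{D^2(M,\omega)}$;
\item the bivector component, which is a cyclic sum
\[
[v,[X_\alpha,X_\beta]_S]_S+[[X_\beta,v]_S,X_\alpha]_S-[[v,X_\alpha]_S,X_\beta]_S ,
\]
up to signs.
\end{itemize}
The bivector component should vanish by the graded Jacobi identity for the Schouten bracket, i.e.\ relation $(R_5)$ for $\mathfrak{X}^2(M)\oplus\mathfrak{X}^1(M)$ from Proposition 5.1. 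The obstacle is careful sign bookkeeping between the antisymmetry convention $\Tilde{l_2^m}(\alpha,\cdot)=-\Tilde{l_2^m}(\cdot,\alpha)$ and $[v,X]_S=-[X,v]_S$. I would reduce this to Jacobi for vector fields by testing on decomposable $v=Y\wedge Z$, or simply quote $(R_5)$ for $\nu_2$.

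\emph{Relation $(R_6)$.} The $l_3(l_2,\cdot,\cdot)$ side has only a first component and equals Rogers' identity, hence vanishes. On the other side, $\Tilde{l_2^m}(\Tilde{l_3}(\cdot),\cdot)$ involves pairs of the form $(\Tilde f,(0,0))$ with $v=0$. It is therefore $(0,(0,0))$.
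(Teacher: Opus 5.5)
Your proposal is correct and takes the paper's route: the paper proves this statement only by saying it is analogous to the proof for $\mathsf{D^2(M,\omega)}$, and your argument is that adaptation, with the one genuinely new check (the bivector component in $(R_5)$) made explicit. One small slip: in that component the middle term should be $[[v,X_\beta]_S,X_\alpha]_S$, not $[[X_\beta,v]_S,X_\alpha]_S$, and as you display it the sum does not vanish in general; your second option closes this cleanly, since $(\Tilde f,(f,v))\mapsto v$, $\alpha\mapsto X_\alpha$ intertwines $\Tilde{l_2}$ with $\nu_2$ exactly (including both antisymmetry conventions), so the component equals $[X_\alpha,[X_\beta,v]_S]_S-[X_\beta,[X_\alpha,v]_S]_S-[[X_\alpha,X_\beta]_S,v]_S=0$, which is $(R_5)$ for $\nu$.
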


\begin{proof} 
The proof is analogous to the proof of the preceding proposition.
\end{proof}

\begin{rema}

Going from $D^2(M,\omega)$ to $\mathsf{\widetilde{D^2(M,\omega)}}$ is similar to the construction of observables in the pre-n-plectic case, i.e., replacing $\Omega^{n-1}_{Ham}(M)$  by $Ham^{n-1}(M)=\{(\alpha, X)\in \Omega^{n-1}_{Ham}(M)\times \mathfrak{X}^1(M)\ \vert \iota_X\omega=d\alpha\}$ (as in \cite{CFRZ}).
In fact, the above Lie 2-algebra $\mathsf{\widetilde{D^2(M,\omega)}}$ could be replaced by $(C^{\infty}(M)\oplus Ham^0(M))\oplus Ham^1(M)$ to accommodate the pre-2-plectic case.

\end{rema}

\medskip

We relate the Lie 2-algebra $\mathsf{\widetilde{D^2(M,\omega)}}$ of Delgado's observables, via three easy but important propositions, to Rogers's observables $\mathsf{L^2(M,\omega)}$ and to the Lie 2-algebra of (truncated) multivector fields on $M$.

\begin{prop}\label{I}
 The couple $I=(I_1,I_2)$ with $I_{1,0}=\mathrm{id}$, $I_{1,-1}(\tilde{f}) = (\tilde{f},(0,0))$ for all $\tilde{f}\in C^{\infty}(M)$,  and $I_2=0$ is a Lie 2-morphism from $\mathsf{L^2(M,\omega)}$ to $\mathsf{\widetilde{D^2(M,\omega)}}$, which is injective in each degree. Thus $\mathsf{L^2(M,\omega)}$ is a sub Lie 2-algebra of $\mathsf{\widetilde{D^2(M,\omega)}}$.
\end{prop}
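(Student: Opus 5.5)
Since $I_2=0$, the plan is to check that $I$ is a well-defined strict Lie 2-morphism, i.e.\ that the relations $(A_1)$--$(A_4)$ hold with $F_{1,0}=\mathrm{id}$, $F_{1,-1}=I_{1,-1}$ and $F_2=0$. Injectivity in each degree is then immediate: $I_{1,0}$ is the identity of $\Omega^1_{Ham}(M)$, and $I_{1,-1}(\tilde f)=(\tilde f,(0,0))$ recovers $\tilde f$ as its first component.

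First I check well-definedness. The pair $(0,0)$ is a Hamiltonian pair, since $d0=0=-\iota_0\omega$. Hence $(\tilde f,(0,0))$ lies in $C^\infty(M)\times Ham^0(M)$, and $I_{1,-1}$ is linear.

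Next I verify the relations in turn.

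\begin{itemize}
\item $(A_1)$: we have $\tilde{l_1}(I_{1,-1}(\tilde f))=d\circ pr_1(\tilde f,(0,0))=d\tilde f=\tilde{l_1^R}(\tilde f)$.
\item $(A_2)$: with $I_2=0$ this reduces to $\tilde{l_2^p}(\alpha,\beta)=\tilde{l_2^R}(\alpha,\beta)$. Both are defined as $\iota_{X_\alpha\wedge X_\beta}\omega$, so this holds by definition.
\item $(A_3)$: this reads $I_{1,-1}(\tilde{l_2^R}(\tilde f,\alpha))=\tilde{l_2^m}(I_{1,-1}(\tilde f),\alpha)$. The left side vanishes because the Baez--Rogers brackets are nonzero only on $1$-forms, so the mixed bracket $\tilde{l_2^R}$ is zero. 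The right side is $(0,(\iota_{0\wedge X_\alpha}\omega,[0,X_\alpha]_S))=(0,(0,0))$, because the Hamiltonian pair is $(0,0)$ with bivector $v=0$.
\item $(A_4)$: with $F_2=0$ this reduces to $I_{1,-1}(\tilde{l_3^R}(\alpha,\beta,\gamma))=\tilde{l_3}(\alpha,\beta,\gamma)$. Both sides equal $(-\iota_{X_\alpha\wedge X_\beta\wedge X_\gamma}\omega,(0,0))$.
\end{itemize}

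Finally, since $I$ is injective in each degree and compatible with all brackets, its image is closed under $\tilde{l_1},\tilde{l_2},\tilde{l_3}$. So $\mathsf{L^2(M,\omega)}$ is identified with a sub Lie 2-algebra of $\mathsf{\widetilde{D^2(M,\omega)}}$.

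There is no serious obstacle here. The only point needing care is $(A_3)$. One must use that the mixed bracket of $\mathsf{L^2(M,\omega)}$ is zero, and that $I_{1,-1}$ lands on pairs with bivector $0$, which kills both components of $\tilde{l_2^m}$.
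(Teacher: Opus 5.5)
Your proposal is correct and matches the paper, which says only ``straightforward verification''. You carry that verification out explicitly: since $I_2=0$, the relations $(A_1)$--$(A_4)$ reduce to $d\tilde f=d\tilde f$, $\tilde{l_2^p}=\tilde{l_2^R}$, $\tilde{l_2^m}((\tilde f,(0,0)),\alpha)=(0,(0,0))$ and $I_{1,-1}\circ\tilde{l_3^R}=\tilde{l_3}$; injectivity together with strictness then gives the sub Lie 2-algebra statement.
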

\begin{proof}
Straightforward verification. 
\end{proof}

\begin{prop}\label{S}
 The couple $\Phi=(\Phi_1,\Phi_2)$ with $\Phi_{1,0}=\mathrm{id}$, $\Phi_{1,-1}=pr_1$ and $\Phi_2=0$ is a Lie 2-morphism from $\mathsf{\widetilde{D^2(M,\omega)}}$ to $\mathsf{L^2(M,\omega)}$, which is surjective in each degree. Furthermore $\Phi \circ I = \mathrm{id}$.
\end{prop}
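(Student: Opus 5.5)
The plan is to check directly the four relations $(A_1)$--$(A_4)$ of a Lie 2-morphism for $\Phi$ going from $\mathsf{\widetilde{D^2(M,\omega)}}$ to $\mathsf{L^2(M,\omega)}$. Since $\Phi_2=0$, these reduce to compatibilities of $\Phi_1$ with the brackets. We then verify surjectivity in each degree and compute $\Phi\circ I$ via Lemma \ref{compositionmorphismes}.

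\textbf{$(A_1)$ and $(A_2)$.} For $(A_1)$, take $(\tilde f,(f,v))$ in degree $-1$. Then
$$\Tilde{l_1^R}(\Phi_{1,-1}(\tilde f,(f,v)))=d\tilde f=\Tilde{l_1}(\tilde f,(f,v))=\Phi_{1,0}(\Tilde{l_1}(\tilde f,(f,v))).$$
For $(A_2)$, with $\Phi_2=0$ we need $\Phi_{1,0}(\Tilde{l_2^p}(\a,\b))=\Tilde{l_2^R}(\a,\b)$. This holds because both brackets are defined as $\iota_{X_\a\wedge X_\b}\omega$ and $\Phi_{1,0}=\mathrm{id}$.

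\textbf{$(A_3)$.} This is the step that uses the particular shape of the mixed bracket, and it is the only place where some care is needed. The relation reads
$$pr_1(\Tilde{l_2^m}((\tilde f,(f,v)),\a))=\Tilde{l_2^R}(\Phi_{1,-1}(\tilde f,(f,v)),\a).$$
The right-hand side vanishes, since Rogers's $\Tilde{l_2^R}$ has no mixed part, i.e. it is zero on $C^\infty(M)\times\Omega^1_{Ham}(M)$. The left-hand side also vanishes, because $\Tilde{l_2^m}$ takes values in $\{0\}\times Ham^0(M)$, so its first component is $0$.

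\textbf{$(A_4)$.} With $\Phi_2=0$ the relation reduces to $pr_1(\Tilde{l_3}(\a,\b,\gamma))=\Tilde{l_3^R}(\a,\b,\gamma)$. Both sides equal $-\iota_{X_\a\wedge X_\b\wedge X_\gamma}\omega$.

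\textbf{Surjectivity and $\Phi\circ I$.} In degree $0$, $\Phi_{1,0}=\mathrm{id}$ is surjective. In degree $-1$, $pr_1$ is surjective because $(\tilde f,(0,0))$ lies in the domain: $(0,0)$ is a Hamiltonian pair since $d0=-\iota_0\omega$. For the composition, Lemma \ref{compositionmorphismes} gives
$$(\Phi\circ I)_{1,0}=\mathrm{id},\qquad (\Phi\circ I)_{1,-1}(\tilde f)=pr_1(\tilde f,(0,0))=\tilde f,$$
$$(\Phi\circ I)_2=\Phi_2\circ(I_{1,0}\times I_{1,0})+\Phi_{1,-1}\circ I_2=0.$$
Hence $\Phi\circ I=\mathrm{id}$.
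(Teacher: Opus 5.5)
Your proposal is correct and takes the same approach as the paper, whose proof is simply ``straightforward verification'': you check $(A_1)$--$(A_4)$ directly, with $\Phi_2=0$, the identical definitions of the brackets, and the vanishing mixed bracket of $\mathsf{L^2(M,\omega)}$ doing the work. You then get surjectivity in degree $-1$ from $(\tilde f,(0,0))$ and compute $\Phi\circ I$ with the composition lemma.
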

\begin{proof}
Straightforward verification. 
\end{proof}

\begin{defn} \label{gradient}
 A \emph{2-plectic gradient (map)} is a Lie 2-algebra morphism 
$$\Psi=(\Psi_1,\Psi_2) : \mathsf{\widetilde{D^2(M,\omega)}}=(C^{\infty}(M)\times Ham^0(M)) \oplus\Omega^1_{Ham}(M) \longrightarrow \mathfrak{X}^2(M)\oplus\mathfrak{X}^1(M)$$ satisfying
\begin{enumerate}[(1)]
    \item $\Psi_{1,-1}(\Tilde{f},(0,0))=0$ for all $\Tilde{f}\in C^\infty(M)$.
    \item $\Psi_{1,-1}(0,(f,v))=v$ for all $(f,v)$ in $Ham^0(M)$.
    \item $\Psi_{1,0}(\alpha)={X_\alpha}$ for all $\alpha\in\Omega^1_{Ham}(M)$ with $X_\alpha$ in $\mathfrak{X}^1(M)$ fulfilling $d\alpha=-\iota_{X_\alpha}\omega$.
\end{enumerate}
\end{defn}

\begin{prop} \label{psi-prop6}
Let  $\Psi_1 :\mathsf{\widetilde{D^2(M,\omega)}} \longrightarrow \mathfrak{X}^2(M)\oplus\mathfrak{X}^1(M)$ be a graded linear map and  
$\Psi_2 : \Lambda^2\Omega^1_{Ham}(M) \to \mathfrak{X}^2(M)$ be a linear map.
Then $\Psi=(\Psi_1,\Psi_2)$ defines a Lie 2-morphism 
if and only if the following conditions are satisfied for
 $ \Tilde{f}\in C^{\infty}(M)$, $(f,v)\in Ham^0(M)$ and $\alpha, \beta,\gamma \in \Omega^1_{Ham}(M)$ :
 \begin{itemize}
\item[$(A_1)$] for $\Psi$ : $\Psi_{1,0}\circ (d\circ pr_1(\Tilde{f}, (f,v)))=X_{d\Tilde{f}}=0$ 
\item[$(A_2)$] for $\Psi$ : $\Psi_{1,0}(i_{X_\alpha\wedge X_\beta}\omega)=[\Psi_{1,0}(\alpha), \Psi_{1,0}(\beta) ]_S$
\item[$(A_3)$] for $\Psi$ : $\Psi_{1,-1}(\Tilde{{l}^m_2}((\Tilde{f},(f,v)),\alpha)=[\Psi_{1,-1}(\Tilde{f},(f,v)), \Psi_{1,0}(\alpha)]_S$ 

\noindent and $\Psi_2(d\circ pr_1((\Tilde{f},(f,v)),\alpha)=\Psi_2(d\Tilde{f},\alpha)=0$
\item[$(A_4)$] for $\Psi$ : $\Psi_2(i_{X_\alpha\wedge X_\beta}\omega,\gamma)+c.p.=[X_\alpha, \Psi_2(\beta,\gamma)]_S+ c.p.$
\end{itemize}
Furthermore, $\Psi=(\Psi_1,\Psi_2)$ with the conditions (1)-(3) of the preceding definition  is a 2-plectic gradient if and only if $\Psi_2$
satisfies $\Psi_2(d\Tilde{f},\alpha)=0$ and $(A_4)$ (for $\Psi$). 
If $\Psi_2=0$, the conditions (1)-(3) of the preceding definition already imply $(A_1)-(A_4)$ (for $\Psi$). 
\end{prop}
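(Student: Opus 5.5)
The plan is to specialise the general conditions $(A_1)$--$(A_4)$ of a Lie 2-morphism to the source $\mathsf{\widetilde{D^2(M,\omega)}}$ (brackets $\Tilde{l_1},\Tilde{l_2},\Tilde{l_3}$) and the target $\mathfrak{X}^2(M)\oplus\mathfrak{X}^1(M)$ (brackets $\nu_1=0$, $\nu_2=[\,,\,]_S$, $\nu_3=0$), and then simplify term by term. Since $\nu_1=0$, condition $(A_1)$ becomes $\Psi_{1,0}\circ\Tilde{l_1}=0$, i.e. $\Psi_{1,0}(d\Tilde{f})=0$. Condition $(A_2)$ has left-hand side $\nu_1(\Psi_2(\alpha,\beta))=0$, which gives $\Psi_{1,0}(\Tilde{l_2^p}(\alpha,\beta))=[\Psi_{1,0}(\alpha),\Psi_{1,0}(\beta)]_S$. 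For $(A_3)$ I will use the bidegree splitting of $\mathsf{\widetilde{D^2(M,\omega)}}_{-1}=C^\infty(M)\times Ham^0(M)$ and write the identity as
\[
\Psi_{1,-1}(\Tilde{l_2^m}(a,\alpha))=\Psi_2(d\Tilde f,\alpha)+[\Psi_{1,-1}(a),\Psi_{1,0}(\alpha)]_S .
\]
I then plan to state it as the two displayed conditions. Strictly, only the sum has to vanish; the split version follows once the first component $\Tilde f$ is varied independently with $(f,v)=(0,0)$. In that case $\Tilde{l_2^m}(a,\alpha)=(0,(0,0))$ because $v=0$, and the relation reduces to $\Psi_2(d\Tilde f,\alpha)=-[\Psi_{1,-1}(\Tilde f,(0,0)),\Psi_{1,0}(\alpha)]_S$. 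This is the delicate point, and I would either read the stated form as the intended equivalent reformulation under linearity or add the gradient normalisation. Finally, $(A_4)$ involves $\Psi_{1,-1}(\Tilde{l_3}(\alpha,\beta,\gamma))=\Psi_{1,-1}(-\iota_{X_\alpha\wedge X_\beta\wedge X_\gamma}\omega,(0,0))$, and $\nu_3=0$ on the right.

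For the ``furthermore'' part, assume conditions (1)--(3). Then $\Psi_{1,0}(d\Tilde f)=X_{d\Tilde f}$. Since $d(d\Tilde f)=0=-\iota_X\omega$ and $\omega$ is nondegenerate on vectors, this vector field is $0$, so $(A_1)$ holds. Condition $(A_2)$ follows from the relation $d\iota_{X_\alpha\wedge X_\beta}\omega=-\iota_{[X_\alpha,X_\beta]_S}\omega$ recalled in the definition of $\Tilde{l_2^p}$: by nondegeneracy, $X_{\Tilde{l_2^p}(\alpha,\beta)}=[X_\alpha,X_\beta]_S$. For $(A_3)$, condition (1) kills the $\Tilde f$-part. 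Condition (2) together with $\Tilde{l_2^m}((\Tilde f,(f,v)),\alpha)=(0,(\iota_{v\wedge X_\alpha}\omega,[v,X_\alpha]_S))$ gives $\Psi_{1,-1}$ of the left-hand side equal to $[v,X_\alpha]_S=[\Psi_{1,-1}(\Tilde f,(f,v)),\Psi_{1,0}(\alpha)]_S$. Hence the first half of $(A_3)$ is automatic, and what remains is $\Psi_2(d\Tilde f,\alpha)=0$. For $(A_4)$, the term $\Psi_{1,-1}(\Tilde{l_3})$ is $\Psi_{1,-1}$ of an element with vanishing $Ham^0$-part, so it is zero by (1). What remains is exactly the stated cyclic identity for $\Psi_2$. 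This yields the equivalence.

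If moreover $\Psi_2=0$, both remaining conditions hold trivially, so (1)--(3) imply $(A_1)$--$(A_4)$. The main obstacle I expect is bookkeeping in $(A_3)$, namely the sign and ordering conventions of $\Tilde{l_2^m}$ versus $\nu_2^m$, and keeping the two halves of the condition correctly separated. The other steps are direct substitutions plus the nondegeneracy of $\omega$ on $\mathfrak{X}^1(M)$.
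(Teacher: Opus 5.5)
Your argument is the paper's own. You specialise the morphism axioms with $\nu_1=\nu_3=0$, then use (1)--(3) and the nondegeneracy of $\omega$ to see that $(A_1)$, $(A_2)$ and the first half of $(A_3)$ hold automatically and that $\Psi_{1,-1}(\Tilde{l_3}(\alpha,\beta,\gamma))=0$ by (1), so only $\Psi_2(d\Tilde{f},\alpha)=0$ and $(A_4)$ remain.

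Your caveat on splitting $(A_3)$ is justified, and sharper than the paper, whose claim that $\Psi_2(d\Tilde{f},\alpha)=0$ follows ``obviously'' from $(A_3)$ with $(f,v)=0$ silently uses (1). Linearity alone does not remove $[\Psi_{1,-1}(\Tilde{f},(0,0)),\Psi_{1,0}(\alpha)]_S$, nor $\Psi_{1,-1}(\Tilde{l_3}(\alpha,\beta,\gamma))$ from $(A_4)$. For instance, on $(\R^3,dq^1\wedge dq^2\wedge dq^3)$ the choice $\Psi_{1,0}=0$, $\Psi_2=0$, $\Psi_{1,-1}(\Tilde{f},(f,v))=\Tilde{f}\,w$ with $w\neq 0$ satisfies all displayed conditions but violates the morphism axiom $(A_4)$. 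So the first equivalence has to be read under the normalisations (1) and (3), which is exactly the setting your proof handles; drop the ``under linearity'' reading.
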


\begin{proof} 
The relations $(A_1)-(A_4)$ for $\Psi$  are the specialisation of the relations $(A_1)-(A_4)$ of the definition of a Lie 2-algebra morphism to the case at hand. 
Obviously, $\Psi_2(d\Tilde{f},\alpha)=0$ follows from the vanishing of the other terms in the condition $(A_3)$ if $(f,v)=0$, i.e., the relation $(A_3)$ of a Lie 2-algebra morphism is equivalent for $\Psi$ to the relation $\Psi_2(d\Tilde{f},\alpha)=0$ and 
$$(A_3')\quad \, \Psi_{1,-1}(\Tilde{{l}^m_2}((\Tilde{f},(f,v)),\alpha)=[\Psi_{1,-1}(\Tilde{f},(f,v), \Psi_{1,0}(\alpha)]_S \, . \hfill$$
Furthermore, for 
any graded linear map, the "rules" (1)-(3) immediately imply $(A_1)$ (for $\Psi$) and $(A_2)$
(for $\Psi$),
and also the fact that the LHS of $(A_3')$ is equal to
$\Psi_{1,-1}((0, (\iota_{v\wedge X_{\alpha}}(\o),[v,X_{\alpha}]_{S} ))=[v, X_{\alpha}]_S , $
which is precisely also the RHS of $(A_3')$.
It is thus equivalent to say that a graded linear map
$\Psi=(\Psi_1,\Psi_2)$ with (1)-(3) is a 2-plectic gradient and to say that $\Psi_2$ satisfy 
the conditions $\Psi_2(d\Tilde{f},\alpha)=0$ and $(A_4)$. These two last conditions are trivially guaranteed when $\Psi_2=0$.
\end{proof}

\medskip
\begin{rema}\label{psi-rem6}\hfill

\begin{enumerate}
   \item 
   We always obtain a (unique!) strict 2-plectic gradient $\Psi$ by setting $\Psi_2=0$ and  $\Psi_1$ as stipulated by $(1)-(3)$ of Definition \ref{gradient}.
\item 
Since $\Psi_2$ can be chosen to be non zero, we provide definitions and formulas valid in the general case. Nevertheless, we mainly consider the case where $\Psi$ is the strict 2-plectic gradient described in the first part of this remark. 
\end{enumerate}
\end{rema}

\section{Comomentum maps}

In this section, we suppose that $(M,\omega)$ is a 2-plectic manifold, and we introduce the notion of 2-comomentum maps, which are $\mathsf{\widetilde{D^2(M,\omega)}}$-valued Lie 2-morphisms compatible with a given 2-action $\rho$  of a general Lie 2-algebra $\mathfrak{g}$ on $M$. 
We will also explain how the notion of ``homotopy (co-)moment maps'' (see \cite{CFRZ, MaZa, RyWu1, Mi}) fits into our more general framework.

\subsection{The notion of 2-comomentum maps}  \hfill

\bigskip
Let $(M,\omega)$ be a 2-plectic manifold and $\rho: \mathfrak{g}=\mathfrak{g}_{-1}\oplus\mathfrak{g}_{0} \longrightarrow \mathfrak{X}^2(M)\oplus\mathfrak{X}^1(M)$ a 2-action of a Lie 2-algebra  $(\mathfrak{g}_{-1}\oplus\mathfrak{g}_{0}, l_1, l_2, l_3)$ on $M$. 
We give a notion of 2-comomentum map for $(M,\omega)$ compatible with $\rho$.
We replace the usual target of the comomentum map $\mathsf{L^2(M,\omega)}$ (compare Section 7.2) by $\mathsf{\widetilde{D^2(M,\omega)}}$ since it allows $\mathfrak{g}_{-1}$ to act (by the very definition of a 2-action).

\subsubsection{General definitions}

\begin{defn}
Let $\Psi : \mathsf{\widetilde{D^2(M,\omega)}} =\big(C^{\infty}(M)\times Ham^0(M)\big)\oplus\Omega^1_{Ham}(M) \longrightarrow \mathfrak{X}^2(M)\oplus\mathfrak{X}^1(M)$ be a 2-plectic gradient, i.e., a Lie 2-morphism such that $\Psi_{1,0}(\alpha)=X_\alpha$, $\Psi_{1,-1}((\tilde{f},(f,v)))=v$  
 where $(f,v)$ fulfills $\iota_{v}\omega=-df$ and $X_\alpha\in \mathfrak{X}^1(M)$ is given by $\iota_{X_\alpha}\omega=-d\alpha$.

\begin{enumerate}
    \item A \emph{(Delgado)} {\emph{2-comomentum map}} {\emph{or comomentum map}}, $\lambda$, \emph{for the 2-action $\rho$ along $\Psi$} is a lift of this action to the following commutative diagram of Lie 2-morphisms :

\[
\begin{tikzpicture}[>=latex, node distance=2.5cm, auto]
  \node (A) at (0,0) {$\mathfrak{g} = \mathfrak{g}_{-1}\oplus\mathfrak{g}_{0}$};
  \node (B) at (4,4) {$\mathsf{\widetilde{D^2(M,\omega)}}=\big(C^{\infty}(M)\times Ham^0(M)\big)\oplus\Omega^1_{Ham}(M)$};
  \node (E) at (4,-1) {$\mathfrak{X}^2(M)\oplus\mathfrak{X}^1(M)$};

  \draw[->] (A) -- node[above left] {$\lambda$} (B);
  \draw[->] (B) -- node[above right] {$\Psi$}(E);
  \draw[->] (A) -- node[above] {$\rho$}(E);
\end{tikzpicture}
\]

\bigskip

\item An action $\rho$ is called \emph{(strongly) Hamiltonian along $\Psi$ (or an action with comomentum map}), if there exists a comomentum map for $\rho$ along $\Psi$.
\end{enumerate}
\end{defn}

    \begin{rema} \label{JLie2}
Since $\lambda$ is Lie 2-morphism, it can be decomposed into components ${\lambda}_{1,0}:\mathfrak{g}_0 \longrightarrow \Omega^1_{Ham}(M)$, \
     ${\lambda}_{1,-1}:\mathfrak{g}_{-1} \longrightarrow C^{\infty}(M)\times Ham^0(M)$ and ${\lambda}_{2}:\bigwedge^2\mathfrak{g}_0 \longrightarrow C^{\infty}(M)\times Ham^0(M)$ satisfying the following relations $\forall a\in \mathfrak{g}_{-1}$ and $\forall x, y, z\in \mathfrak{g}_0$:
   
   \begin{itemize}
    \item[$(A_1)$] for ${\lambda}$: ${\lambda}_{1,0}(l_1(a))=d\circ pr_1({\lambda}_{1,-1}(a))$
     \item[$(A_2)$] for ${\lambda}$: ${\lambda}_{1,0}(l_2^p(x,y))=\iota_{X_{{\lambda}_{1,0}(x)}\wedge X_{{\lambda}_{1,0}(y)}}\omega+ d\circ pr_1(\lambda_2(x,y))$
     \item[$(A_3)$] for ${\lambda}$: ${\lambda}_{1,-1}(l_2^m(a,x))=\Tilde{{l}^m_2}({\lambda}_{1,-1}(a),{\lambda}_{1,0}(x)) + {\lambda}_2(l_1(a),x)$
     \item[$(A_4)$] for ${\lambda}$: $\big{(}{\lambda}_2(l_2^p(x,y),z)+c.p.\big{)}+ {\lambda}_{1,-1}(l_3(x,y,z))=(-\iota_{X_{{\lambda}_{1,0}(x)}\wedge X_{{\lambda}_{1,0}(y)}\wedge X_{{\lambda}_{1,0}(z)}}\omega\ ,\ 0)+ \big{(}\Tilde{{l}^m_2}({\lambda}_{1,0}(x),{\lambda}_2(y,z))+c.p.\big{)}$,
    \end{itemize}
    where $pr_1$  denotes the projection on the first component of $C^{\infty}(M)\times Ham^0(M)$.
    \end{rema}
    
\begin{prop}
    \label{Jcomposition}
The equality $\Psi\circ {\lambda}=\rho$ implies $\forall a\in \mathfrak{g_{-1}}$ and $\forall x,y\in \mathfrak{g_0}$ : 
    \begin{itemize}
        \item[$(C_1)$] $\Psi_{1,0}({\lambda}_{1,0}(x))=X_{{\lambda}_{1,0}(x)}=\rho_{1,0}(x)$, i.e., $\iota_{\rho_{1,0}(x)}\omega =- d{\lambda}_{1,0}(x)$
        \item[$(C_2)$] $\Psi_{1,-1}({\lambda}_{1,-1}(a))=\rho_{1,-1}(a)$, i.e., $\iota_{\rho_{1,-1}(a)}\omega = -d\circ pr_{2,1} ({\lambda}_{1,-1}(a))$
        \item[$(C_3)$] $\Psi_{1,-1}({\lambda}_2(x,y))+\Psi_2({\lambda}_{1,0}(x),{\lambda}_{1,0}(y))=\rho_2(x,y)$, i.e.,
        $$\iota_{\big{(}\rho_{2}(x,y)-\Psi_2({\lambda}_{1,0}(x), {\lambda}_{1,0}(y)\big{)}}\omega =-d\circ  pr_{2,1}({\lambda}_2(x,y)).$$
    \end{itemize}
     where $pr_{2,1}$  is defined on $C^{\infty}(M)\times Ham^0(M)$ by
     $${pr_{2,1}}(\Tilde{f},(f,v))=pr_1\circ pr_2(\Tilde{f},(f,v)) =f.$$
\end{prop}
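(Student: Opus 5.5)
The plan is to apply the composition formulas of Lemma \ref{compositionmorphismes} to $\Psi\circ\lambda$ and compare the result with $\rho$ componentwise. Then each identity is translated into a statement about $\omega$ using the defining properties of the 2-plectic gradient (Definition \ref{gradient}) and the Hamiltonian conditions built into $\mathsf{\widetilde{D^2(M,\omega)}}$.

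\textbf{Step 1.} By $(C_1)$–$(C_3)$ of Lemma \ref{compositionmorphismes}, the composite $\Psi\circ\lambda$ has components
\[
\Psi_{1,0}\circ\lambda_{1,0},\qquad \Psi_{1,-1}\circ\lambda_{1,-1},\qquad \Psi_2\circ(\lambda_{1,0}\times\lambda_{1,0})+\Psi_{1,-1}\circ\lambda_2 .
\]
Setting these equal to $\rho_{1,0}$, $\rho_{1,-1}$ and $\rho_2$ gives the first equalities in $(C_1)$, $(C_2)$ and $(C_3)$ directly.

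\textbf{Step 2.} For $(C_1)$, the gradient property $\Psi_{1,0}(\alpha)=X_\alpha$ with $d\alpha=-\iota_{X_\alpha}\omega$ gives $X_{\lambda_{1,0}(x)}=\rho_{1,0}(x)$. Contracting $\omega$ with this vector field yields $\iota_{\rho_{1,0}(x)}\omega=-d\lambda_{1,0}(x)$.

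\textbf{Step 3.} For $(C_2)$ and $(C_3)$, write $\lambda_{1,-1}(a)=(\tilde f,(f,v))$. Properties (1)–(2) of Definition \ref{gradient}, together with linearity, give
\[
\Psi_{1,-1}(\tilde f,(f,v))=\Psi_{1,-1}(\tilde f,(0,0))+\Psi_{1,-1}(0,(f,v))=v .
\]
Since $(f,v)\in Ham^0(M)$, we have $\iota_v\omega=-df=-d\circ pr_{2,1}(\lambda_{1,-1}(a))$, and substituting $v=\rho_{1,-1}(a)$ gives $(C_2)$. For $(C_3)$ the same argument applies to $\lambda_2(x,y)$: writing $\lambda_2(x,y)=(\tilde g,(g,w))$, we get $w=\rho_2(x,y)-\Psi_2(\lambda_{1,0}(x),\lambda_{1,0}(y))$, and contracting with $\omega$ gives the stated formula.

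The only point requiring care is the linearity decomposition of $\Psi_{1,-1}$ in Step 3, which lets us ignore the $\tilde f$-component. Everything else is immediate.
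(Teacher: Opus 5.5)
Your proposal is correct and follows the paper's route: the paper's proof simply specialises the composition formulas $(C_1)$--$(C_3)$ of Lemma \ref{compositionmorphismes} to $\Psi\circ\lambda$. Your Steps 2--3 only spell out the translation into contractions with $\omega$, using $\Psi_{1,0}(\alpha)=X_\alpha$, $\Psi_{1,-1}(\tilde f,(f,v))=v$ and the Hamiltonian pair condition $df=-\iota_v\omega$; note that the ``i.e.'' forms of $(C_2)$ and $(C_3)$ are consequences rather than equivalent reformulations (since $\ker\omega^{(2)}$ may be nontrivial), which is exactly the direction you prove.
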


\begin{proof} 
The proof is obtained upon specialising Lemma \ref{compositionmorphismes}.\end{proof}

\begin{cor}
    If ${\lambda}$ is a 2-comomentum map for $\rho$ along  \emph{any} $\Psi$ then $\forall a\in \mathfrak{g}_{-1}$ and $\forall x, y\in \mathfrak{g}_0$,
     $\iota_{\rho_{1,0}(x)}\omega$, $\iota_{\rho_{1,-1}(a)}\omega$ and $\iota_{\big{(}\rho_{2}(x,y)-\Psi_2({\lambda}_{1,0}(x), {\lambda}_{1,0}(y)\big{)}}\omega$ are exact. Thus a Hamiltonian 2-action is quasi-2-plectic.
\end{cor}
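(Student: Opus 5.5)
The corollary follows by reading off the three identities $(C_1)$--$(C_3)$ of Proposition \ref{Jcomposition}, which hold for any 2-plectic gradient $\Psi$ with $\Psi\circ\lambda=\rho$. Fix $a\in\mathfrak{g}_{-1}$ and $x,y\in\mathfrak{g}_0$.

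\emph{Exactness.} By $(C_1)$, $\iota_{\rho_{1,0}(x)}\omega=-d\lambda_{1,0}(x)$, and $\lambda_{1,0}(x)\in\Omega^1_{Ham}(M)$ is a smooth 1-form, so this contraction is exact. For $(C_2)$, write $\lambda_{1,-1}(a)=(\tilde f,(f,v))$ with $(f,v)\in Ham^0(M)$. Condition (2) of Definition \ref{gradient} gives $\Psi_{1,-1}(\tilde f,(f,v))=v$; by linearity and condition (1) the $\tilde f$-component contributes nothing. Hence $\rho_{1,-1}(a)=v$. Since $(f,v)$ is a Hamiltonian pair, $\iota_v\omega=-df=-d\,pr_{2,1}(\lambda_{1,-1}(a))$, which is exact. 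The same argument applied to $\lambda_2(x,y)=(\tilde g,(g,w))$ gives $\Psi_{1,-1}(\lambda_2(x,y))=w$ with $\iota_w\omega=-dg$. Then $(C_3)$ rewrites as $w=\rho_2(x,y)-\Psi_2(\lambda_{1,0}(x),\lambda_{1,0}(y))$, so the third contraction equals $-dg$ and is exact.

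\emph{Quasi-2-plecticity.} Use the Remark following the definition of multisymplectic multivector fields: $v$ is multisymplectic iff $\iota_v\omega$ is closed. Alternatively, compute directly $\mathcal{L}_v\omega=d\iota_v\omega-(-1)^{|v|}\iota_v d\omega=d\iota_v\omega$, using $d\omega=0$. Exact forms are closed, so $\mathcal{L}_{\rho_{1,0}(x)}\omega=0$ and $\mathcal{L}_{\rho_{1,-1}(a)}\omega=0$, which are exactly the two conditions defining a quasi-2-plectic action. A Hamiltonian 2-action admits such a $\lambda$ by definition, so it is quasi-2-plectic.

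\emph{Caveat.} The only care needed is in the last step. The third identity gives closedness of $\iota_{\rho_2(x,y)}\omega$ only up to the correction term $\Psi_2(\lambda_{1,0}(x),\lambda_{1,0}(y))$. That is why one obtains only ``quasi'' in general, and full 2-plecticity when, for example, $\Psi$ is the strict gradient with $\Psi_2=0$.
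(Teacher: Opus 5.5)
Your proposal is correct and follows the paper's route: you read the exactness statements directly off $(C_1)$--$(C_3)$ of Proposition \ref{Jcomposition}, using conditions (1)--(3) of Definition \ref{gradient} to identify $\Psi_{1,-1}$ of a triple with its bivector component, and quasi-2-plecticity follows because exact forms are closed and $\mathcal{L}_v\omega=d\iota_v\omega$ when $d\omega=0$. Your caveat about the $\Psi_2$ correction term also matches the paper, which obtains full 2-plecticity only in the fundamental case $\Psi_2=0$.
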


\begin{rema}
        The equations $(A_i)$ and $(C_j)$ ($i=1,\ldots, 4$; $j=1,\ldots, 3$) will be key tools to study examples (see Appendix \ref{app-ex}).
\end{rema}

If we do not assume that ${\lambda}$ is a Lie 2-morphism, we obtain a weakened notion of a comomentum map. 

\begin{defn}
Let $(M,\omega)$ be a 2-plectic manifold. A Lie 2-action $\rho : \mathfrak{g}_{-1}\oplus\mathfrak{g}_{0} \longrightarrow \mathfrak{X}^2(M)\oplus\mathfrak{X}^1(M)$
is called \emph{weakly Hamiltonian along $\Psi$} if there exists a map ${\lambda}:\mathfrak{g}_{-1}\oplus\mathfrak{g}_{0} \longrightarrow \mathsf{\widetilde{D^2(M,\omega)}}$, that can be decomposed into linear components ${\lambda}_{1,0}:\mathfrak{g}_0 \longrightarrow \Omega^1_{Ham}(M)$, \
     ${\lambda}_{1,-1}:\mathfrak{g}_{-1} \longrightarrow C^{\infty}(M)\times Ham^0(M)$ and ${\lambda}_{2}:\bigwedge^2\mathfrak{g}_0 \longrightarrow C^{\infty}(M)\times Ham^0(M)$, lifting the action along $\Psi$.
\end{defn} 

\begin{prop}
   A 2-plectic action is weakly Hamiltonian along  $\Psi$ if and only if there exist linear maps ${\lambda}_{1,0}, {\lambda}_{1,-1}$ and ${\lambda}_{2}$ such that $(C_1), (C_2)$ and $(C_3)$ hold.
\end{prop}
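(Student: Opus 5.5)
The statement is essentially a translation of the definition, so the plan is to check both implications directly, using Lemma \ref{compositionmorphismes} as already specialised in Proposition \ref{Jcomposition}.

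\emph{Necessity.} Suppose $\rho$ is weakly Hamiltonian along $\Psi$. Then there is a graded linear map $\lambda=(\lambda_{1,0},\lambda_{1,-1},\lambda_2)$ with $\Psi\circ\lambda=\rho$, where the composite is formed by the formulas $(C_1)$--$(C_3)$ of Lemma \ref{compositionmorphismes}. That lemma's formulas only involve the components and make sense for arbitrary graded linear maps. Comparing components of $\Psi\circ\lambda$ and $\rho$ degree by degree gives exactly $(C_1)$, $(C_2)$ and $(C_3)$ of Proposition \ref{Jcomposition}. Its proof never uses the morphism identities $(A_i)$ for $\lambda$, only the composition rule. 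To obtain the reformulations in terms of $\omega$, I use properties (2) and (3) of Definition \ref{gradient}: $\Psi_{1,-1}(\tilde f,(f,v))=v$ with $\iota_v\omega=-df$, and $\Psi_{1,0}(\alpha)=X_\alpha$ with $\iota_{X_\alpha}\omega=-d\alpha$.

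\emph{Sufficiency.} Conversely, given linear maps $\lambda_{1,0},\lambda_{1,-1},\lambda_2$ into the stated spaces satisfying $(C_1)$--$(C_3)$, I assemble them into one graded map $\lambda$. The composition formula then shows that $\Psi\circ\lambda$ has components $\rho_{1,0}$, $\rho_{1,-1}$ and $\rho_2$, so $\lambda$ lifts $\rho$ along $\Psi$. The only subtlety is that the $C_3$ components need not be well-defined as actual elements. To handle this, I note that $\lambda_{1,0}(x)\in\Omega^1_{Ham}(M)$ and $\lambda_{1,-1}(a)\in C^\infty(M)\times Ham^0(M)$ by assumption, so every term in $(C_3)$ is meaningful.

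The 2-plectic hypothesis is not needed for the equivalence itself. It only serves to make the conditions consistent, in the sense of the preceding corollary: each of the relevant contracted forms must be closed, which is necessary for their exactness.

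I expect no real obstacle. The only point needing care is to justify, for non-morphisms, that the component formulas of Lemma \ref{compositionmorphismes} are the definition of the composite, or equivalently of "lifting".
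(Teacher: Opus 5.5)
Your argument is correct and matches the paper's proof, which simply observes that $\Psi\circ\lambda=\rho$ is, componentwise, exactly $(C_1)$--$(C_3)$ of Proposition \ref{Jcomposition}; you are right that the composite of a Lie 2-morphism with a mere graded linear map is understood via the formulas of Lemma \ref{compositionmorphismes}. One caution: the $\omega$-forms after ``i.e.'' in $(C_2)$ and $(C_3)$ are only consequences of the component identities, not equivalent to them, because $\ker\omega^{(2)}$ can be nontrivial. So if you start from those forms, you must first replace the bivector components of $\lambda_{1,-1}(a)$ and $\lambda_2(x,y)$ by $\rho_{1,-1}(a)$ and $\rho_2(x,y)-\Psi_2(\lambda_{1,0}(x),\lambda_{1,0}(y))$ respectively; also, in your well-definedness remark for $(C_3)$ the relevant map is $\lambda_2$, not $\lambda_{1,-1}$.
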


\begin{proof}
    Since $\Psi\circ {\lambda}=\rho$, the result follows from Proposition \ref{Jcomposition}.
    
\end{proof}

\begin{prop}
    A strongly Hamiltonian 2-action is weakly Hamiltonian.
\end{prop}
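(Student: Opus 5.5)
This is essentially a matter of unwinding the two definitions. Let $\rho$ be a strongly Hamiltonian 2-action along a 2-plectic gradient $\Psi$. By definition there is then a Lie 2-morphism $\lambda=(\lambda_1,\lambda_2):\mathfrak{g}_{-1}\oplus\mathfrak{g}_0\to\mathsf{\widetilde{D^2(M,\omega)}}$ with $\Psi\circ\lambda=\rho$, the composition being taken in the sense of Lemma \ref{compositionmorphismes}.

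I will simply forget the morphism property of $\lambda$ and keep only its underlying data. These are the linear components $\lambda_{1,0}:\mathfrak{g}_0\to\Omega^1_{Ham}(M)$, $\lambda_{1,-1}:\mathfrak{g}_{-1}\to C^\infty(M)\times Ham^0(M)$ and the antisymmetric $\lambda_2:\bigwedge^2\mathfrak{g}_0\to C^\infty(M)\times Ham^0(M)$, as in Remark \ref{JLie2}. They are precisely the data required in the definition of a weakly Hamiltonian action.

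It remains to check that these data lift $\rho$ along $\Psi$. The lifting condition is the same equality $\Psi\circ\lambda=\rho$, read componentwise through $(C_1)$–$(C_3)$ of Lemma \ref{compositionmorphismes}. Proposition \ref{Jcomposition} extracts exactly these three componentwise identities from $\Psi\circ\lambda=\rho$, and its proof uses only the composition formulas, not the relations $(A_1)$–$(A_4)$ for $\lambda$. So $(C_1)$, $(C_2)$ and $(C_3)$ hold for the given components.

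If $\rho$ is moreover 2-plectic, one can also conclude via the preceding proposition, which characterises weak Hamiltonicity by the existence of linear maps satisfying $(C_1)$–$(C_3)$. In general, though, the direct reading of the definition suffices.

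The only point needing care is well-definedness. The composite formula $(C_3)$ must make sense for an arbitrary triple of linear maps, not only for Lie 2-morphisms, so that "lifting along $\Psi$" in the weak definition means the same componentwise identities. Since $(C_1)$–$(C_3)$ are purely algebraic expressions in the components, this is immediate, and I expect no real obstacle.
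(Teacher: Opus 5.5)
Your proposal is correct and matches the paper, whose proof is simply ``Obvious''. A strongly Hamiltonian action comes with a Lie 2-morphism $\lambda$ satisfying $\Psi\circ\lambda=\rho$. Forgetting the relations $(A_1)$--$(A_4)$ for $\lambda$ leaves exactly the linear components $\lambda_{1,0},\lambda_{1,-1},\lambda_2$ lifting $\rho$ along $\Psi$, which is what the weak definition requires.
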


\begin{proof}
    Obvious.
\end{proof}

\subsubsection{The fundamental case}
 
\begin{defn}\hfill
\begin{enumerate}
    \item  A 2-comomentum map for $\rho$ along $\Psi$ is called \emph{fundamental} if $\Psi$ is a strict Lie 2-morphism (that is, $\Psi_2 = 0$) (see Proposition \ref{psi-prop6}).
    \item A strongly (or weakly) Hamiltonian 2-action along $\Psi$ is called \emph{fundamental} if $\Psi$ is strict.
\end{enumerate}
   
\end{defn}

\begin{prop}
    If ${\lambda}$ is a  fundamental 2-comomentum map, then
\begin{enumerate}
    \item $\iota_{\rho_{1,0}(x)}\omega = -d{\lambda}_{1,0}(x)$,
    \item $\iota_{\rho_{1,-1}(a)}\omega = -d\circ pr_{2,1} ({\lambda}_{1,-1}(a))$ 
    \item $\iota_{\rho_{2}(x,y)}\omega =- d\circ pr_{2,1} ({\lambda}_{2}(x,y))$,
    
\end{enumerate}
$\forall a\in \mathfrak{g}_{-1}$ and $\forall x, y\in \mathfrak{g}_0$.
\end{prop}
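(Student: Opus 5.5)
The plan is to specialise Proposition \ref{Jcomposition} to the case $\Psi_2=0$ and then read off the three identities from the defining properties (1)--(3) of a 2-plectic gradient (Definition \ref{gradient}).

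By definition, a fundamental 2-comomentum map $\lambda$ is a 2-comomentum map along a \emph{strict} 2-plectic gradient $\Psi$. By Remark \ref{psi-rem6} this gradient is unique: $\Psi_2=0$, and $\Psi_1$ is determined by (1)--(3). The commutativity $\Psi\circ\lambda=\rho$ then gives, through Proposition \ref{Jcomposition} (which rests on the composition formulas $(C_1)$--$(C_3)$ of Lemma \ref{compositionmorphismes}), the relations $(C_1)$, $(C_2)$ and $(C_3)$.

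\emph{Item (1).} $(C_1)$ reads $X_{\lambda_{1,0}(x)}=\rho_{1,0}(x)$. Since $X_\alpha$ is defined by $d\alpha=-\iota_{X_\alpha}\omega$, contracting $\omega$ gives $\iota_{\rho_{1,0}(x)}\omega=-d\lambda_{1,0}(x)$.

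\emph{Item (2).} Write $\lambda_{1,-1}(a)=(\tilde f,(f,v))$. Using (1) and (2) together with linearity of $\Psi_{1,-1}$, we get $\Psi_{1,-1}(\tilde f,(f,v))=v$. Since $(f,v)\in Ham^0(M)$, we have $\iota_v\omega=-df$, and $f=pr_{2,1}(\lambda_{1,-1}(a))$. Then $(C_2)$ yields $\rho_{1,-1}(a)=v$, which gives (2).

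\emph{Item (3).} With $\Psi_2=0$, $(C_3)$ reduces to $\Psi_{1,-1}(\lambda_2(x,y))=\rho_2(x,y)$. Applying the same reasoning as in item (2) to $\lambda_2(x,y)=(\tilde g,(g,w))$ gives $\rho_2(x,y)=w$ with $\iota_w\omega=-dg=-d\,pr_{2,1}(\lambda_2(x,y))$.

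The only point needing care is the linearity step $\Psi_{1,-1}(\tilde f,(f,v))=\Psi_{1,-1}(\tilde f,(0,0))+\Psi_{1,-1}(0,(f,v))=0+v$. This is immediate, so I expect no real obstacle; the argument is just a specialisation of Proposition \ref{Jcomposition} to $\Psi_2=0$.
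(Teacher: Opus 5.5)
Your proposal is correct and is the paper's own argument: specialise Proposition \ref{Jcomposition} to $\Psi_2=0$, then use properties (1)--(3) of the 2-plectic gradient (via linearity, $\Psi_{1,-1}(\tilde f,(f,v))=v$) to turn $(C_1)$--$(C_3)$ into the three contraction identities. The paper states this in one line ("Use Proposition \ref{Jcomposition} with $\Psi_2=0$"), and you have spelled out the same steps.
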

    
\begin{proof} 
   Use Proposition \ref{Jcomposition} with $\Psi_2=0$.
\end{proof}

\begin{prop}
    If ${\lambda}$ is a fundamental 2-comomentum map, then
   $\iota_{\rho_{1,0}(x)}\omega$, $\iota_{\rho_{1,-1}(a)}\omega$ and $\iota_{\rho_{2}(x,y)}\omega$ are exact. Thus a (weakly or strongly) fundamental Hamiltonian 2-action is 2-plectic.
\end{prop}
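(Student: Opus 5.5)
The plan is to read the statement off the preceding proposition for fundamental 2-comomentum maps and then apply the remark characterising multisymplectic multivector fields.

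Let $\lambda$ be a fundamental 2-comomentum map, so $\Psi_2=0$. By the preceding proposition (Proposition \ref{Jcomposition} with $\Psi_2=0$), for all $a\in\mathfrak{g}_{-1}$ and $x,y\in\mathfrak{g}_0$ we have
\[
\iota_{\rho_{1,0}(x)}\omega=-d\lambda_{1,0}(x),\qquad
\iota_{\rho_{1,-1}(a)}\omega=-d\bigl(pr_{2,1}(\lambda_{1,-1}(a))\bigr),\qquad
\iota_{\rho_2(x,y)}\omega=-d\bigl(pr_{2,1}(\lambda_2(x,y))\bigr).
\]
The forms $\lambda_{1,0}(x)\in\Omega^1_{Ham}(M)$ and $pr_{2,1}(\lambda_{1,-1}(a)),\,pr_{2,1}(\lambda_2(x,y))\in C^\infty(M)$ are honest differential forms on $M$, so each contraction is exact. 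This proves the first assertion.

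For the second assertion, exact forms are closed. By the remark following the definition of multisymplectic multivector fields, a multivector field $v$ satisfies $\mathcal{L}_v\omega=0$ if and only if $\iota_v\omega$ is closed. One can also check this directly: $d\omega=0$ gives $\mathcal{L}_v\omega=d\iota_v\omega$. Hence $\mathcal{L}_{\rho_{1,0}(x)}\omega$, $\mathcal{L}_{\rho_{1,-1}(a)}\omega$ and $\mathcal{L}_{\rho_2(x,y)}\omega$ all vanish, which is exactly the definition of a 2-plectic action.

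The weakly Hamiltonian case is the only point that needs care, since there $\lambda$ is merely a lift and not a Lie 2-morphism. The three identities above, however, come only from the equality $\Psi\circ\lambda=\rho$ in the sense of composition (Lemma \ref{compositionmorphismes}). The relations $(C_1)$--$(C_3)$ are formal consequences of that composition formula and never use $(A_1)$--$(A_4)$ for $\lambda$. So the same argument applies verbatim to fundamental weakly Hamiltonian actions, and the strongly Hamiltonian case follows because strongly Hamiltonian actions are weakly Hamiltonian.
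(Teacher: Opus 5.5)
Your proof is correct and follows the paper's argument: the identities of the preceding proposition (Proposition \ref{Jcomposition} with $\Psi_2=0$) give exactness, hence closedness, and $d\omega=0$ turns $\mathcal{L}_v\omega$ into $d\iota_v\omega$. You also note explicitly that $(C_1)$--$(C_3)$ use only the composition formula, so the weakly Hamiltonian case is covered as well; the paper leaves this point implicit.
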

\begin{proof}
    They are exact because of the previous proposition, so they are closed. And, since $\omega$ is closed 
    \begin{itemize}
        \item $\mathcal{L}_{\rho_{1,0}(x)}\omega=0$ $\Longleftrightarrow$ $d\circ \iota_{\rho_{1,0}(x)}\omega=0$, 
\item $\mathcal{L}_{\rho_{1,-1}(a)}\omega=0,$ $\Longleftrightarrow$ $d\circ \iota_{\rho_{1,-1}(a)}\omega=0$, 
\item $\mathcal{L}_{\rho_{2}(x,y)}\omega=0 \,$$\Longleftrightarrow$ $d\circ \iota_{\rho_{2}(x,y)}\omega=0$.
 \end{itemize}
\end{proof}

\begin{rema}
    In the examples (see Appendix \ref{app-ex}) we will only consider  fundamental 2-comomentum maps.
\end{rema}

\subsection{2-comomentum maps and homotopy moment maps}\hfill

\bigskip

In this subsection, we will show that our definition generalises the notion of homotopy moment maps, associated to a Lie algebra action $\rho^{\mathfrak{g}_0}$ of a Lie algebra $\mathfrak{g}_0$ on a 2-plectic manifold (see \cite{CFRZ} and \cite{MaZa}).

\subsubsection{Recap' of homotopy moment maps}

\begin{defn}
    [\cite{CFRZ}]
    Let $(M,\omega)$ be a $n$-plectic manifold, let $\mathfrak{g}_0$ be a Lie algebra and let $\rho^{\mathfrak{g}_0}$ be an infinitesimal action of $\mathfrak{g}_0$ on $M$ which associates to $x$ in $\mathfrak{g}_0$ a Hamiltonian vector field $v_x$. A homotopy comomentum map (called ``homotopy moment map'' in \cite{CFRZ}) for this action is a $L_{\infty}$-morphism ${\lambda}^{\mathfrak{g}_0}$ from $\mathfrak{g}_0$ to $\mathsf{L^n(M,\omega)}$ 
     such that, for all $x$ in $\mathfrak{g}_0$, $-\iota_{v_x}\omega=d({\lambda}^{\mathfrak{g}_0}_1(x))$,  where ${\lambda}^{\mathfrak{g}_0}_1:\mathfrak{g}_0\rightarrow\Omega^{n-1}_{Ham}(M)$.
\end{defn}
  \noindent Let us specialise the previous definition to the case where $(M, \omega)$ is a 2-plectic manifold. Then $\mathsf{L^n(M,\omega)}$ is the Lie 2-algebra $\mathsf{L^2(M,\omega)}= C^{\infty}(M)\oplus\Omega^1_{Ham}(M)$ and we have the following important remark.

\begin{rema}
Let $(M,\omega)$ be a 2-plectic manifold, $\mathfrak{g}_0$ be a Lie algebra and let $\rho^{\mathfrak{g}_0}$ be the infinitesimal action of $\mathfrak{g}_0$ on $M$ which associates to $x$ the Hamiltonian vector field $v_x$. 
This action corresponds to a Lie 2-morphism  from $\{0\}\bigoplus \mathfrak{g}_0$ to 
$\{0\}\bigoplus \mathfrak{X}^1(M)$ (considered as Lie 2-algebras). Moreover, the 
homotopy comomentum map ${\lambda}^{\mathfrak{g}_0}$ corresponds to a Lie 2-morphism lift of the action of $\mathfrak{g}_0$ on $(M,\omega)$ along  ${(\Psi\circ I)}_{1,0}$, seen as a Lie 2-morphism that associates to $\alpha$ in $\Omega^1_{Ham}(M)$ the Hamiltonian vector field $X_{\alpha}$ and which is trivial in all other degrees, via the commutation of the diagram below.

\[
\begin{tikzpicture}[>=latex, node distance=2.5cm, auto]
  \node (A) at (0,0) {$\mathfrak{g}_{0}$};
  \node (B) at (3,3) {$\mathsf{L^2(M,\omega)}=C^{\infty}(M)\oplus\Omega^1_{Ham}(M)$};
  \node (C) at (3,0) {$\mathfrak{X}^1(M)$};
  \draw[->] (A) -- node[above left] {${\lambda}^{\mathfrak{g}_0}$} (B);
  \draw[->] (B) -- node[above right] {$(\Psi\circ I)_{1,0}$}(C);
  \draw[->] (A) -- node[above] {$\rho^{\mathfrak{g}_0}$}(C);
\end{tikzpicture}
\]

In the sequel, for simplicity, we will denote $(\Psi\circ I)_{1,0}$ by $\Psi_{1,0}$.
\end{rema}
\medskip
This notion has been generalised to Lie 2-algebras 
$\mathfrak{g}=\mathfrak{g}_{-1}\oplus \mathfrak{g}_0$, such that ($\mathfrak{g}_0, l_2)$ is a Lie algebra
(let us emphasize that this does not hold for a general Lie 2-algebra).

\begin{defn}[\cite{Mam}, \cite{MaZa}]\hfill 

Assume that $(M,\omega)$ is a 2-plectic manifold and that ($\mathfrak{g}=\mathfrak{g}_{-1}\oplus \mathfrak{g}_0, l_1,l_2,l_3)$ is a Lie 2-algebra such that
$(\mathfrak{g}_0,l_2)$ is a Lie algebra. Let $\rho^{\mathfrak{g}_0}$ be the infinitesimal action 
of $\mathfrak{g}_0$ on $M$, which associates to $x$ in $\mathfrak{g}_0$, the 
Hamiltonian vector field $v_x$.
    A homotopy comomentum map for the Lie 2-algebra $\mathfrak{g}$ is a Lie 2-morphism $${\lambda}^\mathfrak{g}:(\mathfrak{g}=\mathfrak{g}_{-1}\oplus\mathfrak{g}_0,l_1,l_2,l_3) \rightarrow (C^{\infty}(M)\oplus\Omega^1_{Ham}(M),\Tilde{l_1^R},\Tilde{l_2^R},\Tilde{l_3^R})$$ such 
    that $$-\iota_{v_x}\omega=d({\lambda}^\mathfrak{g}_{1,0}(x))$$
    for all $x$ in $\mathfrak{g}_0$, where ${\lambda}^\mathfrak{g}_{1,0}:\mathfrak{g}_0\rightarrow\Omega^{1}_{Ham}(M)$, ${\lambda}^\mathfrak{g}_{1,-1}:\mathfrak{g}_{-1}\rightarrow C^\infty(M)$ and ${\lambda}^\mathfrak{g}_2:\bigwedge^2(\mathfrak{g}_0)\rightarrow C^\infty(M)$.
\end{defn} 
The following remark enables us to rewrite this definition as a lift of  $\rho^{\mathfrak{g}_0}$
via a commutative diagram of Lie 2-morphisms. 
\begin{rema}
 Let $(M,\omega)$ be a 2-plectic manifold, $\mathfrak{g}=\mathfrak{g}_{-1}\bigoplus\mathfrak{g}_{0}$ such that $\mathfrak{g}_0$ is a Lie algebra and let $\rho^{\mathfrak{g}_0}$ be an infinitesimal action of $\mathfrak{g}_0$ on $M$ which associates to $x$ in $\mathfrak{g}_0$ a Hamiltonian vector field $v_x$. 
 As already observed in (\cite{Mam}, \cite{MaZa}), since ${\lambda}^\mathfrak{g}$ is a Lie 2-morphism, for all $a$ in $\mathfrak{g}_{-1}$, ${\lambda}^\mathfrak{g}_{1,0}(l_1(a))$ is exact and from the relation
 $$-\iota_{v_{l_1(a)}}\omega=d({\lambda}^\mathfrak{g}_{1,0}(l_1(a)))$$
we deduce that for all $a$ in $\mathfrak{g}_{-1}$, $\rho_{1,0}(l_1(a))=0$. That means that the 
action $\rho^{\mathfrak{g}_0}$ corresponds to a Lie 2-morphism  from $\mathfrak{g}_{-1}\bigoplus \mathfrak{g}_0$ to 
$\{0\}\bigoplus \mathfrak{X}^1(M)$, by putting $\rho_{1,0}=\rho^{\mathfrak{g}_0}$, 
 $\rho_{1,-1}=0$ and $\rho_{2}=0$ . Moreover, the 
homotopy comomentum map ${\lambda}^{\mathfrak{g}}$ corresponds to a lift of this action  on $(M,\omega)$ along  $\Psi_{1,0}$, seen as a Lie 2-morphism which associates to $\alpha$ in $\Omega^1_{Ham}(M)$ the Hamiltonian vector field $X_{\alpha}$, and which is trivial in all other degrees, via the commutation of the diagram below.   
\end{rema}

\[
\begin{tikzpicture}[>=latex, node distance=2.5cm, auto]
  \node (A) at (0,0) {$\mathfrak{g} = \mathfrak{g}_{-1}\oplus\mathfrak{g}_{0}$};
  \node (B) at (3,3) {$\mathsf{L^2(M,\omega)}=C^{\infty}(M)\oplus\Omega^1_{Ham}(M)$};
  \node (C) at (3,0) {$\mathfrak{X}^1(M)$};
  \draw[->] (A) -- node[above left] {${\lambda}^\mathfrak{g}$} (B);
  \draw[->] (B) -- node[above right] {$\Psi_{1,0}$}(C);
  \draw[->] (A) -- node[above] {$\rho^{\mathfrak{g}_0}$}(C);
\end{tikzpicture}
\]

For the sake of clarity, let us write all the relations that ${\lambda}^\mathfrak{g}$, $\rho^{\mathfrak{g}_0}=\rho_{1,0}$ and $\Psi_{1,0}$ satisfy.
\begin{prope}
Since ${\lambda}^\mathfrak{g}$ is Lie 2-morphism, we have $\forall a\in \mathfrak{g}_{-1}$, $\forall x, y, z\in \mathfrak{g}_0$ 
\begin{itemize}
\item[$(A_1)$] for ${\lambda}^\mathfrak{g}$ : ${\lambda}^\mathfrak{g}_{1,0}(l_1(a))=d({\lambda}^\mathfrak{g}_{1,-1}(a))$
\item[$(A_2)$] for ${\lambda}^\mathfrak{g}$ : ${\lambda}^\mathfrak{g}_{1,0}(l_2^p(x,y))=\iota_{((X_{{\lambda}^\mathfrak{g}_{1,0}(x)})\wedge (X_{{\lambda}^\mathfrak{g}_{1,0}(y)}))}\omega+ d({\lambda}^\mathfrak{g}_2(x,y))$
\item[$(A_3)$] for ${\lambda}^\mathfrak{g}$ : ${\lambda}^\mathfrak{g}_{1,-1}(l_2^m(a,x))= {\lambda}^\mathfrak{g}_2(l_1(a),x)$
\item[$(A_4)$] for ${\lambda}^\mathfrak{g}$ : $\big{(}{\lambda}^\mathfrak{g}_2(l_2^p(x,y),z)+c.p.\big{)}+ {\lambda}^\mathfrak{g}_{1,-1}(l_3(x,y,z))=-\omega(X_{{\lambda}^\mathfrak{g}_{1,0}(x)},X_{{\lambda}^\mathfrak{g}_{1,0}(y)}, X_{{\lambda}^\mathfrak{g}_{1,0}(z)})$.
\end{itemize}

The map $\Psi_{1,0}$ satisfies    
    
\begin{itemize}
    \item[$(A_1)$] for $\Psi_{1,0}$ : $\Psi_{1,0}\circ d=0$
\item[$(A_2)$] for $\Psi_{1,0}$ : $\Psi_{1,0}(\iota_{X_\alpha \wedge X_\beta}\omega)=[\Psi_{1,0}(\alpha), \Psi_{1,0}(\beta) ]_S$, $\forall \alpha,\beta\in\Omega^1_{Ham}(M)$.
     \end{itemize}

Since $\Psi \circ {\lambda}^\mathfrak{g}=\rho^{\mathfrak{g}_0}$, we have, $\forall x\in\mathfrak{g}_0$,
$X_{{\lambda}^\mathfrak{g}_{1,0}(x)}=\rho_{1,0}(x).$

The map $\rho^{\mathfrak{g}_0}=\rho_{1,0}$ satisfies

\begin{itemize}
    \item[$(A_1)$] for $\rho_{1,0}$ : $\rho_{1,0}\circ l_1(a)=0$, $\forall a \in \mathfrak{g}_{-1}$
\item[$(A_2)$] for $\rho_{1,0}$ : $\rho_{1,0}(l_2(x,y))=[\rho_{1,0}(x), \rho_{1,0}(y) ]_S$, 
$\forall x, y \in \mathfrak{g}_0$.
     \end{itemize}

\end{prope}

The fact that in this context the action reduces to $\rho_{1,0}$ 
($\rho_{1,-1}=0$ and $\rho_2=0$ since the target of the action is $\mathfrak{X}^1(M)$ and not
$\mathfrak{X}^2(M)\oplus\mathfrak{X}^1(M)$ contrary to our 2-actions) and that
$\rho_{1,0}$ has to satisfy
$\rho_{1,0}(l_1(a))=0$, for any $a$ in $\mathfrak{g}_{-1}$,
explains why  the  authors of \cite{Mam} and \cite{MaZa}
are led to assume (without any lost of generality) that 
their Lie 2-algebra $\mathfrak{g}=\mathfrak{g}_{-1}\bigoplus \mathfrak{g}_{0}$  (where $\mathfrak{g}_0$ is a Lie algebra) is skeletal.

\begin{rema}
     Even for $\rho^{\mathfrak{g}_0}=0$ a homotopy comomentum map can be non trivial. In fact, if $\rho^{\mathfrak{g}_0}=0$, then ${\lambda}^\mathfrak{g}_{1,0}(x)$ is closed $\forall x\in \mathfrak{g}_0$, since $d({\lambda}^\mathfrak{g}_{1,0}(x))=-\iota_{\rho_{1,0}(x)}\omega$  but neither ${\lambda}^\mathfrak{g}_{1,0}$, ${\lambda}^\mathfrak{g}_{1,-1}$ nor ${\lambda}^\mathfrak{g}_2$ have to be zero.
    \end{rema}
    \begin{example}

    Take the Lie 2-algebra $(\mathfrak{g}=\mathfrak{g}_{-1}\oplus \mathfrak{g}_0,\ l_1,\ l_2 = 0,\ l_3 = 0)$ with $\mathfrak{g}_0$ an abelian Lie algebra which can be written  as $\mathfrak{g}_0=\mathfrak{g}_{0,0}\oplus \mathrm{im}(l_1)$, compare Example 2 of Appendix C. 

    We define ${\lambda}^\mathfrak{g}$ degree-wise:
\begin{itemize}
\item ${\lambda}^\mathfrak{g}_{1,0}:\mathfrak{g}_{0,0}\oplus \mathrm{im}(l_1)\longrightarrow \Omega^1_{Ham}(M)$ linear with ${{\lambda}^\mathfrak{g}_{1,0}}_{\vert \mathrm{im}(l_1)}=0$ and ${{\lambda}^\mathfrak{g}_{1,0}}(x)=d(f_x)$, where $f_x\in C^{\infty}(M)$, $\forall x\in \mathfrak{g}_{0,0}$,
\item ${\lambda}^\mathfrak{g}_{1,-1}:\mathfrak{g}_{1,-1} \longrightarrow C^{\infty}(M)$ with ${\lambda}^\mathfrak{g}_{1,-1}\in \mathfrak{g}_{1,-1}^*$,
\item ${\lambda}^\mathfrak{g}_{2}:\bigwedge^2(\mathfrak{g}_{0,0}\oplus \mathrm{im}(l_1))\longrightarrow C^{\infty}(M)$ with ${{\lambda}^\mathfrak{g}_{2}}_{\vert \bigwedge^2(\mathfrak{g}_{0,0})}\in \bigwedge^2\mathfrak{g}_{0,0}^*$ and ${\lambda}^\mathfrak{g}_2(x,y)=0$, whenever $x$ or $y$ is in $\mathrm{im}(l_1)$.
     \end{itemize}
    \end{example}
    

\subsubsection{Homotopy moment maps as 2-comomentum maps}\hfill

In the previous subsection, we have seen that the  notion of homotopy moment map in \cite{CFRZ, MaZa} and thus also in \cite{RyWu1, Mi} can be seen as a lift of an infinitesimal action $\rho^{\mathfrak{g}_0}$ of a Lie algebra $\mathfrak{g}_0$  via a commutative diagram of Lie 2-morphisms.
It remains here to see that such homotopy moment maps are particular cases of 2-momentum maps.
Let us first illustrate this assertion by the following diagram.

\[
\begin{tikzpicture}[>=latex, node distance=2.5cm, auto]
  \node (A) at (0,0) {$\mathfrak{g} = \mathfrak{g}_{-1}\oplus\mathfrak{g}_{0}$};
  \node (B) at (6,6) {$\mathsf{\widetilde{D^2(M,\omega)}}=\big(C^{\infty}(M)\times Ham^0(M)\big)\oplus\Omega^1_{Ham}(M)$};
  \node (C) at (4,2) {$\mathsf{L^2(M,\omega)}$};
  \node (D) at (4,0) {$\mathfrak{X}^1(M)$};
  \node (E) at (6,-2) {$\mathfrak{X}^2(M)\oplus\mathfrak{X}^1(M)$};
   \node (F) at (5.7,5.8) { };
   \node (G) at (4.3,2.2) { };
    \node (H) at (5.5,5.8) { };
  
  \draw[->] (A) -- node[above left] {${\lambda}$} (H);
  \draw[->] (A) -- node[above] {${\lambda}^\mathfrak{g}$}(C);
  \draw[->] (B) -- node[above right] {$\Psi$}(E);
  \draw[->] (F) -- (C);
  \draw[->] (G) -- (B);
  \draw[->] (C) -- node[above right] {$\Psi_{1,0}$}(D);
  \draw[->] (E) -- (D);
  \draw[->] (A) -- node[above] {$\rho_{1,0}$}(D);
  \draw[->] (A) -- node[above] {$\rho$}(E);
  \draw[->] (E) -- (D);
   \draw[->] (F) -- node[above] {$\Phi\ $}(C);
    \draw[->] (G) -- node[above right] {$\ I\ $}(B);
\end{tikzpicture}
\]

Let us recall that, in this diagram, $M$ is a 2-plectic manifold, 
($\mathfrak{g}_{-1}\oplus \mathfrak{g}_0, l_1,l_2,l_3)$ is a Lie 2-algebra such that
$(\mathfrak{g}_0,l_2)$ is a Lie algebra, $I$ is the Lie 2-morphism injecting $\mathsf{L^2(M,\omega)}$ in $\mathsf{\widetilde{D^2(M,\omega)}}$, constructed in proposition \ref{I}, $\Phi$ is the Lie 2-morphism constructed in proposition \ref{S}, $\Psi$ is the Lie 2-morphism 
from  $\mathsf{\widetilde{D^2(M,\omega)}}=(C^{\infty}(M)\times Ham^0(M)) \oplus\Omega^1_{Ham}(M)$ into $\mathfrak{X}^2(M)\oplus\mathfrak{X}^1(M)$, studied and employed before (here $\Psi$ is supposed to be strict).

The precise correspondence between homotopy comomentum maps and 2-comomentum maps can now easily be described. First, if ${\lambda}^\mathfrak{g}$ is a homotopy comomentum map for the Lie 2-algebra $\mathfrak{g}$ in the sense of \cite{MaZa}, i.e., $\displaystyle {\lambda}^\mathfrak{g}:(\mathfrak{g}_{-1}\oplus\mathfrak{g}_0,l_1,l_2,l_3) \rightarrow (\mathsf{L^2(M,\omega)},\Tilde{l_1^R},\Tilde{l_2^R},\Tilde{l_3^R})$ such that 
    $-\iota_{v_x}\omega=d({\lambda}^\mathfrak{g}_{1,0}(x))$, $\forall x\in\mathfrak{g}_0$, with ${\lambda}^\mathfrak{g}_{1,0}:\mathfrak{g}_0\rightarrow\Omega^{1}_{Ham}(M)$, ${\lambda}^\mathfrak{g}_{1,-1}:\mathfrak{g}_{-1}\rightarrow C^\infty(M)$ and ${\lambda}^\mathfrak{g}_2:\bigwedge^2(\mathfrak{g}_0)\rightarrow C^\infty(M)$), then 
    $\rho: \mathfrak{g}_{-1}\oplus \mathfrak{g}_0\rightarrow \mathfrak{X}^2(M)\oplus
    \mathfrak{X}^1(M)$ defined by $\rho_{1,0}=\rho^{\mathfrak{g}_0}$, $\rho_{1,-1}=0$, $\rho_2=0$, is a 2-action. Moreover, the map  ${\lambda}=I\circ {\lambda}^\mathfrak{g} : \mathfrak{g}_{-1}\oplus\mathfrak{g}_0 \rightarrow \mathsf{\widetilde{D^2(M,\omega)}}$ is a 2-comomentum map for this 2-action. 

    \smallskip
    
\noindent Conversely, let 
   $\rho:\mathfrak{g}_{-1}\oplus \mathfrak{g}_0 \rightarrow\mathfrak{X}^2(M)\oplus\mathfrak{X}^1(M)$ be a 2-action with $\rho_{1,-1}=0$ and $\rho_{2}=0$ and let 
   $${\lambda} : (\mathfrak{g}_{-1}\oplus\mathfrak{g}_0,l_1,l_2,l_3) \rightarrow (\mathsf{\widetilde{D^2(M,\omega)}},\Tilde{l_1},\Tilde{l_2},\Tilde{l_3})$$ be a 2-comomentum map for this 2-action. Then 
   ${\lambda}^\mathfrak{g}=\Phi\circ {\lambda} : (\mathfrak{g}_{-1}\oplus\mathfrak{g}_0,l_1,l_2,l_3) \rightarrow \mathsf{L^2(M,\omega)}$
   is a homotopy comomentum map for the Lie 2-algebra $\mathfrak{g}_{-1}\oplus \mathfrak{g}_0$ in the sense of
    \cite{MaZa} and $\rho_{1,0}$ defines  an infinitesimal action $\rho^{\mathfrak{g}_0}=\rho_{1,0}$ 
    of the Lie algebra $\mathfrak{g}_0$ on $M$ by Hamiltonian vector fields.
    



\appendix

\section{Quasi-isomorphism with skeletal Lie 2-algebra}

In this appendix, we will give a constructive proof of Proposition \ref{quasi-iso}. For the convenience of the reader we did not put this proof in the main body of the paper. 
We also give examples of concrete applications of this proposition.

\subsection{Proof}\hfill

\medskip

\noindent {\bf Claim :} Every Lie 2-algebra $(L_{-1}\oplus L_0,l_1,l_2,l_3)$ is quasi-isomorphic to a skeletal Lie 2-algebra $(\overline{L_{-1}}\oplus\overline{L_0}, \overline{l_1}=0, \overline{l_2}, \overline{l_3})$ with $\overline{L_{-1}}= \ker(l_1)$ and $\overline{L_0}=L_0/
\mathrm{im}(l_1)$.
\vskip0,5cm
\begin{proof}
We choose decompositions of the vector spaces $L_{-1}$ and $L_0$ as follows
$$L_{-1}= \ker(l_1) \oplus C \quad {\hbox{and}}\quad L_{0} = \mathrm{im}(l_1)\oplus C'$$
where the sub vector space $C$ is a complement to $\ker(l_1)$ and $C'$ is 
a complement to $\mathrm{im}(l_1)$. We have to construct to define the mappings $\overline{l_2}$, that is $\overline{l_2^p}$ and $\overline{l_2^m}$, and $\overline{l_3}$,
as well as a Lie 2-algebra morphism from $L_{-1}\oplus L_0$
to $\overline{L_{-1}}\oplus\overline{L_0}$. 
First, we define the maps $F_{1,0} = L_0\rightarrow \overline{L_0}$, which associates to an element $x$ in $L_0$ his class in $\overline{L_0}$ and $F_{1,-1}:L_{-1}\rightarrow \overline{L_{-1}'}$, which associates to an element $a=a'+c$ in $L_{-1}=\ker(l_1)\oplus C$, the element $a'$ in $\ker(l_1),$
as illustrated by the following diagram : 
$$\begin{matrix}       
&L_{-1} = \ker(l_1) \oplus C &\stackrel{l_1}{\longrightarrow} & L_0= \mathrm{im}(l_1)\oplus C'&\\
&&&&\\
&\Bigg\downarrow F_{1,-1} &  & \Bigg\downarrow F_{1,0}&\\
&&&&\\
&\overline{L_{-1}}= \ker(l_1) &\stackrel{\overline{l_1}=0}{\longrightarrow}&\overline{L_0}= L_0/\mathrm{im}(l_1) \, .&\\

\end{matrix}$$
By construction, $F_1$ is a morphism of chain complexes from 
$L_{-1}\stackrel{l_1}{\longrightarrow} L_{0}$ to a chain complex with trivial differential
$\overline{L_{-1}}\stackrel{\overline{l_1}=0}{\longrightarrow} \overline{L_{0}}$. This 
morphism clearly induces an isomorphism of the associated cohomology vector spaces.
Now, the Lie 2-algebra morphism $(F_1,F_2)$ has to satisfy the following relations.
\begin{itemize}
        \item [$(A_1)$] $F_{1,0}\circ l_1=0$
        \item [$(A_2)$] $\overline{l_2^p}(F_{1,0}(x),F_{1,0}(y))=F_{1,0}(l_2^p(x,y))$
        \item [$(A_3)$] $F_{1,-1}(l_2^m(a,x))=F_2(l_1(a),x)+
        \overline{l_2^m}(F_{1,-1}(a),F_{1,0}(x))$
       \item [$(A_4)$] $F_{1,-1}(l_3(x,y,z))+\big{(}F_2(l_2^p(x,y),z)\big{)}+ c.p.$\\
       $\quad \quad =\overline{l_3}(F_{1,0}(x),F_{1,0}(y),F_{1,0}(z))+\big{(}\overline{l_2^m}(F_{1,0}(x),F_2(y,z))\big{)}+c.p.$
       \par\noindent
       \end{itemize}
       where $c.p.$ denotes the cyclic permutation.\\
       The relation $(A_1)$ is satisfied by construction. The relation $(A_2)$ implies the above formula for $\overline{l_2^p}$, that is immediately checked to be well-defined. We define $\overline{l_2^m}$, for $a'$ in $\ker(l_1)$ and $x$ in $L_0$, by 
       $$\overline{l_2^m}(a',F_{1,0}(x))= l_2^m(a,x),$$ imposed by $(A_3)$
       and again easily checked to be well-defined.
       For $a'$ in $\ker(l_1)$, $c$ in $C$, $x$ in $L_0$
       and $b$ in $L_{-1}$, the following identity must be satisfied
       $$\overline{l_2^m}\big(F_{1,-1}(a'+c), F_{1,0}(x+l_1(b))\big)= \overline{l_2^m}\big(a',F_{1,0}(x)\big).$$
       This relation implies that for all $c$ in $C$ and $x$ in $L_0$, 
       $$F_2(x,l_1(c))=F_{1,-1}(l_2^m(x,c)).$$
       We thus define $F_2$ for all $x$ and $y$ in $L_0$, by
       $$F_2(x,y)=F_{1,-1}(l_2^m(x,c))$$
       where we have decomposed $y$ as $$y=l_1(a'+c)+c'=l_1(c)+c'$$
       with $a'$ in $\ker(l_1)$, $c$ in $C$ and $c'$ in $C'$, using the decomposition of $L_0$ and $L_{-1}$.
       Finally,
       the relation $(A_4)$ gives the definition of $\overline{l_3}$. For all $x$,$y$,$z$ in $L_0$,
       $$\overline{l_3}(F_{1,0}(x),F_{1,0}(y),F_{1,0}(z))= F_{1,-1}(l_3(x,y,z))+\big(F_2(l_2^p(x,y),z)\big)+c.p.-\big(\overline{l_2^m}(F_{1,0}(x), F_2(y,z))\big)+c.p.$$
       Since the mappings $\overline{l_2}$ and $\overline{l_3}$ are well-defined, we conclude that 
$(\overline{L_{-1}}\oplus\overline{L_0}, \overline{l_1}=0, \overline{l_2}, \overline{l_3})$ is a skeletal Lie 2-algebra. 
\end{proof}

\subsection{Examples}\hfill

\medskip

Let us check explicitly the above quasi-isomorphism for two fundamental  examples of Lie 2-algebras.

\bigskip

\noindent (1) For the algebra of observables of Rogers of $(M,\omega)$, a connected 2-plectic manifold, the quasi-isomophism is described by the following diagram
$$\begin{matrix}       
&L_{-1} =C^{\infty}(M) &\stackrel{l_1=d}{\longrightarrow} & L_0=\Omega^1_{Ham}(M)&\\
&&&&\\
&\Bigg\downarrow F_{1,-1} &  & \Bigg\downarrow F_{1,0}&\\
&&&&\\
&\overline{L_{-1}}= \ker(l_1)\cong \R &\stackrel{\overline{l_1}=0}{\longrightarrow}&\overline{L_0}= L_0/\mathrm{im}(l_1)=L_0/{dC^{\infty}(M)}\, .&\\

\end{matrix}$$
In this case $l_2^m=0$, $F_2=0$. Furthermore, $F_{1,0}$ associates to a 1-form $\a$ in $\Omega^1_{Ham}(M)$ its class in $L_0/dC^{\infty}(M)$, $F_{1,-1}$ associates to a function $f$ in $C^{\infty}(M)$ an element of $\R$ and for all $\alpha$, $\beta$, $\gamma$ in $L_0$,
$$\overline{l_2}^p(F_{1,0}(\a),F_{1,0}(\b))= F_{1,0}(\iota_{X_\a\wedge X_{\b}}(\omega)) \,\, \hbox{and}$$
$$\overline{l_3}(F_{1,0}(\a), F_{1,0}(\b), F_{1,0}(\gamma)= F_{1,-1}\big(\iota_{X_\a\wedge X_\b\wedge X_{\gamma}}(\omega)\big).$$
\vskip0,25cm
\noindent (2) Let the Lie 2-algebra $( \mathfrak{g_{-1}}\oplus\mathfrak{g}_{0}, l_1, l_2, l_3)$ be given by $\mathfrak{g_{-1}}=<a>_{\R}$ and $\mathfrak{g}_0=<x_1,x_2,x_3>_{\R}$ and
the following brackets
\begin{itemize}
    \item $l_1(a)=x_2$,
     \item $l_2(x_1,x_2)=-x_2$,
     \item $l_2(x_1,x_3)=x_1$,
     \item $l_2(a,x_1)=a$,
     \item $l_3(x_1,x_2, x_3)=-a$.
\end{itemize}

For this Lie 2-algebra, one obtains the below diagram 
$$\begin{matrix}       
&\mathfrak{g}_{-1} =<a> &\stackrel{l_1}{\longrightarrow} & \mathfrak{g}_0&\\
&&&&\\
&\Bigg\downarrow F_{1,-1} &  & \Bigg\downarrow F_{1,0}&\\
&&&&\\
&\overline{\mathfrak{g}_{-1}}=\{0\} &\stackrel{\overline{l_1}=0}{\longrightarrow}&\overline{\mathfrak{g}_0}= \mathfrak{g}_0/\mathrm{im}(l_1)=
\mathfrak{g}_0/<x_2>_{\R} &\\
\end{matrix}$$
In this case, $F_2=0$, $l_2^m=0$, $l_3=0$, $F_{1,-1}=0$ and $\overline{\mathfrak{g}_0}=<F_{1,0}(x_1),F_{1,0}(x_3)>$
is the Lie algebra with the bracket
$$\overline{l_2^p}(F_{1,0}(x_1), F_{1,0}(x_3))= F_{1,0}(x_1).$$


\section{2-actions and endomorphisms of the space of observables}\label{endo}

Since the Lie algebra structure on vector fields on a manifold $M$ is induced by the commutator of endomorphisms of the "observables" $C^{\infty}(M)$, in this appendix we extend this mechanism to the case of the Lie 2-algebra $\mathfrak{X}^2(M)\oplus\mathfrak{X}^1(M)$.


\subsection{The Lie 2-algebra structure on the endomorphisms of a 2-graded vector space}


\noindent Let $V=V_{-1}\oplus V_0$ be a graded vector space, $V_{-1}$ being of degree $-1$ and $V_0$ of degree $0$.
We consider the space $\End(V_{-1}\oplus V_0)=Hom(V_0, V_{-1})\oplus \left(\End (V_{-1})\times \End (V_0) \right)$ as graded vector space equipped with the Lie 2-algebra structure (\cite{LaLi2}):

\begin{enumerate}
    
 \item $\tau_1: Hom(V_0, V_{-1})\rightarrow \End (V_{-1})\times \End (V_0)$ and $\tau_1=0$, 
     \item $\tau_2$ splits into
     \begin{enumerate}
        
          \item $\tau_2^p:\big(\End (V_{-1})\times \End (V_0)\big)\times \big(\End (V_{-1})\times \End (V_0)\big)\rightarrow \End (V_{-1})\times \End (V_0)$, defined by $\tau_2^p((s_{-1},s_0),(t_{-1}, t_0))=([s_{-1},t_{-1}], [s_0,t_0])$ where $[.,.]$ is the commutator of endomorphisms when $s_i,t_i\in \End(V_i) $ for $i\in\{-1,0\}$.
     \item $\tau_2^m: Hom(V_0, V_{-1})\times \big(\End (V_{-1})\times \End (V_0)\big)\rightarrow Hom(V_0, V_{-1})$ defined by  $\tau_2^m(\phi,(s_{-1}, s_0))=\phi\circ s_0-s_{-1}\circ \phi$ for $\phi\in Hom(V_0, V_{-1})$, $s_0\in \End(V_0)$ and $s_{-1}\in\End(V_{-1})$ 
     \end{enumerate}
     \item $\tau_3: \bigwedge^3 \big(\End ( V_{-1})\times \End (V_0) \big)\rightarrow Hom(V_0, V_{-1})$ and $\tau_3=0$.
 \end{enumerate}
\begin{proof}
As $\tau_1=0$ and $\tau_3=0$, the relation $(R_1)$, $(R_2)$, $(R_3)$ and $(R_6)$ are satisfied. $(R_4)$ follows from the Jacobi identities on $\End (V_{-1})$ and $\End (V_0)$, and $(R_5)$ from direct verification.
\end{proof}
\begin{rema}
    We suppress here the term $Hom(V_{-1},V_0)$ since it would be of degree $1$.
\end{rema}
\bigskip


\subsection{2-actions as endomorphisms}\hfill\\
Consider the map
$F: \big(\mathfrak{X}^2(M)\oplus\mathfrak{X}^1(M), 0,\nu_2, 0\big)\rightarrow \left(\End\left(C^\infty(M) \oplus \Omega^1(M)\right), 0, \tau_2,0 \right)$ given by:
\begin{enumerate}
    \item $F_{1,0}:\mathfrak{X}^1(M)\rightarrow \End (C^\infty(M))\times \End (\Omega^1(M))$, $X\mapsto (\mathfrak{L}_X,\mathfrak{L}_X)$
    \item $F_{1,-1}:\mathfrak{X}^2(M)\rightarrow Hom(\Omega^1(M), C^\infty(M))$, $X\wedge Y\mapsto \mathfrak{L}_{X\wedge Y}$
     \item $F_2:\bigwedge^2(\mathfrak{X}^1(M))\rightarrow Hom(\Omega^1(M), C^\infty(M))$, $F_2=0$.
    \end{enumerate}

\begin{prop}
     $F$ is a Lie 2-morphism.
\end{prop}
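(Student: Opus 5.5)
The plan is to check the four relations $(A_1)$--$(A_4)$ of a Lie 2-morphism directly for $F$. Source and target are both very degenerate ($\nu_1=\nu_3=0$, $\tau_1=\tau_3=0$, $F_2=0$), so most relations collapse, and the content reduces to the multi Cartan commutation rules of Proposition 4.3.

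First I make the definition of $F_{1,-1}$ precise. The multi Lie derivative $\mathfrak{L}_v=d\iota_v-(-1)^{|v|}\iota_v d$ is defined for every $v\in\mathfrak{X}^2(M)$, not just decomposable ones. For $\alpha\in\Omega^1(M)$ we have $\iota_v\alpha=0$, so $\mathfrak{L}_v\alpha=-\iota_v d\alpha\in C^\infty(M)$. Hence $v\mapsto \mathfrak{L}_v|_{\Omega^1(M)}$ is a well-defined linear map $\mathfrak{X}^2(M)\to Hom(\Omega^1(M),C^\infty(M))$. It is even $C^\infty(M)$-linear in $v$ on 1-forms, since it equals $-\iota_v d$. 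This map extends the prescription $X\wedge Y\mapsto\mathfrak{L}_{X\wedge Y}$, so $F_{1,-1}$ has the correct degree. Similarly, $F_{1,0}(X)=(\mathfrak{L}_X,\mathfrak{L}_X)$ lands in $\End(C^\infty(M))\times\End(\Omega^1(M))$.

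Next come the trivial relations. $(A_1)$ reads $\tau_1\circ F_{1,-1}=F_{1,0}\circ\nu_1$, and both sides vanish. In $(A_4)$ every term contains $F_2$, $\nu_3$ or $\tau_3$, so both sides are zero. For $(A_2)$ we need $F_{1,0}([X,Y]_S)=\tau_2^p(F_{1,0}(X),F_{1,0}(Y))$, i.e. $\mathfrak{L}_{[X,Y]_S}=[\mathfrak{L}_X,\mathfrak{L}_Y]$ on $C^\infty(M)$ and on $\Omega^1(M)$ separately. This is the fifth rule of Proposition 4.3.

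The only substantive relation is $(A_3)$. With $F_2=0$ it becomes, for $v\in\mathfrak{X}^2(M)$ and $X\in\mathfrak{X}^1(M)$,
$$F_{1,-1}(\nu_2^m(v,X))=\tau_2^m\big(F_{1,-1}(v),F_{1,0}(X)\big),$$
that is,
$$\mathfrak{L}_{[v,X]_S}=\mathfrak{L}_v\circ\mathfrak{L}_X-\mathfrak{L}_X\circ\mathfrak{L}_v \quad\text{as maps } \Omega^1(M)\to C^\infty(M).$$
Here the first $\mathfrak{L}_X$ acts on $\Omega^1(M)$ and the second on $C^\infty(M)$, as prescribed by $\tau_2^m(\phi,(s_{-1},s_0))=\phi\circ s_0-s_{-1}\circ\phi$. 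This is the sixth rule of Proposition 4.3 restricted to 1-forms.

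I expect the main obstacle to be the bookkeeping of sign and ordering conventions. The antisymmetry $\nu_2^m(v,X)=-\nu_2^m(X,v)$, the ordering in $\tau_2^m$, and the contraction convention ($\iota_{X\wedge Y}=\iota_Y\iota_X$, which differs from Delgado's) must all produce the same sign as in Proposition 4.3. As a safeguard I would verify the identity independently on a decomposable $v=Y\wedge Z$. Both sides are $\R$-linear in $v$, and decomposables span locally; alternatively one can use $C^\infty$-linearity of $-\iota_v d$, taking care with the non-tensorial term of $[fv,X]_S$. The direct check writes $[Y\wedge Z,X]_S=[Y,X]\wedge Z+Y\wedge[Z,X]$ and $\mathfrak{L}_{Y\wedge Z}=\mathfrak{L}_Z\iota_Y-\iota_Z\mathfrak{L}_Y$ (last rule of Proposition 4.3). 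It then expands both sides with $[\mathfrak{L}_X,\iota_Y]=\iota_{[X,Y]}$ and $[\mathfrak{L}_X,\mathfrak{L}_Y]=\mathfrak{L}_{[X,Y]}$, so that the terms cancel pairwise. If a global sign mismatch appeared, it would be absorbed by the stated antisymmetry convention for $\nu_2^m$, and I would adjust the order of arguments accordingly.
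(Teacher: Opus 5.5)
Your proposal is correct and follows the paper's proof: $(A_1)$ and $(A_4)$ vanish trivially, $(A_2)$ is $[\mathfrak{L}_X,\mathfrak{L}_Y]=\mathfrak{L}_{[X,Y]_S}$, and $(A_3)$ is the sixth rule of Proposition 4.3 read as $\tau_2^m\big(\mathfrak{L}_v,(\mathfrak{L}_X,\mathfrak{L}_X)\big)=\mathfrak{L}_{[v,X]_S}$ on $\Omega^1(M)$. The signs do match the stated conventions, and your direct check on $v=Y\wedge Z$ cancels exactly, so the closing hedge about absorbing a possible sign mismatch is unnecessary.
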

\begin{proof}
The relations $(A_1)$ and $(A_4)$ are obvious and $(A_2)$ follows directly from $[\mathfrak{L}_X,\mathfrak{L}_Y]=\mathfrak{L}_{[X,Y]_S}$.
It is actually the same for relation $(A_3)$:  $\mathfrak{L}_{[X\wedge Y,Z]_S} = [\mathfrak{L}_{X\wedge Y} , \mathfrak{L}_Z]$ and the latest is $\tau_2^m\left( \mathfrak{L}_{X\wedge Y}, (\mathfrak{L}_Z,\mathfrak{L}_Z)\right)$ which gives $(A_3)$ ($\displaystyle F_{1,-1}([X\wedge Y,Z]_S) = \tau_2^m\left(F_{1,-1}(X\wedge Y), (\mathfrak{L}_Z,\mathfrak{L}_Z)\right)$.
\end{proof}

\begin{rema}
Obviously, the above map $F$ is a natural generalisation of the Lie derivative of vector fields.
The linear map $F_1$ is injective since $F_{1,0}$ and $F_{1,-1}$  are injective.
\end{rema}



\section{Examples of 2-actions and comomentum maps}\label{app-ex}

We consider Lie 2-algebras $(\mathfrak{g}= \mathfrak{g}_{-1}\oplus\mathfrak{g}_{0}, l_1, l_2, l_3)$ and give several explicit examples of Lie 2-actions and comomentum maps, covering different cases, classified via the below shorthands :\\
\noindent S0  signals that $\mathfrak{g}_{0}$ is not a Lie algebra, equivalently $l_1\circ l_3\neq 0$ (and thus $l_1\neq 0$ and $l_3\neq 0$), and \\
\noindent $(\neg$ S0) signals that $\mathfrak{g}_{0}$ is  a Lie algebra, i.e., $l_1\circ l_3 = 0$.\\
    
\noindent We refine the cases (also called situations in this appendix) using the following conditions on the Lie 2-algebra structures (Sx) resp. on the 2-action (Tx) :    
\begin{multicols}{2}
    \begin{enumerate} 
    \item[S1:] $l_1\neq 0$ 
    \item[S2:] $l_1=0$
    \item[S3:] $l_2\neq 0$
    \item[S4:] $l_2= 0$
    \item[S5:] $l_3\neq 0$
    \item[S6:] $l_3= 0$
\end{enumerate}

\begin{enumerate}
    \item[T1:] $\rho_{1,0}\neq 0$ 
    \item[T2:] $\rho_{1,0}= 0$
    \item[T3:] $\rho_{1,-1}\neq 0$ 
    \item[T4:] $\rho_{1,-1}= 0$
    \item[T5:] $\rho_{2}\neq 0$ 
    \item[T6:] $\rho_{2}= 0$
\end{enumerate}
\end{multicols}

\newpage

\begin{tabular}{|p{1,7cm}|p{1,5cm}|p{1,2cm}|p{1,2cm}|p{1,2cm}|p{1,2cm}|p{1,2cm}|p{1,2cm}|p{1,2cm}|p{1,2cm}|} 
\hline  
  &  S0 & $(\neg$ S0) &$(\neg$ S0) & $(\neg$ S0)&$(\neg$ S0)& $(\neg$ S0)&$(\neg$ S0)&$(\neg$ S0) &$(\neg$ S0)\\  
\hline
 & \quad\par $l_1\circ l_3\neq 0$ &S135\par $l_1\neq 0$,\par $l_2\neq 0$, \par $l_3\neq 0$ & S136\par $l_1\neq 0$,\par $l_2\neq 0$, \par $l_3= 0$&S145\par $l_1\neq 0$,\par $l_2=0 $, \par $l_3\neq 0$& S146\par $l_1\neq 0$,\par $l_2= 0$, \par $l_3= 0$&S235\par $l_1= 0$,\par $l_2\neq 0$, \par $l_3\neq 0$&S236\par $l_1= 0$,\par $l_2\neq 0$, \par $l_3= 0$&S245\par $l_1= 0$,\par $l_2= 0$, \par $l_3\neq 0$&S246\par $l_1= 0$,\par $l_2= 0$, \par $l_3= 0$ \\
\hline 
T135\par $\rho_{1,0}\neq 0$\par $\rho_{1,-1}\neq 0$\par $\rho_{2}\neq 0$  &  \quad \par \centering Ex.1& \quad \par \centering Ex.2&\quad \par \centering Ex.2a \par Ex.3b \par & \quad \par \centering Ex.2a&\quad \par \centering Ex.2a& \quad \par \centering Ex.3a&\quad \par \centering Ex.2a& &\\  
\hline
T136\par $\rho_{1,0}\neq 0$\par $\rho_{1,-1}\neq 0$\par $\rho_{2}= 0$  & \quad \par \centering Ex.1a&  \quad \par \centering Ex.2a&\quad \par \centering Ex.2a \quad \par \centering Ex.3b & \quad \par \centering Ex.2a&\quad \par \centering Ex.2a&\quad \par \centering Ex.2b&\quad \par \centering Ex.2a \par Ex.4b& &\\  
\hline
T145\par $\rho_{1,0}\neq 0$\par $\rho_{1,-1}= 0$\par $\rho_{2}\neq 0$  & \quad \par \centering Ex.1&\quad \par \centering Ex.2a\  &\quad \par \centering  Ex.2a \par Ex.3b &\quad \par \centering Ex.2a &\quad \par \centering Ex.2a&&\quad \par \centering Ex.2a \par Ex.4a & \quad \par \centering Ex.2b &\\  
\hline
T146\par $\rho_{1,0}\neq 0$\par $\rho_{1,-1}= 0$\par $\rho_{2}= 0$  &  \quad \par \centering Ex.1a &\quad \par \centering Ex.2a &\quad \par \centering Ex.2a \par   Ex.3b&\quad \par \centering Ex.2a &\quad \par \centering Ex.2a& \quad \par \centering Ex.3a&\quad \par \centering Ex.2a \par Ex.4b &\quad \par \centering Ex.2b &\\  
\hline
T235\par $\rho_{1,0}= 0$\par $\rho_{1,-1}\neq 0$\par $\rho_{2}\neq 0$  &  \quad \par \centering Ex.1&\quad \par \centering Ex.2a  & \quad \par \centering Ex.2a \par Ex.3b &\quad \par \centering Ex.2a &\quad \par \centering Ex.2a  & \quad \par \centering Ex.3a&\quad \par \centering Ex.2a& &\\  
\hline
T236\par $\rho_{1,0}= 0$\par $\rho_{1,-1}\neq 0$\par $\rho_{2}= 0$  & &\quad \par \centering Ex.2a &\quad \par \centering Ex.2a \par   Ex.3b& \quad \par \centering Ex.2a&\quad \par \centering Ex.2a&&\quad \par \centering Ex.2a\par Ex.4b &&  \quad \par \centerline{Ex.5}  \\   
\hline
T245\par $\rho_{1,0}= 0$\par $\rho_{1,-1}= 0$\par $\rho_{2}\neq 0$  & \quad \par \centering Ex.1&\quad \par \centering Ex.2a  & \quad \par \centering Ex.2a \par Ex.3b &\quad \par \centering Ex.2a &\quad \par \centering Ex.2a&&\quad \par \centering Ex.2a\par Ex.4a& \quad \par \centering Ex.2b&\\   
\hline
T246\par $\rho_{1,0}= 0$\par $\rho_{1,-1}= 0$\par $\rho_{2}= 0$  & \quad \par \centering Ex.1a &\quad \par \centering Ex.2a  &\quad \par \centering Ex.2a \par Ex.3b  &\quad \par \centering Ex.2a &\quad \par \centering Ex.2a&&\quad \par \centering Ex.2a\par Ex.4&\quad \par \centering Ex.2b &\\   
\hline
\end{tabular}
\vskip1cm

\noindent Of course, finding the entry "Ex.2a" in the column S145 and the line T235 corresponds to the fact that the Example 2a below satisfies the following conditions : S1, S4 and S5
for the Lie 2-algebra and T2, T3 and T5 for the 2-action (as should be clear from the inscriptions in the table). \\
\noindent Note that a given example can appear in several entries of the table due to the presence of variants of it.
\newpage

\textbf{List of examples}\\

We give the examples in decreasing order of algebraic complexity of the brackets $l_1, l_2, l_3$, starting with  interesting 2-actions ``far'' from Lie algebra actions. 
\medskip

In the below list the manifold $M$ will mostly be an open subset of $\mathbb{\R}^m$ with coordinates $(q^1, \ldots, q^m)$.
Furthermore we always write $\partial_k$ for the vector field $\frac{\partial}{\partial q^k}$ with $k=1,\ldots, m$. If a construction generalizes to larger classes of manifolds without further effort, we will formulate it in the more general setting.

\medskip

For the description of a 2-action there is, of course, no need for a 2-plectic form on $M$. When we consider actions with comomentum map ${\lambda}$, we only consider fundamental 2-comomentum maps (except in Example 5) and  we will take the 2-plectic manifold $(M,\omega)=(\mathbb{R}^3,\o)$, with $\omega=dq^1\wedge dq^2 \wedge dq^3$,
the standard volume form on $\mathbb{R}^3$, except for the Examples 2b, 2c, 4b and 5.
\vskip1cm

\subsection*{Example 1} Lie 2-algebras  $\mathfrak{g}$ satisfying $l_1\circ l_3 \neq 0$ ($\mathfrak{g}_{0}$ is not a Lie algebra)\\

In this example, which covers the situation $S0$, we construct several Lie 2-actions of a given Lie 2-algebra on $M$. We obtain conditions for the existence of a 2-comomentum map when $M=\mathbb{R}^3$ and, when it exists, we specify the comomentum maps. 
\medskip

\subsubsection*{\underline{Example 1a}:}

Let us consider the Lie 2-algebra $(\mathfrak{g}= \mathfrak{g}_{-1}\oplus\mathfrak{g}_{0}, l_1, l_2, l_3)$ given by $\mathfrak{g}_{-1}=\langle a\rangle_{\mathbb{R}}$ and $\mathfrak{g}_0=\langle x_1,x_2,x_3\rangle_{\mathbb{R}}$ with the following non-vanishing brackets:
\begin{enumerate}
    \item $l_1(a)=x_2$
     \item $l_2(x_1,x_2)=-x_2$
     \item $l_2(x_1,x_3)=x_1$
     \item $l_2(a,x_1)=a$
     \item $l_3(x_1,x_2, x_3)=-a$.
\end{enumerate}

\noindent A Lie 2-action of $\mathfrak{g}$ on a manifold $M$, (i.e., a Lie 2-morphism $\rho$ from $\mathfrak{g}_{-1}\oplus\mathfrak{g}_{0}$ to $\mathfrak{X}^2(M)\oplus\mathfrak{X}^1(M)$) must satisfy: 
 \begin{enumerate}
     \item[$(A_1)$] $\rho_{1,0}(x_2)=0$ 
     \item[$(A_2)$] $\rho_{1,0}(x_1)=[\rho_{1,0}(x_1),\rho_{1,0}(x_3)]_S$
     \item[$(A_{31})$] $\rho_{1,-1}(a)=[\rho_{1,-1}(a),\rho_{1,0}(x_1)]_S + \rho_2(x_2,x_1)$
      \item[$(A_{32})$] $[\rho_{1,-1}(a),\rho_{1,0}(x_3)]_S + \rho_2(x_2,x_3)=0$
      \item[$(A_4)$] $-\rho_2(x_2,x_3)-\rho_2(x_1,x_2)-\rho_{1,-1}(a)=[\rho_{1,0}(x_1),\rho_2(x_2,x_3)]_S+[\rho_{1,0}(x_3),\rho_2(x_1,x_2)]_S$.
\end{enumerate}

\begin{rema} \hfill
\
\begin{itemize}
    \item The  relation $(A_4)$ is automatically satisfied because of the relations $(A_2), (A_{31})$ and $(A_{32})$.
    \item There is no condition on $\rho_2(x_1,x_3)$. So we don't have to specify $\rho_2(x_1,x_3)$ in the below sub examples.
    \item $\rho_2=0 $ implies $ \rho_{1,-1}=0$.
\end{itemize}
\end{rema}

\bigskip

\textbf{2-actions with $\rho_{1,0}(x_1)=0$}
\begin{multicols}{2}
\begin{enumerate}
    \item[] Sub example 1:
\begin{itemize}
    \item $\rho_{1,0}(x_1)=\rho_{1,0}(x_2)=0$
     \item $\rho_{1,0}(x_3)=\partial_3$
     \item $\rho_{1,-1}(a)=\rho_2(x_2,x_1)=\partial_2\wedge\partial_3$
      \item $\rho_2(x_2,x_3)=0$.
      \end{itemize}

    \item [] Sub example 2:
\begin{itemize}
    \item $\rho_{1,0}(x_1)=\rho_{1,0}(x_2)=0$
     \item $\rho_{1,0}(x_3)=\partial_3$
     \item $\rho_{1,-1}(a)=\rho_2(x_2,x_1)=\partial_1\wedge\partial_2$
      \item $\rho_2(x_2,x_3)=0$. 
\end{itemize}
\end{enumerate}
\end{multicols}

\textbf{Action on a 2-plectic manifold}

Let now $(M,\omega)$ be the 2-plectic manifold
$(\mathbb{R}^3, dq^1\wedge dq^2 \wedge dq^3$) and consider the existence question for a comomentum map.\\

First of all, we must fulfill the following conditions:
\begin{itemize}
    \item $\iota_{\rho_{1,0}(x_1)}\omega$ and $\iota_{\rho_{1,0}(x_3)}\omega$ are closed
    \item $\iota_{\rho_{1,-1}(a)}\omega$ is closed 
    \item $\iota_{\rho_2(x_i,x_j)}\omega$ is  closed $\forall i,j$.
\end{itemize}
\medskip

\textbf{Explicit 2-comomentum maps}

The above sub examples 1 and 2 with $\rho_{1,0}(x_1)=0$ satisfy 
the three closedness conditions if we add $\iota_{\rho_2(x_1,x_3)}\omega$ being closed.
The ensuing formulas for ${\lambda}$ are then:

\begin{enumerate}
   \item ${\lambda}_{1,0}(x_1)={\lambda}_{1,0}(x_2)=0$ 
     \item ${\lambda}_{1,0}(x_3)=q_1dq_2$
      \item ${\lambda}_2(x_1,x_3)=(c(x_1,x_3), (f,{\rho_2(x_1,x_3)}))$, where $c\in\bigwedge^2(\mathfrak{g}_0^*)$ and $\iota_{\rho_2(x_1,x_3)}\omega= -d f_{\rho_2(x_1,x_3)}$
      \item ${\lambda}_{1,-1}(a)= {\lambda}_2(x_2,x_1)$
      \item \hbox{and for sub example 1 }
      
     $ \begin{cases}
{\lambda}_2(x_2,x_1)=(0,(q_1,\partial_2\wedge\partial_3)) \\
{\lambda}_2(x_2,x_3)=0 \\
\end{cases}$
\item \hbox{or for sub example 2 }

$\begin{cases}
{\lambda}_2(x_2,x_1)=(0,(q_3,\partial_1\wedge\partial_2)) \\
{\lambda}_2(x_2,x_3)=(0,(1,0)) \\
        \end{cases}$
\end{enumerate}

Observe that in this example the existence of a  fundamental comomentum map depends on $\iota_{\rho_2(x_1,x_3)}\omega$.
\begin{rema}
    The  above formulas for the sub example 2 is imposed by the conditions $(A_3)$ and $(A_4)$ for ${\lambda}$.
\end{rema}

\textbf{2-action with $\rho_{1,0}(x_1)\neq 0$}

In this case, the following data satisfy $(A_1)-(A_4)$:

\begin{enumerate}
   \item [(a)] $\rho_{1,0}(x_1)=e^{-q_i}\partial_j$ with $1\leq i\neq j \leq m$
     \item [(b)]$\rho_{1,0}(x_3)=\partial_i$
     \item [(c)]$\rho_{1,-1}(a)=\rho_2(x_2,x_1)=\partial_k\wedge\partial_l$ with $k,l\neq i$
      \item [(d)]$\rho_2(x_2,x_3)=0$. 
\end{enumerate}

\begin{rema}
    The choices $(c)$ and $(d)$ above give 2-actions also in the case $\rho_{1,0}= 0$, and the choices $(a)$ and $(b)$ give 2-actions when $\rho_{1,-1}=0$ and $\rho_2 = 0$.
    Thus the 2-actions of Example 1 cover the situations $T235$, $T135$ and $T4$.
\end{rema}
\begin{rema}
If we add an element to $\mathfrak{g}_{-1}$, say $\mathfrak{g}_{-1}=\langle a, b\rangle_{\mathbb R}$ with the same brackets as above then we can construct a Lie 2-action with $\rho_2=0$ and $\rho_{1,-1}\neq 0$ since $\rho_{1,-1}(b)$ is not related to $\rho_2$ and must only satisfy the condition $[\rho_{1,-1}(b),\mathrm{im}(\rho_{1,0})]= 0$. Hence we are in situation $T136$.
\end{rema}


\bigskip

\subsubsection*{\underline{Example 1b}:} 

Example of a Lie 2-algebra with $dim(\mathfrak{g}_{-1})=2$, $l_2\neq 0$ and $l_1\circ l_3\neq 0$.

Let us consider the Lie 2-algebra $(\mathfrak{g}= \mathfrak{g}_{-1}\oplus\mathfrak{g}_{0}, l_1, l_2, l_3)$ given by $\mathfrak{g}_{-1}=\langle a_z,a_t\rangle_{\mathbb{R}}$ and \\ $\mathfrak{g}_0=\langle x,y,z,t\rangle_{\mathbb{R}}$ with the relations:
\begin{itemize}
    \item $l_1(a_z)=z,l_1(a_t)=t$
     \item $l_2(x,y)=-y$, $l_2(y,z)=z$, $l_2(y,t)=t$
     \item $l^m_2(y,a_z)=a_z$, $l^m_2(y,a_t)=a_t$
     \item $l_3(x,y,z)=a_z$, $l_3(x,y,t)=a_t$.
\end{itemize}

Then a Lie 2-action $\rho$ of this Lie 2-algebra on $M=\mathbb{R}^3$ is given by:
\begin{itemize}
    \item $\rho_{1,0}(y)=0$, $\rho_{1,0}(z)=0$ $\rho_{1,0}(t)=0$
    \item $\rho_{2}(z,t)=0$
    \item  any choice of $\rho_{1,0}(x)$, $\rho_{1,-1}(a_z)$ and $\rho_{1,-1}(a_t)$
    \item any choice of $\rho_{2}(x,y)$
    \item other values of $\rho_{2}$ fixed by
    \begin{itemize}
    \item[(i)] $\rho_{2}(x,z)=[\rho_{1,-1}(a_z),\rho_{1,0}(x)]$, $\rho_{2}(x,t)=[\rho_{1,-1}(a_t),\rho_{1,0}(x)]$
    \item[(ii)] $\rho_{2}(y,z)=\rho_{1,-1}(a_z)$
    $\rho_{2}(y,t)=\rho_{1,-1}(a_t)$,
    \end{itemize}
\end{itemize}
since it is enough to fix $\rho_{1,0}(x)$, $\rho_{1,-1}(a_z)$ and $\rho_{1,-1}(a_t)$ to determine $\rho_{2}(x,z)$, $\rho_{2}(x,t)$, $\rho_{2}(y,z)$ and $\rho_{2}(y,t)$.

\begin{rema}
As there is no condition on $\rho_{2}(x,y)$, the closedness conditions are not always satisfied. Thus certain actions cannot be lifted to a fundamental comomentum map.
\end{rema}

\noindent This covers situations $T135$, $T145$, $T235$, $T245$.

\bigskip

\bigskip

\subsection*{Example 2} Lie 2-algebras  $\mathfrak{g}$ satisfying $l_1\circ l_3 = 0$ ($\mathfrak{g}_{0}$ is a Lie algebra)\\

Here we cover all the situations $S$ imposed on the structures, (except $S0$, see Remark \ref{touslesS} ), and all the situations $T$ for the 2-actions (see Remark  \ref{touslesT}).
We provide an elementary construction of Lie 2-algebras (called basic Lie 2-algebras, Definition \ref{basic}) with $l_1\neq 0,$ $l_2\neq 0$ and $l_3\neq 0$.
The first example (Example 2a) uses an invariant bi-vector. The second and third examples (Examples 2b and 2c) give two 2-actions of basic Lie 2-algebras, on $(\mathbb{R}^6, \omega=dq_1\wedge dq_5\wedge dq_6-dq_2\wedge dq_4\wedge dq_6+ dq_3\wedge dq_4\wedge dq_5)$, the first one is not Hamiltonian, see Remark \ref{notHam} whereas the second is.\\

\textbf{Elementary construction }

\noindent Let us consider two vector spaces $\mathfrak{g}_{-1}$ and $\mathfrak{g}_0$, with a collection of three linear maps $(l_1,l_2,l_3)$, where
   $l_1:\mathfrak{g}_{-1}\mapsto \mathfrak{g}_0$, $l_2$ can be decomposed in 
   $l_2^p:\bigwedge^2 \mathfrak{g}_0\mapsto \mathfrak{g}_0$ and
   $l_2^m :\mathfrak{g}_{-1}\times \mathfrak{g}_0\mapsto \mathfrak{g}_0$, and finally
   $l_3 : \mathfrak{g}_0\times \mathfrak{g}_0\times \mathfrak{g}_0\mapsto \mathfrak{g}_{-1}$.
   Let us also assume that we have the following decompositions:
   $\mathfrak{g}_0=\mathfrak{g}_{0,0}\oplus \mathrm{im}(l_1)\oplus \mathrm{im}(l_2^p)  $ and $\mathfrak{g}_{-1}=\mathfrak{g}_{-1,0} \oplus \mathrm{im}(l_3)\oplus \mathrm{im}(l_2^m)  $.
   For convenience, we denote by $\mathfrak{g}_{0,2}=\mathrm{im}(l_2^p)$, $\mathfrak{g}_{0,1}=\mathrm{im}(l_1)$ and by $\mathfrak{g}_{-1,2}=\mathrm{im}(l_2^m)$, $\mathfrak{g}_{-1,3}=\mathrm{im}(l_3)$ thus $$\mathfrak{g}_0=\mathfrak{g}_{0,0}  \oplus \mathfrak{g}_{0,1}
   \oplus \mathfrak{g}_{0,2}  \text{ and } \mathfrak{g}_{-1}=\mathfrak{g}_{-1,0}\oplus \mathfrak{g}_{-1,3}\oplus \mathfrak{g}_{-1,2}.$$ 
   Then the condition that $(\mathfrak{g}=\mathfrak{g}_{-1}\oplus \mathfrak{g}_0, l_1,l_2,l_3)$ is a Lie 2-algebra is assured by  
   the following relations, for all $a, b$ in $\mathfrak{g}_{-1}$ and for all $x,y,z,t$ in $\mathfrak{g}_0$
   \begin{itemize}
 \item [$(R_2)$]  $l_1\vert_{\mathfrak{g}_{-1,2}}=0$ and 
 $l_2^p\vert_{\mathfrak{g}_{0,1}}=0$
   \item [$(R_3)$]  $l_2^m(l_1(a),b)=l_2^m(a,l_1(b))$
   \item [$(R_4)$] $l_1\vert_{\mathfrak{g}_{-1,3}}=0$ and $\mathfrak{g}_0$ is a Lie algebra
   \item [$(R_5)$] $l_3\vert_{\mathfrak{g}_{0,1}}=0$ and $l_2(l_2(x,y),a)+l_2(l_2(y,a),x)+l_2(l_2(a,x),y)=0$
   \item [$(R_6)$] $l_3(l_2(x,y),z,t)-l_3(l_2(x,z),y,t)+ l_3(l_2(x,t),y,z)+l_3(l_2(y,z),x,t)-l_3(l_2(y,t),x,z)+ l_3(l_2(z,t),x,y)=0$ and 
  $l_2(l_3(x,y,z),t)-l_2(l_3(x,y,t),z)+l_2(l_3(x,z,t),y)-l_2(l_3(y,z,t),x)=0$,
   \end{itemize}
where  the notation $l_1\vert_{\mathfrak{g}_{-1,2}}=0$ means that  $l_1(a)=0$, $\forall a\in \mathfrak{g}_{-1,2}$ and the notation  $l_2^p\vert_{\mathfrak{g}_{0,1}}=0$ means that  $l_2^p(x,l_1(a))=0$, $\forall a\in \mathfrak{g}_{-1}$ since $\mathfrak{g}_{0,1}=\mathrm{im}(l_1)$.

   \begin{prop}
  A sum of two vector spaces with linear maps $(\mathfrak{g}= \mathfrak{g}_{-1}\oplus\mathfrak{g}_{0}, l_1,l_2=(l_2^p,l_2^m),l_3)$ is a
  Lie 2-algebra if the following conditions are fulfilled :
  \begin{itemize}
     \item $\mathfrak{g}_{-1}=\mathfrak{g}_{-1,0}\oplus \mathfrak{g}_{-1,3}\oplus\mathfrak{g}_{-1,2}$ is a vector space
      \item $\mathfrak{g}_0=\mathfrak{g}_{0,0}\oplus\mathfrak{g}_{0,1}\oplus\mathfrak{g}_{0,2}$ is a Lie algebra
  \item the four following maps are linear:
  \begin{itemize}
      \item  $l_1:\mathfrak{g}_{-1,0}\mapsto \mathfrak{g}_{0,1}$
      \item $l_2^p:\bigwedge^2 (\mathfrak{g}_{0,0} \oplus \mathfrak{g}_{0,2} )\mapsto \mathfrak{g}_{0,2}$
      \item  $l_2^m :\mathfrak{g}_{-1,0}\times \mathfrak{g}_{0,0}\mapsto \mathfrak{g}_{-1,2}$ 
      \item  $l_3 : \bigwedge^3 \mathfrak{g}_{0,0}\mapsto \mathfrak{g}_{-1,3}$
      \item the remaining partial maps are zero.
  \end{itemize}
   \end{itemize}
\end{prop}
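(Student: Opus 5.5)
The plan is to verify the relations $(R_1)$--$(R_6)$ of the explicit characterization of Lie 2-algebras (the proposition at the start of Section 2) one by one. The only tool is the support/target pattern of the maps imposed in the statement. Throughout, each partial map is extended by zero on the complementary summands, and $l_2^m$ is extended antisymmetrically by $l_2^m(x,a)=-l_2^m(a,x)$.

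First I will record the bookkeeping facts the argument rests on:
\begin{itemize}
\item $l_1$ is nonzero only on $\mathfrak{g}_{-1,0}$ and takes values in $\mathfrak{g}_{0,1}$; in particular $l_1$ vanishes on $\mathfrak{g}_{-1,2}$ and on $\mathfrak{g}_{-1,3}$.
\item $l_2^p$ vanishes as soon as one argument lies in $\mathfrak{g}_{0,1}$, and it takes values in $\mathfrak{g}_{0,2}$; it is the Lie bracket of $\mathfrak{g}_0$.
\item $l_2^m$ is nonzero only on $\mathfrak{g}_{-1,0}\times\mathfrak{g}_{0,0}$ and takes values in $\mathfrak{g}_{-1,2}$.
\item $l_3$ is nonzero only on $\bigwedge^3\mathfrak{g}_{0,0}$ and takes values in $\mathfrak{g}_{-1,3}$.
\end{itemize}
I will also check the implicit point that the Lie bracket of $\mathfrak{g}_0$ is exactly $l_2^p$ with these support properties, i.e.\ that $\mathfrak{g}_{0,1}$ is central. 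This is part of the hypothesis that ``the remaining partial maps are zero'', and the Jacobi identity for $l_2^p$ is then given.

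Next I check each relation by showing that every term vanishes, except in $(R_4)$, where the Jacobi identity is used.
\begin{itemize}
\item $(R_1)$ is vacuous in the sense that $l_1$ has no degree-$0$ source.
\item For $(R_2)$, the left side $l_1(l_2(x,a))$ vanishes because $l_2^m$ lands in $\mathfrak{g}_{-1,2}\subset\ker l_1$. The right side $l_2(x,l_1(a))$ vanishes because $l_1(a)\in\mathfrak{g}_{0,1}$ and $l_2^p$ is zero there.
\item For $(R_3)$, both $l_2(l_1(a),b)$ and $l_2(a,l_1(b))$ vanish, since $l_2^m$ has no nonzero component involving $\mathfrak{g}_{0,1}$.
\item For $(R_4)$, $l_1\circ l_3=0$ because $l_3$ lands in $\mathfrak{g}_{-1,3}\subset\ker l_1$, and the right-hand side vanishes by the Jacobi identity of $\mathfrak{g}_0$.
\item For $(R_5)$, $l_3(l_1(a),x,y)=0$ since $l_1(a)\in\mathfrak{g}_{0,1}$. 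On the right, $l_2(l_2(x,y),a)=0$ because $l_2(x,y)\in\mathfrak{g}_{0,2}$, on which $l_2^m$ is zero. The two remaining terms have the shape $l_2(l_2^m(\cdot,\cdot),\cdot)$ with inner value in $\mathfrak{g}_{-1,2}$, on which $l_2^m$ vanishes.
\item For $(R_6)$, every term on the left is $l_3$ evaluated with an argument in $\mathfrak{g}_{0,2}$, hence zero. Every term on the right is $l_2^m$ evaluated on an element of $\mathfrak{g}_{-1,3}$, hence zero.
\end{itemize}

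There is no real analytic obstacle here. The step requiring the most care is the consistent handling of the zero-extensions together with the antisymmetric convention for $l_2^m$, so that ``$l_2^m$ vanishes when its degree $-1$ argument lies in $\mathfrak{g}_{-1,2}\oplus\mathfrak{g}_{-1,3}$ or its degree $0$ argument lies in $\mathfrak{g}_{0,1}\oplus\mathfrak{g}_{0,2}$'' is applied correctly in every slot of $(R_5)$ and $(R_6)$. I will also point out that the direct-sum decompositions are what make these restrictions well defined, since they guarantee that the images are contained in the summands where the other maps are declared zero. The sufficient conditions listed before the proposition (the relations $(R_2)$--$(R_6)$ for the elementary construction) then follow as a summary of this check.
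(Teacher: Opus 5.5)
Your proposal is correct and takes the paper's approach: the paper's proof just says to apply the relations $(R_2)$--$(R_6)$ (via the sufficient conditions listed in the elementary construction) to the support and target pattern of the maps. Your term-by-term check carries out exactly that verification explicitly, with Jacobi used only in $(R_4)$.
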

\begin{proof}
    
We use the above construction and apply the relations $(R_i)$ for $i=2,\ldots 6$.
\end{proof}
\begin{defn}\label{basic}
    We will call such a Lie 2-algebra $\mathfrak{g}$ a $\emph{basic}$ Lie 2-algebra.
\end{defn}
   \begin{rema}
       If $\mathfrak{g}$ is basic and either $\mathfrak{g}_{0,0}= 0$ or $\mathfrak{g}_{-1,2}= 0$, then $l_2^m = 0$.
   \end{rema}

   \bigskip

   \textbf{2-action of a basic Lie 2-algebra}
    
\noindent Let us consider a basic Lie 2-algebra. Then the following conditions on $\rho:\mathfrak{g}_0\oplus \mathfrak{g}_{-1}\longrightarrow \mathfrak{X}^2(M)\oplus\mathfrak{X}^1(M)$ assure that it is a Lie 2-action of   $\mathfrak{g}_0\oplus \mathfrak{g}_{-1}$ on the manifold $M$: 
   \begin{enumerate}
        \item $\rho_{1,0}$ is an action of $\mathfrak{g}_0$ and  $\rho_{1,0}\vert_{\mathfrak{g}_{0,1}}= 0$
       \item $\rho_{1,-1}:\mathfrak{g}_{-1,0}\longrightarrow \mathfrak{X}^2(M) $ and $\rho_{1,-1}\vert_{{\mathfrak{g}_{-1,2}}}=0$,
       $\rho_{1,-1}\vert_{{\mathfrak{g}_{-1,3}}}=0$
       \item $\rho_2: \bigwedge^2{\mathfrak{g}_{0,0}} \longrightarrow \mathfrak{X}^2(M) $ and $\rho_{2}\vert_{{\mathfrak{g}_{0,1}}}=0$, $\rho_{2}\vert_{{\mathfrak{g}_{0,2}}}=0$
       \item with 
        \begin{itemize}
           \item $[\rho_{1,0}(x),\rho_2(y,z)]_S +c.p.=0$ for $x,y,z\in \mathfrak{g}_{0,0}$
           \item $[\rho_{1,0}(x),\rho_2(y,z)]_S  =0$ for $x\in \mathfrak{g}_{0,2}$ and $y,z\in \mathfrak{g}_{0,0}$ 
           \item $[\rho_{1,0}(x),\rho_{1,-1}(a))]_S=0$.
       \end{itemize}
  \end{enumerate}

\medskip
  
\subsubsection*{\underline{Example 2a}:} 

We construct a Lie 2-action of a basic Lie 2-algebra on a manifold $M$ using an invariant bivector field.
    \begin{prop} \label{basicaction}
    Let $\mathfrak{g}_0$ be a basic Lie 2-algebra and assume that there exists
 \begin{itemize}
     \item an action $\rho_{1,0}$ of $\mathfrak{g}_0$ on $M$ such that  $\rho_{1,0}\vert_{\mathfrak{g}_{0,1}}= 0$
     \item a 2-tensor $\pi\in\mathfrak{X}^2(M)$ which is invariant by the action $\rho_{1,0}$.
 \end{itemize}
If $im(\rho_{1,-1}\vert_{{\mathfrak{g}_{-1,0}}})\subset \langle\pi\rangle_\mathbb{R}$, $\rho_{1,-1}\vert_{{\mathfrak{g}_{-1,2}}}=0$,
       $\rho_{1,-1}\vert_{{\mathfrak{g}_{-1,3}}}=0$ and 
  $im(\rho_{2}\vert_{{\mathfrak{g}_{0,0}}})\subset \langle\pi\rangle_\mathbb{R}$, $\rho_{2}\vert_{{\mathfrak{g}_{0,1}}}=0$, $\rho_{2}\vert_{{\mathfrak{g}_{0,2}}}=0$, then $\mathfrak{g}_0\oplus \mathfrak{g}_{-1}$ acts as a Lie 2-algebra on $M$. 
\end{prop}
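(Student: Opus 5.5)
The plan is to reduce the statement to the sufficient conditions listed just before it under ``2-action of a basic Lie 2-algebra'', namely items (1)--(4), and then check that the invariance of $\pi$ makes all the bracket conditions in item (4) vanish. Items (1), (2) and (3) are hypotheses of the proposition. We have $\rho_{1,0}$ an action of $\mathfrak{g}_0$ killing $\mathfrak{g}_{0,1}$. The map $\rho_{1,-1}$ is supported on $\mathfrak{g}_{-1,0}$, with values in $\langle\pi\rangle_\mathbb{R}$. The map $\rho_2$ is supported on $\bigwedge^2\mathfrak{g}_{0,0}$ (it vanishes as soon as an argument lies in $\mathfrak{g}_{0,1}$ or $\mathfrak{g}_{0,2}$), again with values in $\langle\pi\rangle_\mathbb{R}$. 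So the remaining work is item (4).

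The key observation is that invariance of $\pi$ under $\rho_{1,0}$ means $[\rho_{1,0}(x),\pi]_S=\mathcal{L}_{\rho_{1,0}(x)}\pi=0$ for all $x\in\mathfrak{g}_0$. Since $\rho_2(y,z)=c(y,z)\pi$ and $\rho_{1,-1}(a)=\mu(a)\pi$ for real scalars $c(y,z),\mu(a)$, $\mathbb{R}$-bilinearity of the Schouten bracket gives two identities:
\[
[\rho_{1,0}(x),\rho_2(y,z)]_S=c(y,z)[\rho_{1,0}(x),\pi]_S=0,\qquad [\rho_{1,0}(x),\rho_{1,-1}(a)]_S=\mu(a)[\rho_{1,0}(x),\pi]_S=0.
\]
These hold for all $x\in\mathfrak{g}_0$, $y,z\in\mathfrak{g}_{0,0}$ and $a\in\mathfrak{g}_{-1}$. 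Hence the cyclic sum condition for $x,y,z\in\mathfrak{g}_{0,0}$, the condition for $x\in\mathfrak{g}_{0,2}$, and the commutation condition with $\rho_{1,-1}$ all hold trivially.

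For safety I would also verify $(A_1)$--$(A_4)$ directly rather than only citing the list of sufficient conditions. $(A_1)$ holds because $l_1$ lands in $\mathfrak{g}_{0,1}$ and $\rho_{1,0}$ vanishes there. $(A_2)$ holds because $\rho_{1,0}$ is a Lie algebra action. For $(A_3)$, the left side $\rho_{1,-1}(l_2^m(a,x))$ vanishes since $l_2^m$ takes values in $\mathfrak{g}_{-1,2}$. On the right, the term $\rho_2(l_1(a),x)$ vanishes since $l_1(a)\in\mathfrak{g}_{0,1}$, and the bracket term vanishes by the invariance computation above. For $(A_4)$, first look at the left side. The term $\rho_{1,-1}(l_3(x,y,z))$ vanishes because $l_3$ lands in $\mathfrak{g}_{-1,3}$. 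The terms $\rho_2(l_2^p(x,y),z)$ vanish because $l_2^p$ lands in $\mathfrak{g}_{0,2}$. The right side vanishes by invariance.

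There is no real obstacle here. The only point needing care is the bookkeeping of the basic structure, in particular that $\mathrm{im}(l_2^p)\subset\mathfrak{g}_{0,2}$ and $\mathrm{im}(l_2^m)\subset\mathfrak{g}_{-1,2}$, which is precisely where $\rho_2$ and $\rho_{1,-1}$ are required to vanish.
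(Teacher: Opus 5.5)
Your proposal is correct and follows the paper's route. The paper's proof consists only of the remark that such a $\rho$ satisfies the sufficient conditions (1)--(4) listed just before the proposition, which is your first step: invariance $[\rho_{1,0}(x),\pi]_S=0$ together with $\mathbb{R}$-bilinearity of the Schouten bracket kills every bracket term, and using constant rather than $C^\infty(M)$ multiples of $\pi$ is what makes this work. Your additional direct check of $(A_1)$--$(A_4)$ is a worthwhile supplement, since the paper never justifies that conditions (1)--(4) suffice.
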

\begin{proof}

Such a map $\rho$ satisfies the above relations $(1)-(4)$.
\end{proof}
\begin{rema}\label{touslesT}
    The preceding proposition covers all the situations T. In particular, if we take $\rho_{1,0}= 0$, there is no condition on $\rho_{1,-1}$ and $\rho_{2}$ (Situation $T2$).
\end{rema}

 \begin{rema}\label{touslesS}
  
 Via the preceding construction one can also obtain all the situations $S$ (except $S0$). They all are realized as special cases of the Proposition \ref{basicaction} and cover all the situations $T$. For example: 
 
\begin{enumerate}
    \item[S236] means that $l_1=l_3=0$. In this case, $\mathfrak{g}_0=\mathfrak{g}_{0,0}\oplus\mathfrak{g}_{0,2}$, $\mathfrak{g}_{-1}=\mathfrak{g}_{-1,0}\oplus\mathfrak{g}_{-1,2}$ and     we obtain an action of the Lie 2-algebra $\mathfrak{g}$ on a manifold $M$ when $\rho_{1,0}$ is an action of $\mathfrak{g}_0$ on $M$,
$\rho_{1,-1}:\mathfrak{g}_{-1,0}\longrightarrow \mathfrak{X}^2(M) $ and $\rho_2: \bigwedge^2{\mathfrak{g}_{0,0}} \longrightarrow \mathfrak{X}^2(M) $, as above. 
\item[S146] means that $l_1\neq 0, l_2=0, l_3=0$. Thus we obtain that $\mathfrak{g}_0=\mathfrak{g}_{0,0}\oplus\mathfrak{g}_{0,1}$, $\mathfrak{g}_{-1}=\mathfrak{g}_{-1,0}$. 
\item[S145] means that $l_1\neq 0, l_2=0, l_3\neq0$. Thus we obtain that $\mathfrak{g}_0=\mathfrak{g}_{0,0}\oplus\mathfrak{g}_{0,1}$, $\mathfrak{g}_{-1}=\mathfrak{g}_{-1,0}\oplus\mathfrak{g}_{-1,3}$. 
\item[S136] means that $l_1\neq 0, l_2\neq0, l_3=0$. Thus we obtain that $\mathfrak{g}_{-1,3}= 0$. 

\end{enumerate}

\end{rema}

 \medskip

\subsubsection*{\underline{Example 2b}:} 
We give an explicit construction of a basic 2-action which is not Hamiltonian on $(\mathbb{R}^6, \omega=dq_1\wedge dq_5\wedge dq_6-dq_2\wedge dq_4\wedge dq_6+ dq_3\wedge dq_4\wedge dq_5)$ for the basic Lie 2-algebra (of the style "ax+b") given by:

\begin{enumerate}
    \item the Lie algebra $\mathfrak{g}_0$ is a direct sum of Lie algebras "$ax+b$", more precisely $\mathfrak{g}_0=\mathfrak{g}_{0,0}\oplus\mathfrak{g}_{0,2}\oplus\mathfrak{g}_{0,1}=\langle y_1,\ldots, y_{3l}\rangle_{\mathbb{R}}\oplus \langle x_1,\ldots, x_{3l}\rangle_{\mathbb{R}}\oplus \langle y_{3l+1},\ldots, y_{3l+k}\rangle_{\mathbb{R}}$ with $[x_i,y_i]_{\mathfrak{g}_0}=x_i$, $\forall i=1\ldots3l$ and all other brackets equal to zero
    \item the vector space $\mathfrak{g}_{-1}$ is given by $\mathfrak{g}_{-1}=\mathfrak{g}_{-1,3}\oplus\mathfrak{g}_{-1,0}\oplus \mathfrak{g}_{-1,2}=\langle a_1,\ldots, a_l\rangle_{\mathbb{R}}\oplus \langle b_1,\ldots, b_k\rangle_{\mathbb{R}}\oplus \langle c_{i,j} \, \vert\,  i=1,\ldots,k; j=1,\ldots,3l\rangle_{\mathbb{R}}$ 
    \item the linear maps are defined by
    \begin{itemize}
       \item $l_2^p=[.,.]_{\mathfrak{g}_0}$
        \item $l_3(y_{3i+1},y_{3i+2}, y_{3i+3})=a_{i+1} $ for $i=0,\ldots, l-1$
        \item $l_1(b_j)=y_{3l+j}$ for $j=1,\ldots, k$ 
        \item $l^2_m(b_j, y_i)=c_{i,j}$ for $i=1,\ldots,k; j=1,\ldots,3l$.
    \end{itemize}

\end{enumerate}
\noindent Then $\mathfrak{g}_0\oplus \mathfrak{g}_{-1}$ is a Lie 2-algebra which acts on an open set in $\mathbb{R}^{3l}$ by: 
\begin{itemize}
    \item $\rho_{1,0}(y_j)=\partial_j$, $\rho_{1,0}(x_j)=e^{-q_j}\partial_j$, for $j=1,\ldots,3l-2$
    \item $\rho_{1,0}(x_{3l-1})=\rho_{1,0}(x_{3l})=0=\rho_{1,0}(y_{3l+i})$ for $i=1\ldots, k$
    \item $\rho_2(y_i,y_j)=0$ for $i,j\neq 3l, 3l-1$
    \item $\rho_2(y_{3l-1},y_{3l})=\partial_{3l-1}\wedge\partial_{3l}$
    \item $\rho_{1,-1}(b_j)=\partial_{3l-1}\wedge\partial_{3l}$ for $j=1,\ldots, k$
    \item There is no condition on $\rho_{1,0}(y_{3l-1})$ and $\rho_{1,0}(y_{3l})$.
\end{itemize}

\begin{rema} \label{quasiisomorphism}
As $\rho_{1,-1}$ is only defined on $\mathfrak{g}_{-1,0}$ the quasi-isomorphism of Appendix A annihilates the action since  $\overline{{\rho}_{1,-1}}= 0$.
\end{rema}
\begin{rema}
      With this example we also obtain the situation $S245$ with $l_1=0, l_2=0,  l_3\neq 0$. In this case, $\mathfrak{g}_0=\mathfrak{g}_{0,0}$ and $\mathfrak{g}_{-1}=\mathfrak{g}_{-1,3}$ and $\rho_{1,-1}= 0$. We can take $\rho_{1,0}(y_j)=\partial_j$, for $j=1,\ldots,3l-2$ and $\rho_2(y_{3l-1},y_{3l})=\partial_{3l-1}\wedge\partial_{3l}$. This covers situation  $T4$.
  \end{rema}
\begin{rema}\label{notHam}
     If we consider the Lie 2-action on the 2-plectic manifold $(\mathbb{R}^6, \omega=dq_1\wedge dq_5\wedge dq_6-dq_2\wedge dq_4\wedge dq_6+ dq_3\wedge dq_4\wedge dq_5)$, then this Lie 2-action cannot be lifted to a fundamental comomentum map since for $j=1, ..., 4$ one has $d(\iota_{\rho_{1,0}(x_j)}\omega)\neq 0$, i.e.,  the closedness condition is not satisfied and thus there exists no ${\lambda}_{1,0}(x_j)$.
\end{rema}

\medskip

\subsubsection*{\underline{Example 2c}:} We give an explicit construction of a Hamiltonian 2-action  on $(\mathbb{R}^6, \omega=dq_1\wedge dq_5\wedge dq_6-dq_2\wedge dq_4\wedge dq_6+ dq_3\wedge dq_4\wedge dq_5)$ for the basic Lie 2-algebra (of style "ax+b") given by:
\begin{itemize}
     \item $k\geq 1$
    \item $\mathfrak{g}_0=\langle y_1,\ldots ,y_{6}\rangle_{\mathbb{R}}\oplus\langle x_1,\ldots, x_{6}\rangle_{\mathbb{R}}\oplus \langle y_{7},\ldots, y_{6+k}\rangle_{\mathbb{R}}$ with $[x_i,y_i]=x_i$, $\forall i=1,\ldots 6$
    \item $\mathfrak{g}_{-1}=\langle a_1, a_2\rangle\oplus \langle b_1,\ldots, b_k\rangle$
    \item $l_3(y_{1},y_{2}, y_{3})=a_{1} $,   $l_3(y_{4},y_{5}, y_{6})=a_{2} $ and  $l_1(b_j)=y_{6+j}$ for $j=2,\ldots, k$ and $l_2^m= 0$.
\end{itemize}
Then a Lie 2-action is given by 
\begin{itemize}
    \item $\rho_{1,0}(y_1)=\partial_5$, $\rho_{1,0}(x_1)=e^{-q_5}\partial_1$, $\rho_{1,0}(y_2)=\partial_4$, $\rho_{1,0}(x_2)=e^{-q_4}\partial_2$,  $\rho_{1,0}= 0$ otherwise
    \item  $\rho_2(y_5,y_6)=\partial_1\wedge \partial_2=\rho_{1,-1}(b_k)$ and $\rho_2= 0=\rho_{1,-1}$ otherwise.
\end{itemize}

\noindent The action can be lifted to a fundamental comomentum map defined by: 
\begin{itemize}
    \item ${\lambda}_{1,0}(y_1)=-q_1dq_6+q_3dq_4$, ${\lambda}_{1,0}(y_2)=q_2dq_6-q_3dq_5$ ${\lambda}_{1,0}(x_1)=-e^{-q_5}dq_6$, ${\lambda}_{1,0}(x_2)=-e^{-q_4}dq_6$, ${\lambda}_{1,0}= 0$ otherwise
    \item ${\lambda}_{1,-1}(b_k)=0$, ${\lambda}_{1,-1}=0$ otherwise
    \item ${\lambda}_2(y_1,y_2)=(q_3,0)$ because of $(A_2)$ for ${\lambda}$, ${\lambda}_2= 0$ otherwise. 
\end{itemize}

 \bigskip

 \bigskip

\subsection*{Example 3} 
Lie 2-algebras  $\mathfrak{g}$ satisfying $l_1= 0$ or $l_3=0$ ($\mathfrak{g}_{0}$ is a Lie algebra)

\medskip

\subsubsection*{\underline{Example 3a}:} 

Let $\mathfrak{g}$ satisfy $l_1= 0$ and $l_3\neq 0$.
In this example, which covers the situation $S235$, we construct several Lie 2-actions of the given Lie 2-algebra on $M=\mathbb{R}^3$. We explicit the comomentum map when it exists and give the obstruct of the existence otherwise.

Let us consider the Lie 2-algebra $(\mathfrak{g}= \mathfrak{g}_{-1}\bigoplus\mathfrak{g}_{0}, l_1, l_2, l_3)$ given by $\mathfrak{g_{-1}}=\langle a\rangle_{\mathbb{R}}$ and $\mathfrak{g}_0=\langle x_1,x_2,x_3\rangle_{\mathbb{R}}$ with the following brackets :
\begin{enumerate}
    \item $l_1(a)=0$
     \item $l_2(x_1,x_2)=-x_1$
     \item $l_2(x_1,x_3)=-x_1$
     \item $l_3(x_1,x_2, x_3)=a$.
\end{enumerate}

\medskip

\textbf{Construction of 2-actions}
    
A 2-action of $\mathfrak{g}$ on the manifold $\mathbb{R}^3$ must satisfy: 
 \begin{enumerate}
     \item[$(A_{21})$] $\rho_{1,0}(x_1)=-[\rho_{1,0}(x_1),\rho_{1,0}(x_2)]_S$, 
     \item[$(A_{22})$] $\rho_{1,0}(x_1)=-[\rho_{1,0}(x_1),\rho_{1,0}(x_3)]_S$
      \item[$(A_{23})$] $[\rho_{1,0}(x_2),\rho_{1,0}(x_3)]_S=0$
      \item[$(A_3)$] $[\rho_{1,-1}(a),\rho_{1,0}(x_i)]=0$ for $i=1,2,3$
      \item[$(A_4)$] $\rho_2(x_1,x_2)- \rho_2(x_1,x_3)+\rho_{1,-1}(a)=[\rho_{1,0}(x_1),\rho_2(x_2,x_3)]_S+ [\rho_{1,0}(x_2),\rho_2(x_1,x_3)]_S+$
      
      $[\rho_{1,0}(x_3),\rho_2(x_1,x_2)]_S$.
\end{enumerate}

\begin{rema}
    If $\rho_{1,0}(x_1)=0$ the conditions $(A_{21}), (A_{22})$ and, for $x_1$, $(A_{3})$  are satisfied $\forall \rho_{1,0}(x_2)$, $\forall \rho_{1,0}(x_3)$ and $\forall \rho_{1,-1}(a)$. 
\end{rema}

\medskip

\textbf{Examples of 2-actions}

\begin{multicols}{2}
\begin{enumerate}
    \item[] Sub example (1) with $\rho_{1,0}(x_1)=0$:
\begin{itemize}
    \item $\rho_{1,0}(x_2)=-\rho_{1,0}(x_3)=-\partial_1$
     \item $\rho_2(x_1,x_2)=-\rho_{1,-1}(a)=\partial_1\wedge\partial_3$
      \item $\rho_2(x_1,x_3)=\rho_2(x_2,x_3)=0$
      \item[] 
\end{itemize}
\item[] Sub example (2) with $\rho_{1,0}(x_1)\neq 0$:
\begin{itemize}
    \item $\rho_{1,0}(x_1)=-e^{-q_1}\partial_1$
    \item $\rho_{1,0}(x_2)=-\rho_{1,0}(x_3)=-\partial_1$
     \item $\rho_2(x_1,x_2)=-\rho_{1,-1}(a)=0$
      \item $\rho_2(x_1,x_3)=-e^{-q_1}\partial_1\wedge \partial_2$.
\end{itemize}
\end{enumerate}
\end{multicols}

\begin{rema}
This example covers the situations $T135$, $T146$ and $T235$ since we can also take,

 Sub example (3): $\rho_{1,0}= 0$ and $\rho_2(x_1,x_2)=-\rho_{1,-1}(a)=
 \partial_1\wedge\partial_3$ 

or
 Sub example(4): $\rho_{1,0}(x_1)=0$ with $\rho_{1,0}(x_2)=-\rho_{1,0}(x_3)=-\partial_1$ and $\rho_{1,-1}= 0$ and $\rho_2= 0$. 
\end{rema}

\medskip

\textbf{Action on a 2-plectic manifold}

A Lie 2-action on the 2-plectic manifold
$(\mathbb{R}^3, dq^1\wedge dq^2 \wedge dq^3$) is Hamiltonian if the following closedness conditions are satisfied :
\begin{itemize}
    \item $\iota_{\rho_{1,0}(x_i)}\omega$ are closed $\forall i$
    \item $\iota_{\rho_{1,-1}(a)}\omega$ is closed 
    \item $\iota_{\rho_2(x_i,x_j)}\omega$ are  closed $\forall i,j$.
\end{itemize}

\medskip

\textbf{Explicit 2-comomentum maps}

The above sub example (1) with $\rho_{1,0}(x_1)=0$ satisfies the closedness conditions. The ensuing formulas for ${\lambda}$ are then:
\begin{itemize}
   \item ${\lambda}_{1,0}(x_1)=0$
     \item ${\lambda}_{1,0}(x_2)=-{\lambda}_{1,0}(x_3)=-q_2dq_3$
     \item ${\lambda}_2(x_1,x_2)=-{\lambda}_{1,-1}(a)=(0,(q_2,\partial_1\wedge\partial_3))$
      \item ${\lambda}_2(x_3,x_1)={\lambda}_2(x_3,x_2)=0$.
\end{itemize}

\begin{rema}
    Since for sub example (2), $\rho_{1,0}(x_1)=e^{-q_1}\partial_1$, the form $\iota_{\rho_{1,0}(x_1)}\omega$ is not closed and thus there is no fundamental comomentum map.
\end{rema}

\medskip

\subsubsection*{\underline{Example 3b}:} 

We consider, here a crossed module ($l_1\neq 0, l_3=0$), which refers to Section \ref{strict} and covers the situation $S136$. We provide a Lie 2-action of the given Lie 2-algebra on $M=\mathbb{R}^3$ and give an explicit Lie 2-action with $\rho_2=0$.

Let us consider a Lie algebra $\mathfrak{g}_0=\langle y,x,z\rangle_{\mathbb{R}}$, and a vector space $\mathfrak{g}_{-1}=\langle b,c\rangle_{\mathbb{R}}$ with the relations $[x,y]=x$, $l_1(b)=z$, $l_2^m(b,y)=c$. Then the crossed module $(\mathfrak{g}= \mathfrak{g}_{-1}\oplus\mathfrak{g}_{0}, l_1, l_2, l_3=0)$ is a Lie 2-algebra. 
A Lie 2-action on  $M=\mathbb{R}^3$ is given by:
\begin{itemize}
    \item $\rho_{1,0}(x)=\rho_{1,0}(z)=0$ and no condition on $\rho_{1,0}(y)$
    \item $\rho_{1,-1}(c)=0$ and no condition on $\rho_{1,-1}(b)$
    \item $\rho_2(x,z)=0$
    \item $\rho_2(y,z)=[\rho_{1,-1}(b),\rho_{1,0}(y)]$
    \item there is no condition on $\rho_2(x,y)$.
\end{itemize}   
Thus, we obtain all the situations $T$ (except $T136$). 

Notably, for the situation $T136$, i.e, $\rho_2=0$, we obtain another explicit 2-action given by  $\rho_{1,0}(x)=\rho_{1,0}(z)=0$, $\rho_{1,0}(y)=\partial_1$, $\rho_{1,-1}(c)=\partial_2\wedge\partial_3$ and $\rho_{1,-1}(b)=-q_1\partial_2\wedge\partial_3$.

Observe that the action might not be Hamiltonian. 

\bigskip

\bigskip


\subsection*{Example 4} Lie 2-algebras  $\mathfrak{g}$ satisfying $l_1=0=l_3$ and $l_2\neq 0$

\medskip

We are in the situation $S236$. In example 4a, we construct Lie 2-actions of nilpotent Lie algebras on $\mathbb{R}^3$ and specify the comomentum map when it exists and give the obstruction to the existence otherwise. 
In the second example (Example 4b), we give possible Lie 2-actions with $\rho_{2}=0$.

\medskip

\subsubsection*{\underline{Example 4a}:} 
 
 We first consider the case  $\mathfrak{g}_{-1}=\{0\}$. Thus the Lie 2-algebra reduces to $\mathfrak{g}=\{0\}\oplus\mathfrak{g}_{0}$ where $\mathfrak{g}_{0}$ is a Lie algebra, and $\rho_{1,-1}=0$. This covers the situation $T4$.

 The condition on $\rho: \{0\}\oplus\mathfrak{g}_{0} \longrightarrow \mathfrak{X}^2(M)\oplus\mathfrak{X}^1(M)$ to be a Lie 2-morphism is as follows:
\begin{itemize}
    \item $\rho_{1,0}$ is a Lie morphism 
\item $\rho_2([x,y]_{\mathfrak{g}_0},z)+c.p.=[\rho_{1,0}(x),\rho_2(y,z)]_S +c.p.$
\end{itemize}

\begin{rema}
If $\rho_{1,0}\neq 0$, $\rho_2=\rho_{1,0}\wedge \rho_{1,0}$ does not define a Lie 2-action unless the image of $\rho_{1,0}$ is at most one-dimensional.
    \end{rema}

    \medskip

\textbf{Examples of 2-actions of the Heisenberg Lie algebra }
    
    Let $\mathfrak{g}_{0}=\langle x,y,z\rangle_{\mathbb{R}}$ be the Heisenberg Lie algebra with $[x,y]=z$. A 2-action of $\{0\}\oplus\mathfrak{g}_{0}$ on $(\mathbb{R}^3, dq_1\wedge dq_2\wedge dq_3)$ is given by $\rho_{1,0}(x)=\partial_1$, $\rho_{1,0}(y)=\partial_2$, $\rho_{1,0}(z)=0$ and $\rho_2(y,z)=\rho_2(x,z)=\partial_2\wedge\partial_3$, $\rho_2(x,y)=0$.
    
Then the closedness conditions $d\iota_{\rho_{1,0}(.)}\omega=0$ and  $d\iota_{\rho_{2}(.,.)}\omega=0$  are satisfied and there exists a comomentum map given by 
    \begin{itemize}
        \item ${\lambda}_{1,0}(x)=q_2d_3$, ${\lambda}_{1,0}(y)=q_1d_3$, ${\lambda}_{1,0}(z)=0$
        \item ${\lambda}_2(y,z)={\lambda}_2(x,z)=(0,(q_1,\rho_2(y,z)))$, ${\lambda}_2(x,y)=0$.
    \end{itemize}    This is situation $T145$.

Note that the action given by  $\rho_2(y,z)=q_1\partial_2\wedge\partial_3$, $\rho_2(x,z)=q_2\partial_2\wedge\partial_3$ and $\rho_2(x,y)=0$ can not be lifted to a fundamental comomentum map, since $\iota_{\rho_2(x,z)}\omega=q_2\ dq_1$ is not closed.
   
\begin{rema} We obtain a Lie 2-action of the Lie 2-algebra $\{0\}\oplus\mathfrak{g}_{0}$ if the following three conditions are satisfied: 
\begin{itemize}
\item $\rho_{1,0}$ is a Lie morphism 
    \item $\rho_2([x,y]_{\mathfrak{g}_0},.)= 0,\quad \forall x,y$
    \item $[\mathrm{im}(\rho_{1,0}),\mathrm{im}(\rho_2)]_S=0$.
\end{itemize}  
 \end{rema}
 
    \noindent As an example of this situation, consider the case $\rho_{1,0}= 0$. Then any $\rho_2: \bigwedge^2{\mathfrak{g}_{0}} \longrightarrow \mathfrak{X}^2(M)$ which satisfies $\rho_2([x,y]_{\mathfrak{g}_0},.)=0$ defines a Lie 2-action (compare the two nilpotent examples below). This covers case $T246$.
\medskip

\textbf{Other examples of actions of nilpotent Lie algebras}
        \begin{enumerate}
             \item Let the nilpotent Lie algebra $\mathfrak{g}_0=\langle x_1,x_2,x_3,x_4\rangle_{\mathbb{R}}$ be given by $[x_1,x_2]=x_3$ and $[x_1,x_3]=x_4$, then any $\rho_2: \bigwedge^2{\mathfrak{g}_{0}} \longrightarrow \mathfrak{X}^2(M)$ which satisfies $\rho_2(x_2,x_4)=0$ and $\rho_2(x_3,x_4)=0$ defines a Lie 2-action of $\{0\}\oplus\mathfrak{g}_{0}$ on $M$.
        \item Let the 8-dimensional nilpotent Lie algebra $\mathfrak{g}_0=\langle x_i, i=1\ldots 8\rangle_{\mathbb{R}}$ be given by 
        $[x_1,x_i]=x_{i+1}$, for $2\leq i\leq 7$, $[x_2,x_7]=x_8$, $[x_3,x_6]=-x_7$, and $[x_4,x_5]=x_8$. Then, the condition $\rho_2([x,y]_{\mathfrak{g}_0},.)=0$ reduces to $\rho_2([.,[.,.]_{\mathfrak{g}_0}]_{\mathfrak{g}_0},.)= 0$, since the conditions for $\rho$ to be a Lie 2-action always imply the elements in $[\mathfrak{g}_0,\mathfrak{g}_0]_{\mathfrak{g}_0}$, more precisely those conditions are:
        \begin{multicols}{2}
        \begin{itemize}
            \item[] $ \rho_2(x_2,x_4)=0$, $\rho_2(x_8,x_1)=\rho_2(x_6,x_4)$
             \item[] $\rho_2(x_4,x_6)=\rho_2(x_7,x_1)=\rho_2(x_7,x_3)$
            \item[] $\rho_2(x_3,x_i)=\rho_2(x_{i+1},x_2)$, $4\leq i\leq 6$
            \item[] $\rho_2(x_3,x_7)+\rho_2(x_8,x_1)=\rho_2(x_8,x_2)$.
        \end{itemize}
        \end{multicols}
        \end{enumerate}

        \medskip

\subsubsection*{\underline{Example 4b}:} 

Now we consider the case that $\mathfrak{g}_{-1}$ is possibly non zero but $\rho_2=0$ (situation $T6$).

Let ($\mathfrak{g}_0, [.,.]_{\mathfrak{g}_0}$) be a Lie algebra acting via $\rho_{1,0}$ on a manifold $M$, and consider the Lie 2-algebra $(\bigwedge^2\mathfrak{g}_0\oplus\mathfrak{g}_0, l_1=0, l_2^m(x\wedge y, z)=[x,z]_{\mathfrak{g}_0}\wedge y+ x\wedge [y,z]_{\mathfrak{g}_0}, l_3=0)$. Then $\rho_{1,-1}=\rho_{1,0}\wedge \rho_{1,0}$, together with the given $\rho_{1,0}$ and $\rho_2=0$ defines a Lie 2-action on $M$. 

   \medskip
   
   Let two Lie algebras $\mathfrak{g_0}$ and $\mathfrak{h}$ act on a manifold $M$ by $\nu$ and $\tau$ respectively. Consider the Lie 2-algebra $(\bigwedge^2\mathfrak{h}\oplus\mathfrak{g}_0, l_1=0, l_2^m=0, l_2^p=[.,.]_{\mathfrak{g}_0},l_3=0)$ and define $\rho:\bigwedge^2\mathfrak{h}\oplus\mathfrak{g}_0\longrightarrow \mathfrak{X}^2(M)\oplus\mathfrak{X}^1(M)   $ by $\rho_{1,0}=\nu$, $\rho_2=0$ and $\rho_{1,-1}=\tau\wedge\tau$. Then $\rho$ is a Lie 2-action, if $[\tau(a)\wedge\tau(b),\nu(x)]_S=0$, $\forall a,b\in \mathfrak{h}$ and $\forall x\in \mathfrak{g}_0$.


\subsection*{Example 5} Lie 2-algebra  $\mathfrak{g}$ satisfying $l_1=l_2=l_3=0$\\

This degenerated situation covers $S246$, but we can still construct a comomentum map.

Let us consider the Lie 2-algebra $\mathfrak{g}=\mathfrak{g}_{-1}\oplus\{0\}$ where $n\geq 1$ and $\mathfrak{g}_{-1}=\langle a_1,\ldots, a_n\rangle_{\mathbb{R}}$, with $l_1=l_2=l_3=0$. In this case $\mathfrak{g}_0=\{0\}$ thus $\rho_{1,0}=0$ and thus $\rho_2=0$.\\
Then $\mathfrak{g}$ acts on every manifold  $M$ via $\rho=\rho_{1,-1}:\mathfrak{g}_{-1} \longrightarrow \mathfrak{X}^2(M)$
an arbitrary ${\mathbb{R}}$-linear map.

\noindent When $M$ is 2-plectic, let us write $v_k=\rho_{1,-1}(a_k)$. If for all $k$ there exists a $f_k\in C^\infty_{Ham}(M)$ such that $df_k=-\iota_{v_k}\omega$, we can define a comomentum map ${\lambda}={\lambda}_{1,-1}:\mathfrak{g}_{-1} \longrightarrow C^{\infty}(M)\times Ham^0(M)$ upon setting ${\lambda}(a_k)=(0,(f_k,v_k))$.

This covers situation $T236$.

\vspace{1.5cm}

\subsection*{Acknowledgements.} The authors thank Leonid Ryvkin for helpful discussions and the project \'Emergence Exploratoire OAK (2025) of the Universit\'e de Lorraine for financial support.

\vfill\eject

\bibliographystyle{alpha} 
\bibliography{multisympl-bibdesk}

\end{document}